\DeclareMathSymbol{:}{\mathpunct}{operators}{"3A}
\newcommand{\cmark}{\ding{51}}%
\newcommand{\xmark}{\ding{55}}%
\newlist{inlinelist}{enumerate*}{1}
\setlist*[inlinelist,1]{%
  label=(\roman*),
}
\definecolor{Black}{HTML}{000000}
\definecolor{Gray}{HTML}{808080}
\definecolor{Magenta}{HTML}{FF00FF}
\definecolor{RubineRed}{HTML}{ED017D}
\definecolor{ForestGreen}{HTML}{028A0F}
\definecolor{OliveGreen}{HTML}{808000}
\definecolor{MidnightBlue}{HTML}{006795}
\definecolor{Plum}{HTML}{92268F}
\definecolor{listingBG}{HTML}{FFFFCB}%
\definecolor{listingFrame}{HTML}{BBBB98}%
\definecolor{listingLineno}{rgb}{0.5,0.5,1.0}%
\definecolor{LightGrey}{rgb}{0.975,0.975,0.975}
\lstdefinelanguage{txscript}{
	commentstyle=\color{Gray},
	morecomment=[l]{//},
	morecomment=[s]{/*}{*/},
	classoffset=0,
        escapeinside={(*}{*)},
	morekeywords={if,then,else,contract,skip,require,requireFinal,abort,return},
	keywordstyle=\color{Plum},
	classoffset=1,
	morekeywords={sender},
	keywordstyle=\addrColor,
	classoffset=2,        
	morekeywords={origin},
	keywordstyle=\pmvColor,
	classoffset=3,        
        morekeywords={constructor},
	keywordstyle=\color{MidnightBlue}\bfseries,
	basicstyle=\fontseries{m}\normalsize\ttfamily
	\lst@ifdisplaystyle\footnotesize\fi,
}
\newcommand{\change}[2]{{\marginpar{{\color{blue}\scriptsize\textbf{#1}}}}{{\color{blue}{#2}}}}%
\newcommand{\changeNoMargin}[1]{{\color{blue}{#1}}}%
\renewcommand{\change}[2]{#2}
\renewcommand{\changeNoMargin}[1]{#1}
\newcommand{\ifempty}[3]{%
  \ifthenelse{\isempty{#1}}{#2}{#3}%
}
\newcommand{\ifdots}[3]{%
  \ifthenelse{\equal{#1}{...}}{#2}{#3}%
}
\newcommand{\hidden}[1]{\iftoggle{hidden}{}{{\color{red}#1}}}
\newcommand{\keyterm}[1]{\textbf{\emph{#1}}}%
\newcommand*{\itemequation}[3][]{%
  \item
  \begingroup
    \refstepcounter{equation}%
    \ifx\\#1\\%
    \else  
      \label{#1}%
    \fi
    \sbox0{#2}%
    \sbox2{$\displaystyle#3\m@th$}%
    \sbox4{\@eqnnum}%
    \dimen@=.5\dimexpr\linewidth-\wd2\relax
    \ifcase
        \ifdim\wd0>\dimen@
          \z@
        \else
          \ifdim\wd4>\dimen@
            \z@
          \else 
            \@ne
          \fi 
        \fi
      \@latex@warning{Equation is too large}%
    \fi
    \noindent   
    \rlap{\copy0}%
    \rlap{\hbox to \linewidth{\hfill\copy2\hfill}}%
    \hbox to \linewidth{\hfill\copy4}%
    \hspace{0pt}
  \endgroup
  \ignorespaces 
}
\newcommand{\Real}[1]{\mathrm{Real}}
\newcommand{\codefont}{\fontsize{9}{9}\selectfont}
\newcommand{\code}[1]{{\tt\codefont{#1}}}
\newcommand{\contract}[2][]{{\tt\codefont{\cmvColor{#2_{#1}}}}}
\newcommand{\txcode}[1]{{\tt\codefont{\txColor{#1}}}}
\newcommand{\sender}{{\addrColor{\code{sender}}}}
\newcommand{\origin}{{\pmvColor{\code{origin}}}}
\newcommand{\callee}[1]{{\cmvColor{\it callee}({#1})}}
\newcommand{\deps}[1]{{\cmvColor{\it deps}({#1})}}
\def\etc{etc.\@\xspace}
\newcommand{\Eg}{E.g.\@\xspace}
\newcommand{\eg}{e.g.\@\xspace}
\newcommand{\ie}{i.e.\@\xspace}
\newcommand{\wrt}{w.r.t.\@\xspace}
\renewcommand{\epsilon}{\varepsilon}
\newcommand{\emptyseq}{\epsilon}
\newcommand{\fst}{\mathit{fst}}
\def\negcaptionspace{\vspace{-10pt}}
\newenvironment{proofof}[2][]{%
  \ifempty{#1}
  {\subsection*{Proof of~\Cref{#2}}}
  {\subsection*{Proof of~\Cref{#2} ({#1})}}
  \label{#2-proof}
  }%
  {}
\def\addrColor{\color{magenta}}
\newcommand{\addrFmt}[1]{{\addrColor{\tt #1}}}
\newcommand{\addr}[2][]{\addrFmt{#2}_{\addrColor{#1}}\xspace}
\newcommand{\addrA}[1][]{\addr[{#1}]{A}} 
\newcommand{\AddrA}[1][]{{\addrColor{\mathcal{A}_{#1}}}} 
\newcommand{\AddrB}[1][]{{\addrColor{\mathcal{B}_{#1}}}} 
\newcommand{\AddrU}[1][]{\addrFmt{\mathbb{A}}_{\addrColor{#1}}} 
\newcommand{\cmvOfcst}[1]{\dag{#1}}
\newcommand{\strip}[2]{{#1}\!\upharpoonright_{#2}}
\def\pmvColor{\color{red}}
\newcommand{\pmvFmt}[1]{{\pmvColor{\tt #1}}}
\newcommand{\pmv}[2][]{\pmvFmt{#2}_{\pmvColor{#1}}\xspace}
\newcommand{\pmvA}[1][]{\pmv[{#1}]{A}} 
\newcommand{\pmvB}[1][]{\pmv[{#1}]{B}}
\newcommand{\pmvC}[1][]{\pmv[{#1}]{C}}
\newcommand{\pmvM}{\pmv{M}} 
\newcommand{\PmvM}[1][]{\pmvFmt{\mathcal{M}}_{\pmvColor{#1}}} 
\newcommand{\Adv}{\PmvM} 
\newcommand{\PmvU}{\pmvFmt{\mathbb{A}}_{\pmvColor{u}}} 
\def\cmvColor{\color{blue}}
\newcommand{\cmvFmt}[1]{{\cmvColor{\code{#1}}}}
\newcommand{\cmv}[2][]{\cmvFmt{#2}_{\cmvColor{#1}}}
\newcommand{\cmvC}[1][]{\cmv[#1]{C}} 
\newcommand{\cmvCi}[1][]{\cmvFmt{C'_{\cmvColor{{\rm #1}}}}}
\newcommand{\cmvD}[1][]{\cmv[#1]{D}} 
\newcommand{\cmvDi}[1][]{\cmvFmt{D'_{\cmvColor{{\rm #1}}}}}
\newcommand{\CmvC}[1][]{\cmv[#1]{\mathcal{C}}} 
\newcommand{\CmvCi}[1][]{\cmvFmt{\mathcal{C}'_{\cmvColor{{\rm #1}}}}}
\newcommand{\CmvD}[1][]{\cmv[#1]{\mathcal{D}}} 
\newcommand{\CmvDi}[1][]{\cmvFmt{\mathcal{D}'_{#1}}} 
\newcommand{\CmvU}[1][]{\cmvFmt{\mathbb{A}}_{\cmvColor{c}}} 
\def\cstColor{\color{MidnightBlue}}
\newcommand{\cstFmt}[1]{{\cstColor{#1}}}
\newcommand{\cst}[2][]{\cstFmt{#2}_{\cstColor{#1}}}
\newcommand{\cstC}[1][]{\cst[#1]{\Gamma}} 
\newcommand{\cstCi}[1][]{\cstFmt{\Gamma'_{\cstColor{{\rm #1}}}}}
\newcommand{\cstCii}[1][]{\cstFmt{\Gamma''_{\cstColor{{\rm #1}}}}}
\newcommand{\cstD}[1][]{\cst[#1]{\Delta}} 
\newcommand{\cstDi}[1][]{\cstFmt{\Delta'_{\cstColor{{\rm #1}}}}}
\def\txColor{\color{MidnightBlue}}
\newcommand{\txFmt}[1]{{\txColor{\sf #1}}}
\newcommand{\tx}[2][]{\txFmt{#2}_{\txColor{#1}}}
\newcommand{\txT}[1][]{\tx[#1]{X}} 
\newcommand{\txY}[1][]{\tx[#1]{Y}} 
\newcommand{\txTi}[1][]{\txFmt{X'_{\txColor{{\it #1}}}}}
\newcommand{\TxTS}[1][]{\vec{\tx[#1]{\mathcal{X}}}} 
\newcommand{\TxTiS}[1][]{\vec{\tx[#1]{\mathcal{X'}}}} 
\newcommand{\TxYiS}[1][]{\vec{\tx[#1]{\mathcal{Y'}}}} 
\newcommand{\TxYS}[1][]{\vec{\tx[#1]{\mathcal{Y}}}} 
\newcommand{\txin}[3]{{#1}?\,{#2:#3}}
\newcommand{\txout}[3]{{#1}!\,{#2:#3}}
\DeclareMathAlphabet{\mathbfsf}{\encodingdefault}{\sfdefault}{bx}{n}
\newcommand{\dom}[1]{\operatorname{dom} {#1}}
\newcommand{\Nat}{\mathbb{N}}
\newcommand{\setenum}[1]{\{#1\}}
\newcommand{\setcomp}[2]{\left\{{#1} \,\middle|\, {#2}\right\}}
\newcommand{\wmvA}[1][]{w_{#1}}
\newcommand{\wmvAi}[1][]{w'_{#1}}
\newcommand{\wmvAii}[1][]{w''_{#1}}
\newcommand{\WmvA}[1][]{W_{#1}}
\newcommand{\WmvAi}[1][]{W'_{#1}}
\newcommand{\WmvAii}[1][]{W''_{#1}}
\newcommand{\walE}[1]{\code{\#}{#1}}
\newcommand{\waltok}[2]{#1:#2}
\newcommand{\walenum}[1]{[#1]}
\newcommand{\walpmv}[2]{{#1}\walenum{#2}}
\newcommand{\walu}[3]{\walpmv{#1}{\waltok{#2}{#3}}}
\newcommand{\waldistrarrow}[1]{\approx_{\$}}
\newcommand{\wealth}[2]{\$_{#1}({#2})}
\newcommand{\idxfun}[1]{\mathbf{1}_{#1}}
\newcommand{\price}[1]{\$\idxfun{#1}}
\newcommand{\gain}[3]{\mathit{\gamma}_{#1}\ifempty{#2}{}{({#2},{#3})}}
\newcommand{\mall}[2]{{\kappa_{#1}\ifempty{#2}{}{({#2})}}}
\newcommand{\mev}[3]{{\mathrm{MEV\!}_{#1}\ifempty{#1#2#3}{}{({#2\ifempty{#3}{}{,#3}})}}}
\newcommand{\lmev}[3]{{\mathrm{MEV\!}_{#1}\ifempty{#1#2#3}{}{({#2\ifempty{#3}{}{,#3}})}}}
\newcommand{\rlmev}[3]{{\mathrm{MEV\!}^{\,\infty}_{#1}\ifempty{#1#2#3}{}{({#2\ifempty{#3}{}{,#3}})}}}
\newcommand{\nonintrel}{\mathrel{\not\rightsquigarrow}}
\newcommand{\nonint}[2]{\ifempty{#1}{\nonintrel}{{#1} \nonintrel {#2}}}
\newcommand{\negnonint}[2]{\ifempty{#1}{\rightsquigarrow}{{#1} \rightsquigarrow {#2}}}
\newcommand{\richnonintrel}{\mathrel{\not\rightsquigarrow^{\infty}}}
\newcommand{\richnonint}[2]{\ifempty{#1}{\richnonintrel}{{#1} \richnonintrel {#2}}}
\newcommand{\negrichnonint}[2]{\ifempty{#1}{\rightsquigarrow^{\infty}}{{#1} \rightsquigarrow^{\infty} {#2}}}
\newcommand{\intok}[2]{{\tokColor{\mathit{in}}}_{{#1}}({#2})}
\newcommand{\outtok}[2]{{\tokColor{\mathit{out}}}_{{#1}}({#2})}
\newcommand{\qedex}{\ensuremath{\diamond}}
\definecolor{LightGrey}{rgb}{0.95,0.95,0.95}
\definecolor{keyword}{HTML}{7F0055}
\def\tokColor{\color{ForestGreen}}
\newcommand{\tokFmt}[1]{{\tokColor{\tt #1}}}
\newcommand{\ETH}{\tokFmt{ETH}}
\newcommand{\tok}[2][]{\tokFmt{#2}_{\tokColor{#1}}\xspace}
\newcommand{\tokT}[1][]{\tok[{#1}]{T}}    
\newcommand{\tokTi}[1][]{\tok[{#1}]{T'}}
\newcommand{\TokU}{\tokFmt{\mathbb{T}}} 
\newlength\replength
\newcommand\repfrac{.1}
\newcommand\rulewidth{.6pt}
\newcommand\tdashfill[1][\repfrac]{\cleaders\hbox to \replength{%
  \smash{\rule[\arraystretch\ht\strutbox]{\repfrac\replength}{\rulewidth}}}\hfill}
\newcommand\tdotfill[1][\repfrac]{\cleaders\hbox to \replength{%
  \smash{\raisebox{\arraystretch\dimexpr\ht\strutbox-.1ex\relax}{.}}}\hfill}
\def\sysColor{\color{Black}}
\newcommand{\sysFmt}[1]{{\sysColor{#1}}}
\newcommand{\sysS}[1][]{\mathord{\sysFmt{S}_{\sysColor{#1}}}}
\newcommand{\sysSi}[1][]{\mathord{\sysColor{\sysS'_{#1}}}}
\newcommand{\sysSii}[1][]{\mathord{\sysColor{\sysS''_{#1}}}}
\renewcommand\paragraph{\@startsection{paragraph}{4}{\z@}%
  {2.25ex \@plus 1ex \@minus .2ex}%
  {-0.75em}%
  {\normalfont\normalsize\bfseries}}
\begin{document}

\title{DeFi composability as MEV non-interference}

\iftoggle{anonymous}{}{
\author{Massimo Bartoletti\inst{1},
Riccardo Marchesin\inst{2},
Roberto Zunino\inst{2}}
\institute{
Universit\`a degli Studi di Cagliari, Cagliari, Italy
\and
Università di Trento, Trento, Italy
}
}


\maketitle

\begin{abstract}
  Complex DeFi services are usually constructed by
  composing a variety of simpler smart contracts.
  The permissionless nature of the blockchains where these
  smart contracts are executed
  makes DeFi services exposed to security risks,
  since adversaries can target any of the underlying contracts 
  to economically damage the compound service.
  We introduce a new notion of secure composability of smart contracts,
  which ensures that adversaries cannot economically harm the compound contract
  by interfering with its dependencies.
\end{abstract}

\section{Introduction}
\label{sec:intro}

Decentralized Finance (DeFi) is often touted
as the ``money Lego'' for its ability to build complex financial services
from a variety of simpler components~\cite{defipulse,Werner22aft}.
Recent empirical analyses of the DeFi ecosystem show that
DeFi protocols are heavily intertwined in practice,
giving rise to complex interactions~\cite{Kitzler22fc,Kitzler23tweb}.
This complexity has a drawback, in that adversaries can exploit
unintended forms of interaction among protocols
to obtain an economic profit to the detriment of the
users~\cite{Daian20flash,Gudgeon2020cvcbt}.
These security risks are further exacerbated by novel financial services
allowing users to easily create arbitrary compositions of
DeFi protocols~\cite{furucombo}.
%
%
The key unanswered question is:
\emph{when is a DeFi composition secure?}

Despite the clear practical relevance of this question,
the research on DeFi composability is surprisingly limited.
To the best of our knowledge, the only notion of secure DeFi composition
in the scientific literature is the one introduced by 
Babel, Daian, Kelkar and Juels in their recent
``Clockwork finance'' paper~\cite{Babel23clockwork}.
There, the focus is on attacks where adversaries can exploit
newly deployed contracts to increase their profit opportunities.
Accordingly, their criterion is that
contracts $\cstD$ are composable in a blockchain state $\sysS$
if adding $\cstD$ to $\sysS$
does not give the adversary a ``significantly higher''
Maximal Extractable Value (MEV).
In formulae, denoting by $\mev{}{\sysS}{}$ the 
maximal value that adversaries can extract from $\sysS$,
and with $\sysS \mid \cstD$ the blockchain state $\sysS$
extended with the contracts $\cstD$, 
the composability criterion of Babel \emph{et al.} is expressed as:
\begin{equation}
  \label{eq:babel-composability}
  \text{$\cstD$ is $\epsilon$-composable in $\sysS$}
  \quad \text{iff} \quad
  \mev{}{\sysS \mid \cstD}{} \leq (1 + \epsilon) \ \mev{}{\sysS}{}
\end{equation}
where $\epsilon$ parameterises the ``not significantly higher'' condition above.
For example, let $\cstD$ be a contract allowing users
to bet on the price of a token,
relying on an Automated Market Maker (AMM) in $\sysS$ as a price oracle.
If the adversary has sufficient capital in $\sysS$,
\eqref{eq:babel-composability} would correctly classify
$\cstD$ as \emph{not} $0$-composable in $\sysS$, because
the adversary can produce enough price volatility in the AMM
to always win the bet,
and so to extract more MEV in $\sysS \mid \cstD$ than in $\sysS$.

\paragraph{Limitations of $\epsilon$-composability}

We argue that the $\epsilon$-composability of Babel \emph{et al.} has some drawbacks.
First, using the $\mev{}{}{}$ of the \emph{whole} state $\sysS$
as a baseline for the comparison
makes it difficult to interpret the concrete security guarantee
of $\epsilon$-composability.
For instance, one may be induced to believe that
deploying a contract $\cstD$ in $\sysS$ is secure
after finding that \mbox{$0$-composability} holds.
However, \eqref{eq:babel-composability} only ensures that
the $\mev{}{}{}$ of \mbox{$\sysS \mid \cstD$} is bounded by that of $\sysS$,
while it does not say \emph{from which contracts} this MEV is extracted.
For instance,
assume that the Total Locked Value (TLV) of $\cstD$ and the MEV of $\sysS$ are equal,
and that extracting MEV from $\cstD$ blocks the MEV opportunities in $\sysS$
\change{B1}{(as in~\Cref{ex:babel-composability-not-implies-mev-nonintererence})}.
Even though $\cstD$ is $0$-composable with $\sysS$,
attacking \mbox{$\sysS \mid \cstD$} can make $\cstD$ lose its \emph{whole} TLV ---
not what the designer of $\cstD$ would reasonably consider secure.

Another drawback of using the \emph{global} MEV as a baseline
is that~\eqref{eq:babel-composability} classifies
as \emph{not} composable
contracts that have intended MEV, even when 
they have no interference at all with the rest of the system.
For instance, let $\cstD$ be an airdrop contract
allowing anyone to withdraw its whole balance.
Although $\cstD$ has no dependencies of any kind with the rest of the system,
it is not $0$-composable with any $\sysS$,
since $\mev{}{\sysS \mid \cstD}{}$ is greater than $\mev{}{\sysS}{}$.
Trying to find values of $\epsilon$ for which $\epsilon$-composability holds
is impractical, and at the extreme, if $\mev{}{\sysS}{}$ is zero,
then $\cstD$ is not $\epsilon$-composable with $\sysS$ for any $\epsilon \geq 0$.

Relying on global MEV also poses usability and algorithmic issues.
First, a definition of composability that requires to compute
a function of the \emph{whole} blockchain state $\sysS$
(which is usually quite large) is hardly efficiently computable.
Second, when $\epsilon$-composability is violated because of
additional MEV extracted from $\sysS$, it is unclear which countermeasures
could be taken against the attack, since the attacked contracts in $\sysS$
cannot be amended or removed.

\paragraph{A new security notion: MEV non-interference}

These arguments suggest to explore composability notions  
not relying on the global MEV.
Our insight comes from \emph{non-interference},
a notion studied by the information security community since the
1980s~\cite{GoguenMeseguer82sp,Ryan99csfw,Ryan01csfw,Backes02esorics,Giacobazzi04popl}.
In the classic setting, non-interference requires that
adversaries interacting with a software system cannot observe private data.
More precisely, the property holds when
public outputs (that can be observed by adversaries)
are not affected by private (confidential) inputs.
We adapt this notion to the DeFi setting,
by requiring that the MEV extractable from $\cstD$ (the public output)
is not affected by the dependencies of $\cstD$
(the private inputs).
This means that adversaries cannot extract more value from $\cstD$
using \emph{any} contract in $\sysS \mid \cstD$
than they could extract by using $\cstD$ only:
therefore, interacting with other contracts in the blockchain
(including the dependencies of $\cstD$)
gives no advantage to the adversary.
To the best of our knowledge, this is the first time
non-interference principles are applied to the DeFi setting.

Our new security notion, that we dub \emph{MEV non-interference},
relies on novel contributions to the theory of MEV.
In particular, we introduce \emph{local MEV},
a new metric of economic attacks to smart contracts
that measures the maximal economic loss that adversaries can cause
to a \emph{given} set of contracts
(whereas global MEV only applies to the \emph{whole} blockchain state).
We study two versions of local MEV, which assume different adversary models:
$\lmev{}{\sysS}{\CmvC}$ is the maximal loss of contracts $\CmvC$ in $\sysS$
under adversaries whose wealth is \emph{fixed} by $\sysS$;
instead, $\rlmev{}{\sysS}{\CmvC}$ assumes adversaries with an \emph{unbounded}
capital to carry the attack
(in practice, flash loans make this kind of adversary quite realistic).
One of our main theoretical contributions is that computing
$\rlmev{}{\sysS}{\CmvC}$ just requires to know
$\CmvC$ and their dependencies (\Cref{th:rich-lmev:stripping}).
This gives $\rlmev{}{}{}$ an algorithmic advantage over global MEV,
which depends on the whole blockchain state.

We define MEV non-interference in two versions:
\mbox{$\nonint{\sysS}{\cstD}$} means that 
interacting with $\sysS$ does not give \emph{bounded-wealth} adversaries
more opportunities to cause a loss to the contracts $\cstD$;
furthermore, \mbox{$\richnonint{\cstC}{\cstD}$}
does the same for \emph{unbounded-wealth} adversaries,
\change{B6}{where $\cstC$ is the state $\sysS$ after the wallets have been removed.}

We argue that MEV non-interference naturally captures the security property
a developer would like to verify before deploying new contracts.
By relying on the \emph{local} MEV that can be extracted from $\cstD$
(instead of the global MEV),
our notion overcomes the drawbacks of $\epsilon$-composability.
First, while contracts with intended MEV always break $\epsilon$-composability,
they possibly enjoy MEV non-interference:
\eg, this is the case of the above-mentioned airdrop contract,
which is MEV non-interfering \wrt any $\sysS$.
Second, for the unbounded-wealth version we prove that
deciding \mbox{$\richnonint{\cstC}{\cstD}$} only requires to consider
$\cstD$ and their dependencies in $\cstC$
(\Cref{th:richnonint:stripping}).
We exploit this result to show that $\richnonint{\cstC}{\cstD}$
is resistant to adversarial contracts: 
\ie, MEV non-interference still holds when
an adversary deploys some contracts $\cstDi$ before $\cstD$
(by contrast, both $\epsilon$-composability and $\nonint{}{}$
are \emph{not} resistant).
We give sufficient conditions for both versions of MEV non-interference
(\Cref{th:nonint:sufficient-conditions,th:richnonint:sufficient-conditions}),
and we apply them to study typical compositions of DeFi protocols
(\Cref{tab:defi-compositions}).


\iftoggle{arxiv}{}
{Because of space constraints, we provide the proofs of all our results
and the pseudo-code for our DeFi examples in a separated Appendix.}

\section{Blockchain model}
\label{sec:blockchain}

\begin{table}[t]
\caption{Summary of notation.}
\label{tab:notation}
\centering
\begin{tabular}{p{40pt}p{120pt}p{45pt}p{130pt}}
\hline
$\pmvA,\pmvB$ & User accounts 
& $\AddrA,\AddrB$ & Sets of [user$\mid$contract] accounts
\\
$\cmvC,\cmvD$ & Contract accounts 
& $\cmvOfcst{\cstC}$ & Contract accounts in $\cstC$
\\
$\CmvC,\CmvD$ & Sets of contract accounts
& $\deps{\CmvC}$ & Dependencies of $\CmvC$
\\
$\tokT,\tokTi$ & Token types 
& $\price{\tokT}$ & Price of $\tokT$
\\
$\txT,\txTi$ & Transactions 
& $\TxTS$ & Sequence of transactions
\\
$\sysS,\sysSi$ & Blockchain states 
& $\wealth{\AddrA}{\sysS}$ & Wealth of $\AddrA$ in $\sysS$
\\
$\WmvA,\WmvAi$ & Wallet states 
& $\gain{\AddrA}{\sysS}{\TxTS}$ & $\AddrA$'s gain upon firing $\TxTS$ in $\sysS$
\\
$\cstC,\cstD$ & Contract states 
& $\Adv$ & Set of adversaries
\\
\hline
\end{tabular}
\end{table}

We fix a model of account-based blockchains
and smart contracts \emph{\`a la} Ethereum.
To keep our theory simple, we abstract from
consensus features (\eg, the gas mechanism), and we
rule out some problematic behaviours (\eg, reentrancy).

\paragraph{Blockchain states}
We assume a set $\TokU$ of \emph{token types} ($\tokT, \tokTi, \ldots$)
and a countably infinite set $\AddrU$ of \emph{accounts}.
Accounts are partitioned into
\emph{user accounts} $\pmvA, \pmvB, \ldots \in \PmvU$
(representing the so-called \emph{externally owned accounts} in Ethereum)
and
\emph{contract accounts} $\cmvC, \cmvD, \ldots \in \CmvU$.
We designate a subset $\Adv$ of user accounts as adversaries%
\footnote{%
In practice, given a blockchain state it would be safe to
say that $\Adv$ are the accounts never mentioned in
the contract states and code.
Modelling $\Adv$ as a system parameter is a simplification,
which avoids to make our definitions depend on a specific contract language.
It would be possible to remove the parameter $\Adv$,
at the cost of an increased complexity of the definitions and statements
(see \eg the adversarial MEV in~\cite{BZ23mev}).},
\change{B7}{including \eg block proposers and MEV searchers}.
%
The state of a user account is a map \mbox{$\wmvA \in \TokU \rightarrow \Nat$}
from tokens to non-negative integers,
\ie a \keyterm{wallet} that quantifies the tokens in the account.
The state of a contract account is a pair $(\wmvA,\sigma)$,
where $\wmvA$ is a wallet and $\sigma$ is a key-value store.
\keyterm{Blockchain states} $\sysS, \sysSi, \ldots$
are finite maps from accounts to their states, 
where user wallets include at least $\Adv$'s.
We use the operator $\mid$ to deconstruct a blockchain state
into its components, writing \eg:
\[
\sysS =
\walpmv{\pmvA}{\waltok{1}{\tokT},\waltok{2}{\tokT[2]}} \mid
\walu{\pmvM}{0}{\tokT} \mid
\walpmv{\contract{C}}{\waltok{1}{\tokT},\code{owner}=\pmvA}
\]
for a blockchain state where the user account $\pmvA$
stores 1 unit of token $\tokT$ and two units of token $\tokT[2]$,
the user account $\pmvM$ has zero tokens,
and the contract $\contract{C}$ stores 1 unit of $\tokT$
and has a key-value store mapping $\code{owner}$ to $\pmvA$.

\paragraph{Contracts}
We do not rely on a specific contract language:
we just assume that contracts have an associated set of methods
(like \eg, in Solidity).%
\footnote{%
For simplicity, we forbid direct transfer of assets between users:
this is not a limitation, since these transfers can always be routed
by suitable contracts.
}
A method can:
\begin{inlinelist}
\item receive parameters and tokens from the caller,
\item update the contract wallet and state,  
\item transfer tokens to user accounts,
\item call other contracts (possibly transferring tokens along with the call),
\item return values and transfer tokens to the caller,
\item abort.
\end{inlinelist}
As usual, a method cannot drain tokens from other accounts:
the only ways for a contract to receive tokens are
\begin{inlinelist}
\item from a caller invoking one of its methods, or
\item by calling a method of another contract that sends tokens to its caller.
\end{inlinelist}
%
For simplicity,
we assume that a contract $\cmvC$ can only call methods of contracts
deployed before it.
Formally, defining ``\emph{$\cmvC$ is called by $\cmvD$}'' when some method of $\cmvD$
calls some method of $\cmvC$,
we are requiring that the transitive and reflexive closure $\sqsubseteq$
of this relation is a partial order.
We also assume that blockchain states contain all the \emph{dependencies}
of their contracts:
formally, if $\CmvC$ are the contracts in $\sysS$, we require that
$\deps{\CmvC} = \setcomp{\cmvCi}{\exists \cmvC \in \CmvC.\ \cmvCi \sqsubseteq \cmvC}$
are in $\sysS$.
States satisfying these assumptions are said \emph{well-formed}:
all states mentioned in our results (either in hypothesis or thesis)
are always well-formed.
Although well-formedness makes our model cleaner than Ethereum,
preventing some problematic behaviours like reentrant calls~\cite{Luu16ccs}, we only need it
on the analysed contracts $\cstD$ and their dependencies (see~\Cref{sec:related}).
We write \mbox{$\sysS = \WmvA \mid \cstC$}
for a blockchain state $\sysS$ composed 
of user wallets $\WmvA$ and contract states $\cstC$.
We can deconstruct wallets, writing \mbox{$\sysS = \WmvA \mid \WmvAi \mid \cstC$}
when $\dom{\WmvA}$ and $\dom{\WmvAi}$ are disjoint, as well as contract states, writing \mbox{$\sysS = \WmvA \mid \cstC \mid \cstD$}.
We denote by $\cmvOfcst{\cstC}$ the set of contract accounts
in $\cstC$ (\ie $\cmvOfcst{\cstC} = \dom{\cstC}$),
and let $\deps{\cstD} = \deps{\cmvOfcst{\cstD}}$.
Finally, we assume that contracts cannot inspect the state
of other accounts, including users' wallets
and the state of other contracts.%
\footnote{These are not restrictions in practice.
To make a contract depend on a user's wallet,
we can require the users to transfer tokens along with contract calls.
To make it depend on the state of other contracts,
we can access it through getter methods.}
Formally, we are requiring that each transaction
enabled in $\sysS$ produces the same effect
in a ``richer'' state $\sysSi \geq_{\$} \sysS$
containing more tokens in users' wallets
\iftoggle{arxiv}
{(\Cref{def:wallet-monotonicity}).}
{(Definition B.2 in~\cite{BMZ23defi}).}

\paragraph{Transactions}
We model contracts behaviour as a 
deterministic transition relation $\xrightarrow{}$ between blockchain states,
where state transitions are triggered by
\keyterm{transactions} $\txT, \txTi, \ldots$.
A transaction is a call to a contract method,
written $\pmvA:\contract{C}.\txcode{f}(\code{args})$,
where $\pmvA$ is the user signing the transaction,
$\contract{C}$ is the called contract,
$\txcode{f}$ is the called method,
and $\code{args}$ is the list of actual parameters,
which can also include transfers of tokens from $\pmvA$ to $\cmvC$.
Invalid transactions are rolled-back,
\ie $\xrightarrow{}$ preserves the state.
There are various reasons for invalidity,
\eg the called method aborts, a token transfer without the needed tokens
is attempted, \etc
Given $\txT = \pmvA:\contract{C}.\txcode{f}(\code{args})$,
we write $\callee{\txT}$ for the target contract~$\contract{C}$.
Methods can refer to $\pmvA$ via the identifier $\origin$
and to the caller (contract or user) account via $\sender$
(corresponding, resp., to $\code{tx.origin}$ and $\code{msg.sender}$ in Solidity).

\paragraph{Example: an Automated Market Maker}
In our examples, we specify contracts in pseudo-code.
Its syntax is similar to Solidity, with some extra features:
\begin{inlinelist}
\item the expression $\walE{\tokT}$ denotes the number of tokens $\tokT$
  stored in the contract;
\item the formal parameter $\txin{}{x}{\tokT}$
requires the $\sender$ to transfer some tokens $\tokT$
to the contract along with the call
(the unsigned integer variable $x$ generalises Solidity's \code{msg.value}
to multi-tokens);
\item the command $\txout{\addr{a}}{e}{\tokT}$ transfers
$e$ units of $\tokT$ from the contract to account $\addr{a}$,
where $e$ is an expression, and $\addr{a}$ could be either a user account
or the method $\sender$).
\end{inlinelist}
We exemplify pseudo-code in~\Cref{fig:amm},
specifying  an Automated Market Maker
inspired by Uniswap v2~\cite{Angeris20aft,Angeris21analysis,BCL22amm,Xu21sok}.
%
%
Users can add liquidity,
query the token pair and the exchange rate,
and swap units of $\tokT[0]$ with units of $\tokT[1]$
or vice-versa.
More specifically,
$\pmvA:\txcode{swap}(\txin{}{x}{\tokT[0]},y_{\min})$
allows $\pmvA$ to send $\waltok{x}{\tokT[0]}$ to the contract,
and receive at least \mbox{$\waltok{y_{\min}}{\tokT[1]}$} in exchange.
Symmetrically, $\pmvA:\txcode{swap}(\txin{}{x}{\tokT[1]},y_{\min})$
allows $\pmvA$ to exchange $\waltok{x}{\tokT[1]}$ for at least
\mbox{$\waltok{y_{\min}}{\tokT[0]}$}.

\begin{figure}[t]
\begin{lstlisting}[language=txscript,morekeywords={AMM,addLiq,swap,getTokens,getRate},classoffset=4,morekeywords={T0,T1,t},keywordstyle=\tokColor,frame=single]
contract (*$\contract{AMM}$*) {
  addLiq(?x0:T0,?x1:T1) { // add liquidity to the AMM
    require #T0 * (#T1-x1) == (#T0-x0) * #T1 }
  getTokens() { return (T0,T1) }    // token pair
  getRate(t) {                      // exchange rate
    if (t==T0) return #T0/#T1       // r:T0 for 1:T1
    else if (t==T1) return #T1/#T0  // r:T1 for 1:T0
    else abort }
  swap(?x:t,ymin) {
    if (t==T0)
      { y=(x*#T1)/#T0; require ymin<=y<#T1; sender!y:T1 }
    else if (t==T1)
      { y=(x*#T0)/#T1; require ymin<=y<#T0; sender!y:T0 }
    else abort }
}
\end{lstlisting}
\negcaptionspace
\caption{A constant-product AMM contract.}
\label{fig:amm}
\end{figure}

\hidden{
\begin{remark}
  \label{remark:well-formed-state}
  To improve the readability of our results,
  we adopt the following convention:
  every time the operator $\mid$ occurs in a statement,
  we implicitly assume in the statement hypotheses
  that the well-formedness condition above is respected.
\end{remark}
}

\paragraph{Wealth and gain}
Measuring the effect of an attack
requires to estimate the \keyterm{wealth} of the adversary
before and after the attack.
We denote by $\wealth{\AddrA}{\sysS}$
the wealth of accounts $\AddrA$ in $\sysS$.
Such wealth is given by the weighted sum of the tokens in $\AddrA$'s wallets,
where the weights are the token prices.
We denote by $\price{\tokT}$ the (strictly positive)
price of token type $\tokT$. 
\change{B5}{This implicitly assumes that token prices are constant, 
since they do not depend on the blockchain state.}%
\footnote{\changeNoMargin{This simplifying assumption allows local MEV to neglect the parts of the state that could affect token prices. A more realistic handling of token prices would require to extend the model with a function that determines the token prices in a given state.}}

\begin{definition}[Wealth]
  The wealth of $\AddrA \subseteq \AddrU$
  in $\sysS = \WmvA \mid \cstC$ is given by:
  \label{def:wealth}
  \begin{equation}
    \label{eq:wealth}
    \wealth{\AddrA}{\sysS}
    \; = \;
    \sum_{\pmvA \in \AddrA \cap \dom{\WmvA},\, \tokT}
    \hspace{-12pt}
    \WmvA(\pmvA)(\tokT) \cdot \price{\tokT}
    \hspace{4pt} +
    \sum_{\cmvC \in \AddrA \cap \dom{\cstC},\, \tokT}
    \hspace{-12pt}    
    \fst(\cstC(\cmvC))(\tokT) \cdot \price{\tokT}    
  \end{equation}
\end{definition}

To rule out ill-formed states with an infinite amount of tokens,
we require blockchain states to enjoy the \emph{finite tokens axiom}, \ie
$\sum_{\pmvA,\tokT} \sysS(\pmvA) (\tokT) \in \Nat$.
This makes the global wealth always finite.
The success of attacks is measured as \emph{gain},
\ie the difference of the attackers' wealth before and after the attack.

\begin{definition}[Gain]
  \label{def:gain}
  The gain of $\AddrA \subseteq \AddrU$ upon firing
  a transactions sequence $\TxTS$ in $\sysS$ is given by
  \(
  \gain{\AddrA}{\sysS}{\TxTS}
  =
  \wealth{\AddrA}{\sysSi} - \wealth{\AddrA}{\sysS}
  \)
  if $\sysS \xrightarrow{\TxTS} \sysSi$.
\end{definition}

\section{Local MEV}
\label{sec:mev}

The MEV in a state $\sysS$ is 
the maximum gain of the adversary $\Adv$ upon 
firing a sequence of transactions constructed
by $\Adv$~\cite{Babel23clockwork,BZ23mev}.
If we denote by $\mall{}{\Adv,P}$ the set of transactions craftable by $\Adv$ using a mempool $P$,
and by $\mall{}{\Adv,P}^*$ their finite sequences,
the MEV in a blockchain state $\sysS$ can be formalised as:
\begin{equation}
  \label{eq:mev}
  \mev{}{\sysS,P}{}
  = \max \setcomp
  {\gain{\Adv}{\sysS}{\TxTS}}
  {\TxTS \in \mall{}{\Adv,P}^*}
\end{equation}

This notion measures the value that adversaries can extract from \emph{any}
contract in the blockchain and transfer to their wallets.
To study composability,
following the intuitions in~\Cref{sec:intro}
we need instead to check if the contracts that will be deployed
can have a loss when adversaries manipulate their dependencies.
Therefore, our notion of MEV diverges from~\eqref{eq:mev} in four aspects:
\begin{enumerate}

\item We measure the value that adversaries can extract from
a \emph{given} set of contracts $\CmvC$.
This means that only the tokens extracted from the contracts in $\CmvC$
contribute to the extractable value,
while the tokens grabbed from other contracts do not count
(while they would count for the \emph{global} MEV in~\eqref{eq:mev}).

\item We count as MEV
  \emph{all} the tokens that $\Adv$ can remove from $\CmvC$,
  regardless of whether $\Adv$ can transfer them to their wallets.
  Namely, while~\eqref{eq:mev} maximises $\Adv$'s gain $\gain{\Adv}{}{}$,
  our notion maximises $\CmvC$'s \emph{loss} $-\gain{\CmvC}{}{}$.
  This is because we want our composability to provide security
  also against \emph{irrational adversaries},
  who try to damage contracts
  without necessarily making a profit.

\item 
  We parameterise our MEV \wrt the set $\CmvD$ of contracts callable by~$\Adv$.
  This allows us to rework the
  distinction between private and public information
  in language-based non-interference.
  There, the idea is that public outputs are not affected by private inputs,
  \ie restricting inputs to the private ones must preserve the public outputs.
  In our context,
  we rephrase this by requiring that the MEV extractable from $\CmvC$
  (the public output)
  is not affected when restricting the set of callable contracts
  to a subset $\CmvD$ (the private inputs).

\item \change{B3}{We assume that the mempool is empty,  
  just writing $\mall{}{\Adv}$. 
  We do so because to define secure composability 
  we are concerned about
  the MEV extractable by exploiting new contracts, and not that
  extractable from the mempool.
  Note that the mempool is instead considered in the MEV notions in~\cite{Babel23clockwork,BZ23mev}.}

\end{enumerate}
We call this new notion \keyterm{local MEV},
and we denote it by $\lmev{\CmvD}{\sysS}{\CmvC}$.

\begin{definition}[Local MEV]
  \label{def:lmev}
  Let
  \(
  \mall{\CmvD}{\Adv}
  = 
  \setcomp{\txT \in \mall{}{\Adv}}{\callee{\txT} \subseteq \CmvD}
  \)
  be the set of transactions craftable by $\Adv$ 
  and targeting contracts in $\CmvD$.
  We define:
  \begin{equation}
    \label{eq:lmev}
    \lmev{\CmvD}{\sysS}{\CmvC}
    = \max \setcomp
    {-\gain{\CmvC}{\sysS}{\TxTS}}
    {\TxTS \in \mall{\CmvD}{\Adv}^*}
  \end{equation}
  Hereafter, we abbreviate $\lmev{\CmvU}{\sysS}{\CmvC}$ as
  $\lmev{}{\sysS}{\CmvC}$.
\end{definition}

\begin{example}
  \label{ex:lmev:amm}
  Consider two instances of the AMM contract in~\Cref{fig:amm},
  to swap respectively the pairs $(\tokT[0],\tokT[1])$ and $(\tokT[1],\tokT[2])$.
  Let $\price{\tokT[0]} = \price{\tokT[1]} = \price{\tokT[2]} = 1$, and let
  \(
  \sysS = 
  \walpmv{\pmvM}{\waltok{3}{\tokT[0]}} \mid
  \walpmv{\contract{AMM1}}{\waltok{6}{\tokT[0]},\waltok{6}{\tokT[1]}} \mid
  \walpmv{\contract{AMM2}}{\waltok{4}{\tokT[1]},\waltok{9}{\tokT[2]}}
  \),
  where $\pmvM$ is the adversary.
  We want to compute the local MEV \wrt $\CmvC = \setenum{\contract{AMM2}}$.
  \change{B3}{Recall that we are assuming the mempool to be empty.}
  In the unrestricted case (\ie, $\CmvD$ is the universe),
  the trace $\TxTS$ that maximises the loss of $\contract{AMM2}$
  is the following:
  \begin{align*}
    \sysS
    & \xrightarrow{\pmvM:\contract{AMM1}.\txcode{swap}(\txin{}{3}{\tokT[0]},0)}
    && \walu{\pmvM}{2}{\tokT[1]} \mid
    \walpmv{\contract{AMM1}}{\waltok{9}{\tokT[0]},\waltok{4}{\tokT[1]}} \mid
    \walpmv{\contract{AMM2}}{\waltok{4}{\tokT[1]},\waltok{9}{\tokT[2]}}  
    \\
    & \xrightarrow{\pmvM:\contract{AMM2}.\txcode{swap}(\txin{}{2}{\tokT[1]},0)}
    && \walu{\pmvM}{3}{\tokT[2]} \mid
    \walpmv{\contract{AMM1}}{\waltok{9}{\tokT[0]},\waltok{4}{\tokT[1]}} \mid
    \walpmv{\contract{AMM2}}{\waltok{6}{\tokT[1]},\waltok{6}{\tokT[2]}}
  \end{align*}
  We have that $-\gain{\setenum{\contract{AMM2}}}{\sysS}{\TxTS} = 1$,
  hence 
  $\lmev{}{\sysS}{\setenum{\contract{AMM2}}} = 1$.
  Instead, when $\pmvM$ is restricted to use
  $\CmvD = \setenum{\contract{AMM2}}$,
  it has no way to obtain the tokens $\tokT[1]$ that are needed to
  extract value from $\contract{AMM2}$: therefore,
  $\lmev{\setenum{\contract{AMM2}}}{\sysS}{\setenum{\contract{AMM2}}} = 0$.
  \hfill\qedex
\end{example}

\Cref{lem:lmev} establishes some useful properties of local MEV.
\Cref{lem:lmev:mev} studies some border cases:
in particular, when the set of observed contracts $\CmvC$ is the universe,
local MEV over-approximates global MEV.
\Cref{lem:lmev:L-leq-H,lem:lmev:monotonicity}
state that widening the restricted contracts $\CmvD$ or the contract state
potentially increases the local MEV%
\footnote{%
Instead, $\CmvC \subseteq \CmvCi$ does not imply 
$\lmev{\CmvD}{\sysS}{\CmvC} \leq \lmev{\CmvD}{\sysS}{\CmvCi}$,
because maximising the loss of a contract may increase the gain of another one
\iftoggle{arxiv}
{(see~\Cref{ex:lmev:not-monotonic-on-observed-contracts}).}
{(see Example B.1 in~\cite{BMZ23defi}).}}.
\Cref{lem:lmev:garbage} allows to restrict the set of contract accounts
$\CmvC$ and $\CmvD$ to those occurring in the blockchain state $\cstC$ 
(\ie, $\cmvOfcst{\cstC}$).
\Cref{lem:lmev:leq-wealth} states that the
local MEV extractable from $\CmvC$ is non-negative
and bounded by the wealth of $\CmvC$.

\begin{lemma}[Basic properties of MEV]
  \label{lem:lmev}
  For all $\sysS$, $\CmvC,\CmvD \subseteq \CmvU$:
  \begin{enumerate}

  \item \label{lem:lmev:mev}
    $\lmev{\CmvD}{\sysS}{\emptyset} = \lmev{\emptyset}{\sysS}{\CmvC} = 0$,
    $\lmev{\CmvU}{\sysS}{\CmvU} \geq \mev{}{\sysS}{}$

  \item \label{lem:lmev:L-leq-H}
    if $\CmvD \subseteq \CmvDi$, then
    $\lmev{\CmvD}{\sysS}{\CmvC} \leq \lmev{\CmvDi}{\sysS}{\CmvC}$

  \item \label{lem:lmev:monotonicity}
    $\lmev{\CmvD}{\WmvA \mid \cstC}{\CmvC} \leq \lmev{\CmvD}{\WmvA \mid \cstCi}{\CmvC}$, where $\cstCi\mid_{\dom{\cstC}} = \cstC$

  \item \label{lem:lmev:garbage}
    $\lmev{\CmvD}{\WmvA \mid \cstC}{\CmvC} = \lmev{\CmvD}{\WmvA \mid \cstC}{\CmvC \cap \cmvOfcst{\cstC}} = \lmev{\CmvD\cap \cmvOfcst{\cstC}}{\WmvA \mid \cstC}{\CmvC}$
    
  \item \label{lem:lmev:leq-wealth}
    $0 \leq \lmev{\CmvD}{\sysS}{\CmvC} \leq \wealth{\CmvC}{\sysS}$
    
  \end{enumerate}
\end{lemma}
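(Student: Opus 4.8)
The plan is to treat the five items in increasing order of difficulty, reusing the easy ones where convenient. I would first dispatch the cheap facts. The two equalities in item~\ref{lem:lmev:mev} are immediate: $\wealth{\emptyset}{\cdot}=0$ in every state, so every admissible trace yields loss $0$ and $\lmev{\CmvD}{\sysS}{\emptyset}=0$; and for $\CmvD=\emptyset$ no single transaction satisfies $\callee{\txT}\subseteq\emptyset$, so $\mall{\emptyset}{\Adv}=\emptyset$, the only trace in $\mall{\emptyset}{\Adv}^*$ is $\emptyseq$, and its loss is $0$. Item~\ref{lem:lmev:leq-wealth} is equally direct: the empty trace gives loss $0$, so $\lmev{\CmvD}{\sysS}{\CmvC}\geq 0$; and for any $\TxTS$ with $\sysS\xrightarrow{\TxTS}\sysSi$ we have $-\gain{\CmvC}{\sysS}{\TxTS}=\wealth{\CmvC}{\sysS}-\wealth{\CmvC}{\sysSi}\leq\wealth{\CmvC}{\sysS}$, since wealth is a sum of non-negative token counts times strictly positive prices and hence $\geq 0$; taking the maximum gives the upper bound.

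Item~\ref{lem:lmev:L-leq-H} follows because $\CmvD\subseteq\CmvDi$ implies $\mall{\CmvD}{\Adv}\subseteq\mall{\CmvDi}{\Adv}$, hence $\mall{\CmvD}{\Adv}^*\subseteq\mall{\CmvDi}{\Adv}^*$, so the maximum on the left ranges over a subset of the traces on the right. For item~\ref{lem:lmev:garbage} I would argue both equalities from the fact that accounts outside $\dom{\cstC}$ carry no state. For the first equality, accounts in $\CmvC\setminus\cmvOfcst{\cstC}$ contribute $0$ to $\wealth{\cdot}{\cdot}$, and since method calls never deploy contracts the set of deployed accounts stays $\dom{\cstC}$ along any trace; thus $\wealth{\CmvC}{\cdot}=\wealth{\CmvC\cap\cmvOfcst{\cstC}}{\cdot}$ throughout and the two losses coincide trace-by-trace. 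For the second equality, a transaction targeting a contract outside $\cmvOfcst{\cstC}$ is invalid and hence preserves the state; deleting such no-ops from any trace in $\mall{\CmvD}{\Adv}^*$ yields a trace in $\mall{\CmvD\cap\cmvOfcst{\cstC}}{\Adv}^*$ with the same loss, giving $\leq$, while $\geq$ is item~\ref{lem:lmev:L-leq-H}.

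The first substantive part is the inequality $\lmev{\CmvU}{\sysS}{\CmvU}\geq\mev{}{\sysS}{}$ in item~\ref{lem:lmev:mev}. Here $\mall{\CmvU}{\Adv}=\mall{}{\Adv}$ since $\callee{\txT}\in\CmvU$ always, so both extremes range over the same traces and it suffices to prove the pointwise bound $\gain{\Adv}{\sysS}{\TxTS}\leq -\gain{\CmvU}{\sysS}{\TxTS}$ and take the maximum. I would obtain this from conservation of total wealth under $\xrightarrow{}$ (the token-flow discipline of the model only moves tokens, never mints or burns them), which for a fixed $\TxTS$ gives $\gain{\Adv}{\sysS}{\TxTS}+\gain{\PmvU\setminus\Adv}{\sysS}{\TxTS}+\gain{\CmvU}{\sysS}{\TxTS}=0$. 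Because every admissible transaction is signed by an adversary and contracts cannot drain tokens from accounts, the wallets of honest users $\PmvU\setminus\Adv$ are only ever credited, so $\gain{\PmvU\setminus\Adv}{\sysS}{\TxTS}\geq 0$, whence $-\gain{\CmvU}{\sysS}{\TxTS}=\gain{\Adv}{\sysS}{\TxTS}+\gain{\PmvU\setminus\Adv}{\sysS}{\TxTS}\geq\gain{\Adv}{\sysS}{\TxTS}$.

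I expect item~\ref{lem:lmev:monotonicity} to be the main obstacle, since it needs a framing/locality argument rather than a counting one. Let $\TxTS$ attain $\lmev{\CmvD}{\WmvA\mid\cstC}{\CmvC}$. In $\WmvA\mid\cstC$ only the contracts in $\cmvOfcst{\cstC}$ are deployed, so transactions targeting contracts outside $\cmvOfcst{\cstC}$ are invalid no-ops; deleting them gives a trace $\TxTS'$ with the same loss whose transactions all target $\CmvD\cap\cmvOfcst{\cstC}$. I would then fire $\TxTS'$ in $\WmvA\mid\cstCi$: since $\cstCi$ agrees with $\cstC$ on $\dom{\cstC}$ and, by well-formedness, the dependencies of those contracts lie inside $\cstC$, while contracts cannot inspect the state of the extra accounts of $\cstCi$, the run restricted to $\dom{\WmvA}\cup\dom{\cstC}$ is identical in both states; hence $-\gain{\CmvC}{\WmvA\mid\cstCi}{\TxTS'}=-\gain{\CmvC}{\WmvA\mid\cstC}{\TxTS}$. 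As the wallets $\WmvA$ — and thus $\mall{\CmvD}{\Adv}$ — are unchanged, $\TxTS'$ is admissible on the right, giving $\lmev{\CmvD}{\WmvA\mid\cstCi}{\CmvC}\geq\lmev{\CmvD}{\WmvA\mid\cstC}{\CmvC}$; the inequality can be strict because the richer state additionally lets the adversary call the extra contracts. Isolating and justifying this frame property cleanly — that unrelated deployed contracts alter neither the transition relation on the touched accounts nor the set of craftable transactions — is the crux, and the rest is bookkeeping.
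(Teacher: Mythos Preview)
Your proposal is correct and follows essentially the same approach as the paper's proof: the same pointwise conservation argument for $\lmev{\CmvU}{\sysS}{\CmvU}\geq\mev{}{\sysS}{}$, the same subset-of-traces reasoning for item~\ref{lem:lmev:L-leq-H}, the same ``invalid transactions are no-ops'' observation for item~\ref{lem:lmev:garbage}, and the same frame argument (replay the optimal trace in the larger state, noting that the extra contracts are untouched) for item~\ref{lem:lmev:monotonicity}. If anything, you are slightly more explicit than the paper in justifying the framing step of item~\ref{lem:lmev:monotonicity} via well-formedness and the no-inspection assumption, and in splitting the second equality of item~\ref{lem:lmev:garbage} into two inequalities.
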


\Cref{lem:lmev-wallet} states that the only user wallets that
need to be taken into account to estimate the  MEV are those of the adversary
(\Cref{lem:lmev-wallet:zero-user}).
This is because $\Adv$ has no way to force other users to spend their
tokens in the attack sequence.%
\footnote{This property would not hold if the adversary could play
user transactions in the mempool, since their validity depends on
the wallets of the users who signed them.
However, when studying contract composability the question is whether
a new contract can be attacked by exploiting its dependencies, so
the mempool is irrelevant.}
Furthermore, wealthier adversaries may potentially extract more MEV 
(\Cref{lem:lmev-wallet:monotonicity}).

\begin{lemma}[MEV and adversaries' wallets]
  \label{lem:lmev-wallet}
  \begin{enumerate}

  \item \label{lem:lmev-wallet:zero-user}
    if $\dom{\WmvA[\Adv]} = \Adv$, then
    \(
    \lmev{\CmvD}{\WmvA[\Adv] \mid \WmvA \mid \cstC}{\CmvC} = \lmev{\CmvD}{\WmvA[\Adv] \mid \cstC}{\CmvC}
    \)
    
  \item \label{lem:lmev-wallet:monotonicity}
    if $\sysS \leq_{\$} \sysSi$, then
    $\lmev{\CmvD}{\sysS}{\CmvC} \leq \lmev{\CmvD}{\sysSi}{\CmvC}$
  \end{enumerate}
\end{lemma}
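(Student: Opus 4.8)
The plan is to prove both parts by isolating a single-transaction invariance and then lifting it to traces by induction on the length of $\TxTS$. For \Cref{lem:lmev-wallet:zero-user}, the key observation is that every transaction in $\mall{\CmvD}{\Adv}$ is signed by an adversary, so its $\origin$ lies in $\Adv$ and the only tokens it can carry into a contract come from an $\Adv$-wallet, never from the extra non-adversary wallets $\WmvA$. Combined with the assumption that contracts cannot inspect other accounts' wallets, this makes the validity verdict and the effect on the contract states and on the $\Adv$-wallets independent of $\WmvA$. I would make this precise by introducing the equivalence that relates two states whenever they agree on all contract states and on the wallets of $\Adv$ (leaving the remaining user wallets unconstrained), and by showing that firing any $\txT \in \mall{\CmvD}{\Adv}$ preserves it: both related states are valid (or both rolled back), and the post-states again agree on contracts and on $\Adv$-wallets. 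Iterating along $\TxTS$, the runs started from $\WmvA[\Adv]\mid\WmvA\mid\cstC$ and from $\WmvA[\Adv]\mid\cstC$ stay in lockstep on the contract component; since $\CmvC \subseteq \CmvU$ consists of contracts, the wealth $\wealth{\CmvC}{\cdot}$ depends only on that component, so $\gain{\CmvC}{\WmvA[\Adv]\mid\WmvA\mid\cstC}{\TxTS} = \gain{\CmvC}{\WmvA[\Adv]\mid\cstC}{\TxTS}$. As $\mall{\CmvD}{\Adv}$ does not depend on the state, the two instances of the maximisation in \eqref{eq:lmev} range over the same trace set with identical $\CmvC$-losses, hence the maxima coincide.

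For \Cref{lem:lmev-wallet:monotonicity} I would instead rely directly on wallet monotonicity (\Cref{def:wallet-monotonicity}). Given $\sysS \leq_{\$} \sysSi$, the single-transaction statement — a transaction fireable in a state is fireable, with the same effect on the contract states, in any $\leq_{\$}$-larger state — lifts by induction to whole traces, so that every $\TxTS \in \mall{\CmvD}{\Adv}^*$ leaves the contract wallets (in particular those of $\CmvC$) changed by exactly the same amounts in $\sysS$ and in $\sysSi$. Taking an optimal trace $\TxTS$ for $\sysS$, it witnesses $-\gain{\CmvC}{\sysSi}{\TxTS} = -\gain{\CmvC}{\sysS}{\TxTS} = \lmev{\CmvD}{\sysS}{\CmvC}$, whence $\lmev{\CmvD}{\sysSi}{\CmvC} \geq \lmev{\CmvD}{\sysS}{\CmvC}$.

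The main obstacle is the invariance step for \Cref{lem:lmev-wallet:zero-user}, specifically the bookkeeping of contract-to-user transfers: a contract may pay tokens to a user present in $\dom{\WmvA}$ in the richer run but absent (implicitly zero) in the poorer run. I must argue that this discrepancy never reaches the relevant component of the state — it stays confined to non-adversary user wallets, which the attack never spends (only $\Adv$ signs transactions) and which no contract reads — so that the chosen equivalence really behaves as a bisimulation for $\Adv$-transactions. Verifying that token conservation forces the balance of $\CmvC$ to move identically in the two runs, and that well-formedness and the finite-tokens axiom are maintained throughout, is the delicate part; once the invariance is in place, both claims reduce to elementary properties of $\max$ over the common trace set.
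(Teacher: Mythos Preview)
Your argument is correct for both items; for item~\ref{lem:lmev-wallet:monotonicity} it is essentially identical to the paper's. The difference lies in item~\ref{lem:lmev-wallet:zero-user}: rather than building a bespoke bisimulation, the paper observes that $\WmvA[\Adv]\mid\WmvA\mid\cstC = (\WmvA[\Adv]+\WmvA)\mid\cstC$, i.e.\ the two states are related by $\leq_{\$}$, and then invokes wallet monotonicity (\Cref{def:wallet-monotonicity}) directly --- the very same tool you reserve for item~\ref{lem:lmev-wallet:monotonicity}. This immediately yields that a sequence valid in the poorer state $\WmvA[\Adv]\mid\cstC$ produces the same contract state $\cstCi$ in the richer one (the extra $\WmvA$ simply survives as an additive summand in the post-wallets), so the ``effect on contracts is identical'' part of your invariance collapses to a one-line appeal to the existing assumption. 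What remains is only the informal observation that validity is also preserved in the reverse direction, since $\Adv$-signed transactions cannot spend from $\WmvA$. Your bisimulation approach makes that reverse direction explicit and also accounts cleanly for the contract-to-non-adversary-user transfers you flag as an obstacle; the paper glosses over that bookkeeping by leaning on the shape of the wallet-monotonicity conclusion. In short: the paper's route is shorter and reuses \Cref{def:wallet-monotonicity} uniformly for both items, while yours is more self-contained and promotes the irrelevance of $\WmvA$ to a first-class invariant.
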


Since the overall amount of tokens held in
contract wallets is limited,
the MEV no longer increases when the adversary is rich enough
(\Cref{lem:lmev:stability}).
\change{A2}{Formally, we prove that there exists a threshold adversary wallet $\WmvA[\Adv]$ yielding the same MEV as any richer adversary wallet.
Together with~\cref{lem:lmev-wallet:monotonicity} of~\Cref{lem:lmev-wallet},
this ensures $\WmvA[\Adv]$ yields the maximum MEV over any adversary wallet.}

\begin{lemma}[Stability]
  \label{lem:lmev:stability}
  For all $\CmvC$, $\CmvD$, $\cstC$, 
  there exists an adversary wallet $\WmvA[\Adv]$ such that
  \(
  \lmev{\CmvD}{\WmvA[\Adv] \mid \cstC}{\CmvC}  
  =
  \lmev{\CmvD}{\WmvAi[\Adv] \mid \cstC}{\CmvC}    
  \)  
  for all $\WmvAi[\Adv] \geq_{\$} \WmvA[\Adv]$.
\end{lemma}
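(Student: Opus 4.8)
The plan is to study the map $f(\WmvA[\Adv]) = \lmev{\CmvD}{\WmvA[\Adv] \mid \cstC}{\CmvC}$ as a function of the adversary wallet $\WmvA[\Adv]$ ordered by $\geq_{\$}$, and to show that it is monotone non-decreasing, bounded above by a constant independent of $\WmvA[\Adv]$, and attains its supremum at some wallet. Granting this, the lemma is immediate: if $\WmvA[\Adv]$ attains the supremum $M$, then for any $\WmvAi[\Adv] \geq_{\$} \WmvA[\Adv]$ monotonicity gives $M = f(\WmvA[\Adv]) \leq f(\WmvAi[\Adv]) \leq M$, so $f(\WmvAi[\Adv]) = M = f(\WmvA[\Adv])$, which is exactly the claimed stabilization.

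Monotonicity is \Cref{lem:lmev-wallet:monotonicity}: from $\WmvA[\Adv] \leq_{\$} \WmvAi[\Adv]$ we get $\WmvA[\Adv] \mid \cstC \leq_{\$} \WmvAi[\Adv] \mid \cstC$, hence $f(\WmvA[\Adv]) \leq f(\WmvAi[\Adv])$. For the upper bound, \Cref{lem:lmev:leq-wealth} gives $f(\WmvA[\Adv]) \leq \wealth{\CmvC}{\WmvA[\Adv] \mid \cstC}$; since the accounts in $\CmvC$ are contracts, their wealth is computed only from the contract component via the second summand of \eqref{eq:wealth}, and the user wallets $\WmvA[\Adv]$ do not contribute. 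Thus this bound equals a fixed $B = \wealth{\CmvC}{\cstC}$, finite by the finite tokens axiom and independent of $\WmvA[\Adv]$.

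The crux is showing that $M = \sup_{\WmvA[\Adv]} f(\WmvA[\Adv]) \leq B$ is attained. First, for every wallet the value $f(\WmvA[\Adv])$ is a maximum realised by a finite transaction sequence $\TxTS$, and such a sequence moves only a finite, integer amount of each token; hence it stays enabled and yields the identical loss under any wallet rich enough to fund it (using that enabled transactions produce the same effect in richer states). So $M$ is the supremum of the losses $-\gain{\CmvC}{\WmvA[\Adv] \mid \cstC}{\TxTS}$ over adequately funded sequences. To see that this supremum is a maximum, write the loss as $\sum_{\cmvC \in \CmvC, \tokT} (\text{initial} - \text{final})\,\price{\tokT}$: the tokens $\CmvC$ can shed are drawn from the finitely many types present in $\cstC$, each by an integer amount bounded by its initial holding, so the extracted value ranges over finitely many values; and since the residual wealth $\wealth{\CmvC}{\sysSi} = B - (\text{loss})$ is non-negative and at most $B$, the value of any tokens the attack must inject into $\CmvC$ is likewise bounded by $B$. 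The attainable losses therefore form a bounded, discrete subset of $[0,B]$; its supremum $M$ is achieved by some sequence $\TxTS$, and any wallet $\WmvA[\Adv]$ large enough to fund $\TxTS$ satisfies $f(\WmvA[\Adv]) = M$.

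The main obstacle is exactly this last step: ruling out that the adversary keeps squeezing ever more loss out of $\CmvC$ by deploying unboundedly large capital, i.e., that $M$ is an unattained limit. The essential reasons this cannot happen are that the total value recoverable from $\CmvC$ is capped by the fixed constant $B$, and that tokens are conserved (contracts cannot mint them), so the capital the adversary profitably routes through the callable contracts $\CmvD$ to drive $\CmvC$ into a minimal-wealth configuration is bounded in terms of the contracts' own finite holdings. The delicate point is that tokens injected into $\CmvC$ contribute negatively to the loss and could in principle involve many token types; the integrality and boundedness observations above are what confine the loss to a discrete set and so force the supremum to be attained.
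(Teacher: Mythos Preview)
Your proof follows essentially the same approach as the paper's: bound the local MEV above by $\wealth{\CmvC}{\cstC}$ (which is independent of the adversary wallet), argue that the achievable values are discrete so the supremum is a maximum, and then use monotonicity (\Cref{lem:lmev-wallet:monotonicity}) to conclude that every richer wallet yields the same value. The paper's version is considerably terser on the middle step: it simply observes that $\lmev{\CmvD}{\WmvA[\Adv]\mid\cstC}{\CmvC}$ is an \emph{integer} (token amounts lie in $\Nat$, so the loss is an integer combination of the finitely many relevant prices), and a set of integers bounded above has a maximum. Your decomposition into ``shed'' versus ``injected'' tokens captures the same idea but is slightly over-engineered, and the step ``the injected value is bounded by $B$, hence the losses form a discrete set'' is not quite a complete justification of discreteness on its own---boundedness does not imply discreteness without the integrality of token amounts and the finiteness of the relevant token types, which is really what is doing the work.
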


Taking the maximum MEV over all possible user wallets
(which always exists by~\Cref{lem:lmev:stability})
we then introduce a notion of MEV 
that does \emph{not} depend on user wallets at all (including $\Adv$'s).
The new notion, dubbed $\rlmev{}{}{}$, reflects the fact that
the capital required from the adversary is not a barrier in practice.
\Eg, as already noted in~\cite{Babel23clockwork},
the adversary can apply for a flash loan (a risk-free operation)
to obtain the capital needed to extract the full
\change{C1}{MEV}.%
\footnote{
\changeNoMargin{This is true for \emph{atomic} attacks, where the adversary extracts MEV through a single transaction. If extracting MEV requires multiple transactions, like \eg in sandwich attacks exploiting transactions in the mempool, flash loans are not possible.}}
Consequently, $\rlmev{}{}{}$ is 
the most robust estimation of the value extractable by the adversary.

\begin{definition}[Local MEV of wealthy adversaries]
  \label{def:rich-lmev}
  For all $\CmvC,\CmvD,\cstC$, let:
  \begin{equation}
    \label{eq:rich-lmev}
    \rlmev{\CmvD}{\cstC}{\CmvC}
    = \max_{\WmvA} \;
    \lmev{\CmvD}{\WmvA \mid \cstC}{\CmvC}
  \end{equation}
\end{definition}

Besides satisfying the basic properties of local MEV
seen before 
\iftoggle{arxiv}
{(\Cref{lem:rich-lmev})}
{(Lemma B.1 in~\cite{BMZ23defi})},
$\rlmev{}{}{}$ enjoys a key property:
to compute the $\rlmev{}{}{}$ of $\CmvC$ in~$\cstC$,
we can ignore all the contracts in $\cstC$ except the dependencies of $\CmvC$.
This is not true for $\lmev{}{}{}$, because
non-wealthy adversaries may need to
interact with other contracts to obtain the tokens needed to
extract value from $\CmvC$.
Formally, we define the \emph{stripping} of $\cstC$ \wrt $\CmvC$
(in symbols, $\strip{\cstC}{\CmvC}$)
as the restriction of $\cstC$ to the domain $\deps{\CmvC}$.
Note that when $\cstC$ is well-formed, then
also $\strip{\cstC}{\CmvC}$ is well-formed.
\Cref{th:rich-lmev:stripping} gives sufficient conditions under which
we can strip from $\cstC$ all the non-dependencies of $\CmvC$,
preserving $\rlmev{\CmvD}{\cstC}{\CmvC}$.
The first condition is that contract methods are not aware of
the identity of the $\sender$, being only able to use it
as a recipient of token transfers:
we refer to this by saying that contracts are \emph{sender-agnostic}
\iftoggle{arxiv}
{(see~\Cref{def:sender-agnostic}).}
{(see Def.~B.4 in~\cite{BMZ23defi}).}
The second condition ensures that $\CmvD$ contains enough contracts
to reproduce attacks in the stripped state.
 
\begin{theorem}[State stripping]
  \label{th:rich-lmev:stripping}
  \(
  \rlmev{\CmvD}{\cstC}{\CmvC} = \rlmev{\CmvD}{\strip{\cstC}{\CmvC}}{\CmvC}
  \)
  holds if the contracts
  $\CmvCi = \deps{\CmvC} \cap \deps{\CmvD \setminus \deps{\CmvC}}$ satisfy:
  \begin{inlinelist}

  \item \label{th:rich-lmev:stripping:1}
    $\CmvCi$ are sender-agnostic, and
    
  \item \label{th:rich-lmev:stripping:2}
    $\CmvCi \subseteq \CmvD$.
  \end{inlinelist}
  In particular, \ref{th:rich-lmev:stripping:2} holds if
  $\CmvD = \CmvU$ or $\CmvD = \CmvC$.
\end{theorem}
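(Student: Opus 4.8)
The plan is to prove the two inequalities separately. The inequality $\rlmev{\CmvD}{\strip{\cstC}{\CmvC}}{\CmvC} \leq \rlmev{\CmvD}{\cstC}{\CmvC}$ is the easy one: since $\strip{\cstC}{\CmvC}$ is the restriction of $\cstC$ to $\deps{\CmvC}$, the state $\cstC$ extends $\strip{\cstC}{\CmvC}$, so \Cref{lem:lmev:monotonicity} gives $\lmev{\CmvD}{\WmvA \mid \strip{\cstC}{\CmvC}}{\CmvC} \leq \lmev{\CmvD}{\WmvA \mid \cstC}{\CmvC}$ for every wallet $\WmvA$; taking the maximum over $\WmvA$ yields the claim. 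This direction needs neither condition \ref{th:rich-lmev:stripping:1} nor \ref{th:rich-lmev:stripping:2}.

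For the converse $\rlmev{\CmvD}{\cstC}{\CmvC} \leq \rlmev{\CmvD}{\strip{\cstC}{\CmvC}}{\CmvC}$, the plan is a simulation argument. By \Cref{lem:lmev:stability} the maximum defining $\rlmev{\CmvD}{\cstC}{\CmvC}$ is attained, so fix an optimal trace $\TxTS \in \mall{\CmvD}{\Adv}^*$ whose loss $-\gain{\CmvC}{\WmvA \mid \cstC}{\TxTS}$ equals $\rlmev{\CmvD}{\cstC}{\CmvC}$. The key observation is that $\deps{\CmvC}$ forms a closed world: since $\sqsubseteq$ is a partial order and dependencies of dependencies are dependencies, any call issued \emph{by} a contract in $\deps{\CmvC}$ again targets a contract in $\deps{\CmvC}$. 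Hence, during the execution of $\TxTS$, the sub-state on $\deps{\CmvC}$ can change only through calls that \emph{enter} $\deps{\CmvC}$ from outside, i.e.\ calls whose caller is the adversary or a contract not in $\deps{\CmvC}$ and whose callee lies in $\deps{\CmvC}$. I first show that every such entry call whose caller is not the adversary lands exactly on a contract of $\CmvCi$: if $\cmv{G} \notin \deps{\CmvC}$ calls $\cmv{F} \in \deps{\CmvC}$ inside the tree of a transaction with $\callee{\txT} \in \CmvD \setminus \deps{\CmvC}$, then $\cmv{G} \in \deps{\callee{\txT}}$ and so $\cmv{F} \in \deps{\cmv{G}} \subseteq \deps{\callee{\txT}} \subseteq \deps{\CmvD \setminus \deps{\CmvC}}$, whence $\cmv{F} \in \deps{\CmvC} \cap \deps{\CmvD \setminus \deps{\CmvC}} = \CmvCi$. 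This is exactly why $\CmvCi$ is defined as it is, and where the two hypotheses enter.

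I then build a trace $\TxTiS$ over $\strip{\cstC}{\CmvC}$ by walking through the execution of $\TxTS$ and, for each entry call into $\deps{\CmvC}$, emitting a direct adversary call to the entered contract with the same arguments and the same bundle of transferred tokens, in the original order; top-level adversary calls targeting $\deps{\CmvC} \cap \CmvD$ are kept verbatim, while all computation happening strictly outside $\deps{\CmvC}$ is discarded. Condition \ref{th:rich-lmev:stripping:2} ($\CmvCi \subseteq \CmvD$) guarantees these redirected calls are craftable by $\Adv$, so $\TxTiS \in \mall{\CmvD}{\Adv}^*$; condition \ref{th:rich-lmev:stripping:1} (the $\CmvCi$ are sender-agnostic) guarantees that replacing the original caller $\cmv{G}$ by $\Adv$ does not change the behaviour of the entered contract, since a sender-agnostic contract uses $\sender$ only as a transfer target. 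A straightforward induction on the number of entry calls then establishes the invariant that the sub-state on $\deps{\CmvC}$ evolves identically under $\TxTS$ in $\WmvA \mid \cstC$ and under $\TxTiS$ in $\strip{\cstC}{\CmvC}$: each redirected call feeds the same tokens into the same $\CmvCi$ contract and triggers the same internal (untouched) sub-tree within $\deps{\CmvC}$, and the only discrepancy, namely the recipient of the tokens returned to the caller, falls outside $\deps{\CmvC}$ and hence outside $\CmvC$. Since $\CmvC \subseteq \deps{\CmvC}$, the initial and final wallets of $\CmvC$ coincide in the two runs, so $-\gain{\CmvC}{\WmvA' \mid \strip{\cstC}{\CmvC}}{\TxTiS} = -\gain{\CmvC}{\WmvA \mid \cstC}{\TxTS} = \rlmev{\CmvD}{\cstC}{\CmvC}$ for a suitable wallet $\WmvA'$. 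Because the adversary must now directly fund the entry calls previously paid out of the balances of the removed non-dependency contracts, and $\TxTiS$ uses only finitely many tokens while $\rlmev{}{}{}$ maximises over \emph{all} wallets, such a rich $\WmvA'$ exists, giving $\rlmev{\CmvD}{\strip{\cstC}{\CmvC}}{\CmvC} \geq \lmev{\CmvD}{\WmvA' \mid \strip{\cstC}{\CmvC}}{\CmvC} \geq -\gain{\CmvC}{\WmvA' \mid \strip{\cstC}{\CmvC}}{\TxTiS} = \rlmev{\CmvD}{\cstC}{\CmvC}$.

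The main obstacle is making the entry-call extraction and the accompanying invariant rigorous: one must formalise the call-tree semantics of a transaction, isolate the maximal boundary where control passes into $\deps{\CmvC}$, and prove by induction that the $\deps{\CmvC}$-projection of the two executions stays in lock-step, carefully tracking token conservation so that the wealthy adversary can always supply the tokens previously held by non-dependency contracts and so that returned tokens never leak into $\CmvC$. The remaining ``in particular'' claim is immediate: if $\CmvD = \CmvU$ then $\CmvCi \subseteq \CmvU = \CmvD$ trivially, while if $\CmvD = \CmvC$ then $\CmvD \setminus \deps{\CmvC} = \CmvC \setminus \deps{\CmvC} = \emptyset$ (as $\CmvC \subseteq \deps{\CmvC}$), so $\CmvCi = \emptyset \subseteq \CmvD$ and both conditions hold vacuously.
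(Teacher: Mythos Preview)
Your proposal is correct and follows essentially the same approach as the paper's proof: the easy inequality via monotonicity of $\lmev{}{}{}$ in the contract state, and the hard inequality via a simulation that extracts the ``entry calls'' into $\deps{\CmvC}$, proves they land on $\CmvCi$ (the paper's Claim~1), lifts them to direct adversary transactions in $\mall{\CmvD}{\Adv}^*$ using condition~\ref{th:rich-lmev:stripping:2} (the paper's Claim~2), and invokes sender-agnosticism to argue the $\deps{\CmvC}$-projection of the state evolves identically. The only cosmetic difference is that the paper constructs the enriched adversary wallet $\WmvAi[\Adv]$ explicitly up front (by adding to each adversary alias the sum of all tokens transferred during the original run), whereas you defer this to the end and appeal to the $\max_{\WmvA}$ in the definition of $\rlmev{}{}{}$; both are fine.
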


If any of the conditions \ref{th:rich-lmev:stripping:1}
or~\ref{th:rich-lmev:stripping:2} do not hold,
then adversaries might not be able to perform an attack on 
\mbox{$\strip{\cstC}{\CmvC}$} with the same effect as an attack on $\cstC$,
and so they might extract less
\mbox{$\rlmev{}{}{}$ in $\strip{\cstC}{\CmvC}$}
than in $\cstC$
\iftoggle{arxiv}
{(see~\Cref{ex:rich-lmev:stripping:1-false,ex:rich-lmev:stripping:2-false}).}
{(see Ex.~B.2, B.3 in~\cite{BMZ23defi}).}


The following~\namecref{th:rich-lmev:future-not-affect-past}
of~\Cref{th:rich-lmev:stripping}
states another key property of $\rlmev{}{}{}$:
extending the state with new contracts does not affect the MEV
extractable from the old contracts.%
\footnote{This property relies on assumptions
(\eg, non-circularity of contract dependencies)
that hold in our blockchain model, but may not hold in some
concrete platforms. For instance, in Ethereum one can craft contracts
that exploit reentrancy to extract MEV from already existing contracts,
as in the infamous DAO attack~\cite{DAO}.}
This is the basis to prove that $\richnonint{}{}$ is resistant
to adversarial contracts (\Cref{cor:richnonint:front-running}).

\begin{corollary}
  \label{th:rich-lmev:future-not-affect-past}
  \(
  \rlmev{}{\cstC}{\CmvC} = \rlmev{}{\cstC \mid \cstD}{\CmvC}
  \)
  for all $\CmvC \subseteq \cmvOfcst{\cstC}$.
\end{corollary}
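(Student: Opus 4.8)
The plan is to obtain the corollary from the State stripping theorem (\Cref{th:rich-lmev:stripping}) by applying it, with $\CmvD = \CmvU$, to both $\cstC$ and $\cstC \mid \cstD$, and then observing that the two resulting stripped states coincide. Since $\rlmev{}{}{}$ abbreviates the case $\CmvD = \CmvU$, condition~\ref{th:rich-lmev:stripping:2} of the theorem holds automatically, so only its sender-agnosticism hypothesis~\ref{th:rich-lmev:stripping:1} on the boundary contracts $\CmvCi = \deps{\CmvC} \cap \deps{\CmvU \setminus \deps{\CmvC}}$ is required; this is inherited from \Cref{th:rich-lmev:stripping} and, crucially, refers to the \emph{same} set $\CmvCi$ for both invocations, because $\CmvCi$ is defined purely via the fixed dependency relation $\sqsubseteq$ and the universe $\CmvU$, hence does not depend on the ambient state.

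First I would prove the structural fact that $\deps{\CmvC} \subseteq \cmvOfcst{\cstC}$. By hypothesis $\CmvC \subseteq \cmvOfcst{\cstC}$, hence $\deps{\CmvC} \subseteq \deps{\cmvOfcst{\cstC}}$; and since $\cstC$ is well-formed, it is closed under dependencies, i.e.\ $\deps{\cmvOfcst{\cstC}} \subseteq \cmvOfcst{\cstC}$. The contracts $\cstD$ are deployed after those of $\cstC$, so no contract of $\CmvC$ can call into $\cstD$ and the set $\deps{\CmvC}$ is unchanged when passing from $\cstC$ to $\cstC \mid \cstD$. Because $\cmvOfcst{\cstD}$ is disjoint from $\cmvOfcst{\cstC} \supseteq \deps{\CmvC}$, restricting either state to $\deps{\CmvC}$ discards all of $\cstD$, giving
\[
\strip{(\cstC \mid \cstD)}{\CmvC} \;=\; \cstC\!\upharpoonright_{\deps{\CmvC}} \;=\; \strip{\cstC}{\CmvC}.
\]
With this identity in hand, two applications of \Cref{th:rich-lmev:stripping} close the argument:
\[
\rlmev{}{\cstC}{\CmvC} = \rlmev{}{\strip{\cstC}{\CmvC}}{\CmvC}
\quad\text{and}\quad
\rlmev{}{\cstC \mid \cstD}{\CmvC} = \rlmev{}{\strip{(\cstC \mid \cstD)}{\CmvC}}{\CmvC},
\]
and the equality of the two stripped states collapses both right-hand sides to the common value $\rlmev{}{\strip{\cstC}{\CmvC}}{\CmvC}$, yielding the thesis.

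I expect the first step to be the main obstacle, since it is where the model's structural assumptions are actually used: well-formedness of $\cstC$ together with the ``deployed-before'' partial order on $\sqsubseteq$ are precisely what forbid $\deps{\CmvC}$ from reaching into the newly added $\cstD$, and hence what make the two strippings literally identical. It is worth noting that the inequality $\rlmev{}{\cstC \mid \cstD}{\CmvC} \geq \rlmev{}{\cstC}{\CmvC}$ already follows from monotonicity (\cref{lem:lmev:monotonicity} of~\Cref{lem:lmev}, maximised over adversary wallets), so the real content of the corollary is the opposite inequality --- that fresh contracts cannot increase the loss extractable from the old ones --- which is exactly what the stripping theorem supplies.
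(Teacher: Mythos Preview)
Your proof is correct and matches the paper's intended derivation: the corollary is stated without proof as a consequence of \Cref{th:rich-lmev:stripping}, and applying that theorem with $\CmvD = \CmvU$ to both $\cstC$ and $\cstC \mid \cstD$, then noting that the two strippings coincide (since $\deps{\CmvC} \subseteq \cmvOfcst{\cstC}$ by well-formedness and the dependency order forbids $\CmvC$ from reaching into $\cstD$), is exactly the argument one fills in. Your observation that the sender-agnostic hypothesis on $\CmvCi$ is silently inherited from \Cref{th:rich-lmev:stripping} is correct and worth making explicit --- indeed \Cref{ex:rich-lmev:stripping:1-false} shows the equality can fail without it.
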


\Cref{th:rich-lmev:future-not-affect-past} highlights a drawback
of the $\epsilon$-composability in~\cite{Babel23clockwork}, which
compares the global MEV in $\sysS$
with that in $\sysS \mid \cstD$,
where new contracts $\cstD$ have been deployed.
\Cref{th:rich-lmev:future-not-affect-past} states that
for wealthy adversaries,
the two MEVs may only differ in the value extractable from $\cstD$,
and so $\cstD$ is $0$-composable with $\sysS$ only if $\cstD$ has zero MEV.
Hence, contracts that have intended MEV
are not composable for~\cite{Babel23clockwork},
even if they have no interactions at all with the context.


\section{MEV non-interference}
\label{sec:non-interference}

To formalise  contract composability,
we start by defining a relation \mbox{$\nonint{\sysS}{\cstD}$}
between blockchain states
\mbox{$\sysS = \WmvA \mid \cstC$} and contract states $\cstD$.
Intuitively, \mbox{$\nonint{\sysS}{\cstD}$} means that
the adversary cannot leverage $\sysS$ to extract more MEV from $\cstD$
than it would be possible by interacting with $\cstD$ alone.
We will then say that
$\sysS$ is \emph{MEV non-interfering} with $\cstD$.
Note that $\nonint{\sysS}{\cstD}$ may or may not hold depending on the
wealth of the adversary in $\sysS$,
as we have already observed in~\Cref{lem:lmev-wallet}
that MEV depends on the adversary's wallets.
As noted before, assuming bounds on the capital available to the adversary
may be unsafe, and consequently we introduced in~\Cref{def:rich-lmev}
a notion of MEV that does not make assumptions on the adversary's wealth.
Based on this notion,
we will study later in this section another relation
$\richnonint{\cstC}{\cstD}$, which holds when contracts $\cstC$ do not interfere
with the MEV extractable from $\cstD$ \emph{regardless} of the
adversary's wealth.

Formally, MEV non-interference $\nonint{\sysS}{\cstD}$
holds when the MEV extractable from the contract accounts in $\cstD$
(\ie, $\cmvOfcst{\cstD}$) using \emph{any} contract in $\sysS \mid \cstD$
is exactly the same MEV that can be extracted using \emph{only}
the contracts in $\cstD$.
We write $\negnonint{\sysS}{\cstD}$ when $\nonint{\sysS}{\cstD}$ does not hold.%
\footnote{%
Note that these definitions only apply when $\sysS \mid \cstD$
is a well-formed blockchain state.}

\begin{definition}[MEV non-interference]
  \label{def:non-interference}
  A state $\sysS$ is \emph{$\lmev{}{}{}$ non-interfering} with $\cstD$,
  in symbols $\nonint{\sysS}{\cstD}$, when
  \(
  \lmev{}{\sysS \mid \cstD}{\cmvOfcst{\cstD}} = \lmev{\cmvOfcst{\cstD}}{\sysS \mid \cstD}{\cmvOfcst{\cstD}}
  \).%
  \footnote{
  Note that $\geq$ always holds by~\Cref{lem:lmev},
  so the definition only requires to check $\leq$. 
  }
\end{definition}

The following example discriminates MEV non-interference
from Babel \emph{et al.}' $\epsilon$-composability,
showing that a contract with intended MEV but no interactions with the
context enjoys MEV non-interference, but it is not $\epsilon$-composable.

\begin{figure}[t]
  \begin{lstlisting}[language=txscript,morekeywords={Airdrop,withdraw,Exchange,swap,getRate,setRate,getTokens},classoffset=4,morekeywords={a,b,A,Oracle},keywordstyle=\pmvColor,classoffset=4,morekeywords={t,t1,t2,T,tin,tout},keywordstyle=\tokColor,frame=single]
contract (*$\contract{Airdrop}$*) {
  constructor(?x:t) { tout=t }     // deposit any token t
  withdraw() { sender!#tout:tout } // any user withdraws
}
contract (*$\contract{Exchange}$*) {
  constructor(?x:t1,t2,r) {
    require r>0; rate=r; tout=t1; tin=t2; owner=origin }
  getTokens() { return (tin,tout) }
  getRate() { return rate }
  setRate(newRate) { require origin==owner; rate=newRate }  
  swap(?x:t) {               // receives x units of tin 
    require t==tin && #tout>=x*rate;
    sender!x*rate:tout }     // sends x*rate units of tout
}
  \end{lstlisting}
  \negcaptionspace
  \caption{An airdrop and an exchange contract.}
  \label{fig:airdrop}
  \label{fig:exchange}  
\end{figure}

\begin{example}
  \label{ex:nonint:airdrop}
  Consider the airdrop contract in~\Cref{fig:airdrop},
  let $\sysS$ be any blockchain state, and let
  $\cstD = \walpmv{\contract{Airdrop}}{\waltok{n}{\tokT},\code{tout}=\tokT}$.
  Babel \emph{et al.}' $\epsilon$-composability
  requires $\mev{}{\sysS \mid \cstD}{} \leq (1+\epsilon)\mev{}{\sysS}{}$.
  In \mbox{$\sysS \mid \cstD$}, the adversary can extract
  $n:\tokT$ from $\contract{Airdrop}$, and possibly use these tokens
  to extract more MEV from~$\sysS$.
  Therefore, 
  $\mev{}{\sysS \mid \cstD}{} \geq n \cdot \price{\tokT} + \mev{}{\sysS}{}$,
  meaning that, for large enough $n$,
  the airdrop is \emph{not} $\epsilon$-composable with $\sysS$.
  By contrast,
  \(
  \lmev{}{\sysS \mid \cstD}{\setenum{\contract{Airdrop}}}
  =
  n \cdot \price{\tokT}
  =
  \lmev{\setenum{\contract{Airdrop}}}{\sysS \mid \cstD}{\setenum{\contract{Airdrop}}}
  \),
  and so $\nonint{\sysS}{\cstD}$.
  This correctly reflects the fact that
  using the contracts in $\sysS$ does not give $\Adv$
  any way to damage the airdrop.
  \hfill\qedex
\end{example}

We now discuss contract conditions that may break MEV non-interference.
Clearly, \emph{contract dependencies}
(\ie, a contract in $\cstD$ that calls a contract in $\sysS$)
are a possible cause of MEV interferences $\negnonint{\sysS}{\cstD}$.
\Cref{ex:nonint:pricebet} shows the issue through a classical DeFi composition,
where a bet contract uses an AMM as a price oracle~\cite{Babel23clockwork}.
Another source of MEV interference is when the contracts
in $\sysS$ and $\cstD$ have \emph{token dependencies}
(\ie, a contract in $\cstD$ outputs tokens which can be used
as input to contract in $\sysS$, or vice-versa).
\Cref{ex:nonint:token-dependency} shows that this can cause
MEV interferences, even when $\sysS$ and $\cstD$ have no contract dependencies.


\begin{figure}[t]
  \begin{lstlisting}[language=txscript,morekeywords={bet,win,close,getTokens,getRate},classoffset=4,morekeywords={a,b,A},keywordstyle=\pmvColor,classoffset=5,morekeywords={t,ETH},keywordstyle=\tokColor,classoffset=6,morekeywords={Bet,oracle},keywordstyle=\cmvColor,frame=single]
contract (*$\contract[oracle]{Bet}$*) {
  constructor(?x:ETH,t,r,d) {
    require t!=ETH && oracle.getTokens()==(ETH,t);
    tok=t; rate=r; owner=origin; deadline=d }
  bet(?x:ETH) {
    require player==null && x==#ETH;
    player=origin }
  win() {
    require block.num<=deadline && origin==player;
    require oracle.getRate(ETH)>rate;
    player!#ETH:ETH }
  close() {
    require block.num>deadline && origin==owner;
    owner!#ETH:ETH }
}
  \end{lstlisting}
  \negcaptionspace
  \caption{A bet contract relying on an external price oracle.}
  \label{fig:pricebet}
\end{figure}

\begin{example}
  \label{ex:nonint:pricebet}
  The $\contract{Bet}$ contract in~\Cref{fig:pricebet}
  allows anyone to bet on the exchange rate between a token and $\ETH$.
  The contract is parameterised over an $\contract{oracle}$ contract
  that is queried for the token price.
  $\contract{Bet}$ receives the initial pot from the owner upon deployment.
  To join, a player must pay an amount of $\ETH$ equal to the pot.
  Before the deadline,
  the pot can be withdrawn by the player
  if the oracle exchange rate is greater than the bet rate.
  After the deadline, the pot goes to the owner.
  Consider an instance of $\contract{Bet}$ using an $\contract{AMM}$
  as price oracle, and let:
  \begin{align*}
  \sysS & = 
  \walu{\pmvM}{310}{\ETH} \mid
  \walpmv{\contract{AMM}}{\waltok{600}{\ETH},\waltok{600}{\tokT}} \mid
  \code{block.num}=n-k \mid
  \cdots
  \\
  \cstD & = \walpmv{\contract{Bet}}{\waltok{10}{\ETH},\code{tok}=\tokT,\code{rate}=3,\code{owner}=\pmvA,\code{deadline}=n}
  \end{align*}
  Note that the current oracle exchange rate is $1$,
  while winning the bet requires to make it exceed $3$.
  Since the deadline has not passed in $\sysS$
  ($\code{block.num}=n-k < n$) and the adversary $\pmvM$
  is rich enough, she can fire the following sequence:
  \begin{align*}
    \sysS \mid \cstD
    & \xrightarrow{\pmvM:\contract{Bet}.\txcode{bet}(\txin{}{10}{\ETH})}
    && \!\!\!\walu{\pmvM}{300}{\ETH} \mid
    \walpmv{\contract{AMM}}{\waltok{600}{\ETH},\waltok{600}{\tokT}} \mid
    \walpmv{\contract{Bet}}{\waltok{20}{\ETH},\cdots}
    \\
    & \xrightarrow{\pmvM:\contract{AMM}.\txcode{swap}(\txin{}{300}{\ETH},0)}
    && \!\!\!\walu{\pmvM}{200}{\tokT} \mid
    \walpmv{\contract{AMM}}{\waltok{900}{\ETH},\waltok{400}{\tokT}} \mid
    \walpmv{\contract{Bet}}{\waltok{20}{\ETH},\cdots}
    \\
    & \xrightarrow{\pmvM:\contract{Bet}.\txcode{win}()}
    && \!\!\!\walpmv{\pmvM}{\waltok{20}{\ETH},\waltok{200}{\tokT}} \mid
    \walpmv{\contract{AMM}}{\waltok{900}{\ETH},\waltok{400}{\tokT}} \mid
    \walpmv{\contract{Bet}}{\waltok{0}{\ETH},\cdots}
    \\
    & \xrightarrow{\pmvM:\contract{AMM}.\txcode{swap}(\txin{}{200}{\tokT},0)}
    && \!\!\!\walu{\pmvM}{320}{\ETH} \mid
    \walpmv{\contract{AMM}}{\waltok{600}{\ETH},\waltok{600}{\tokT}} \mid
    \walpmv{\contract{Bet}}{\waltok{0}{\ETH},\cdots}
  \end{align*}
  When $\pmvM$ can interact with $\contract{Bet}$'s dependency $\contract{AMM}$,
  the loss of $\contract{Bet}$ is $10:\ETH$, hence
  $\lmev{}{\sysS \mid \cstD}{\setenum{\contract{Bet}}} = 10 \cdot \price{\ETH}$.
  Instead, when $\pmvM$ can only use $\contract{Bet}$, 
  $\lmev{\setenum{\contract{Bet}}}{\sysS \mid \cstD}{\setenum{\contract{Bet}}} = 0$.
  Therefore,
  $\sysS$ is MEV interfering with the $\contract{Bet}$ contract.
  Note that a poorer $\pmvM$ may not have enough $\ETH$
  to produce the short-term volatility in the oracle exchange rate.
  Later in~\Cref{def:rich-non-interference} we will introduce a notion of
  MEV non-interference that does not depend on $\pmvM$'s wealth.
  \hfill\qedex
\end{example}

\begin{example}
  \label{ex:nonint:token-dependency}
  To show that MEV interference can happen even in the absence of
  contract dependencies,
  consider $\contract{Airdrop}$ and $\contract{Exchange}$
  in \Cref{fig:exchange}, 
  and let:
  \begin{align*}
    \sysS & =
    \walu{\pmvM}{0}{\tokT} \mid
    \walpmv{\contract{Airdrop}}{\waltok{1}{\tokT},\code{tout}=\tokT}
    \\
    \cstD & =
    \walpmv{\contract{Exchange}}{\waltok{10}{\ETH},\code{tin}=\tokT,\code{tout}=\ETH,\code{rate}=10,\code{owner}=\pmvB}
  \end{align*}
  The unrestricted MEV of $\contract{Exchange}$ is
  $10 \cdot \price{\ETH}$, since $\pmvM$ can first extract $1:\tokT$
  from the airdrop, and then use the exchange, draining $10 \cdot \price{\ETH}$.
  Instead, its restricted MEV is zero,
  since $\pmvM$ cannot obtain the needed $1:\tokT$.
  Hence, $\negnonint{\sysS}{\cstD}$.
  \hfill\qedex
\end{example}

\Cref{th:nonint:sufficient-conditions} devises sufficient conditions
for MEV non-interference.
Condition~\ref{th:nonint:sufficient-conditions:zero-mev} states that
\mbox{$\nonint{\sysS}{}{\cstD}$} holds whenever the new contracts $\cstD$
have zero MEV:
a special case is when $\cstD$ have no tokens,
by \Cref{lem:lmev}\eqref{lem:lmev:leq-wealth}.
Condition~\ref{th:nonint:sufficient-conditions:no-deps}
requires contract and token independence.
Formally, $\cstC$ and $\cstD$ are \emph{contract independent}
when their dependencies are disjoint, \ie
$\deps{\cstC} \cap \deps{\cstD} = \emptyset$,
and they are \emph{token independent} in $\sysS$
when the token types that can be received by $\cstC$ in $\sysS$
are disjoint from those that can be sent by $\cstD$ in $\sysS$,
and vice-versa
\iftoggle{arxiv}
{(see~\Cref{def:token-independence}).}
{(see Definition B.5 in~\cite{BMZ23defi}.)}
For instance, in~\Cref{ex:nonint:token-dependency}
the $\contract{Airdrop}$ has no input tokens,
and it has $\tokT$ as output token,
while $\contract{Exchange}$ has $\tokT$ as input token and
$\ETH$ as output token.
Since $\tokT$ is both in the outputs of $\contract{Airdrop}$ and in
the inputs of $\contract{Exchange}$, the two contracts are \emph{not}
token independent.
%
Condition~\ref{th:nonint:sufficient-conditions:adv-inert}
relaxes condition~\ref{th:nonint:sufficient-conditions:no-deps},
allowing contract dependence provided that
the dependencies of $\cstD$ occurring in $\cstC$
cannot be exploited by adversaries.
Formally, we require that $\cstD$ is \emph{stable \wrt moves of $\Adv$ on $\cstC$}, \ie any transaction craftable by $\Adv$ and targeting a contract in $\cstC$ does not affect the observable behaviour of methods of $\cstC$ called from $\cstD$.
The observable behaviour includes the returned values, the transferred tokens and the aborts of methods that can be called from $\cstD$. 



\begin{theorem}[Sufficient conditions for $\nonint{}{}$]
  \label{th:nonint:sufficient-conditions}
  Let \mbox{$\sysS = \WmvA \mid \cstC$}.
  Each of the following conditions implies $\nonint{\sysS}{\cstD}$:
  \begin{inlinelist}[(1)]

  \item \label{th:nonint:sufficient-conditions:zero-mev}
    $\lmev{}{\sysS \mid \cstD}{\cmvOfcst{\cstD}} = 0$

  \item \label{th:nonint:sufficient-conditions:no-deps}
    $\cstC$ and $\cstD$ are token independent in \mbox{$\sysS \mid \cstD$}
    and contract independent

  \item \label{th:nonint:sufficient-conditions:adv-inert}
    $\cstC$ and $\cstD$ are token independent in \mbox{$\sysS \mid \cstD$} and $\cstD$ is stable \wrt moves of $\Adv$ on $\cstC$.
  \end{inlinelist}
\end{theorem}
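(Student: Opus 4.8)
The plan is to treat the three implications in turn, in every case reducing the equality asserted by \Cref{def:non-interference} to the single inequality
$\lmev{}{\sysS \mid \cstD}{\cmvOfcst{\cstD}} \leq \lmev{\cmvOfcst{\cstD}}{\sysS \mid \cstD}{\cmvOfcst{\cstD}}$,
since the reverse inequality holds for free by \Cref{lem:lmev:L-leq-H} (enlarging the callable set from $\cmvOfcst{\cstD}$ to $\CmvU$ cannot decrease the local MEV). Condition~\ref{th:nonint:sufficient-conditions:zero-mev} is then immediate: its hypothesis makes the left-hand side $0$, while the right-hand side is non-negative by \Cref{lem:lmev:leq-wealth}.

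For conditions~\ref{th:nonint:sufficient-conditions:no-deps} and~\ref{th:nonint:sufficient-conditions:adv-inert} the strategy is a \emph{projection} argument. I would fix an optimal unrestricted attack $\TxTS \in \mall{}{\Adv}^*$, i.e. one with $-\gain{\cmvOfcst{\cstD}}{\sysS \mid \cstD}{\TxTS} = \lmev{}{\sysS \mid \cstD}{\cmvOfcst{\cstD}}$, and let $\TxTiS$ be the subsequence of $\TxTS$ retaining exactly the transactions whose direct target lies in $\cmvOfcst{\cstD}$, so that $\TxTiS \in \mall{\cmvOfcst{\cstD}}{\Adv}^*$ is a restricted attack. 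The goal is to show that firing $\TxTiS$ from $\sysS \mid \cstD$ inflicts on $\cmvOfcst{\cstD}$ the same loss as $\TxTS$; this yields $\lmev{\cmvOfcst{\cstD}}{\sysS \mid \cstD}{\cmvOfcst{\cstD}} \geq -\gain{\cmvOfcst{\cstD}}{\sysS \mid \cstD}{\TxTiS} = \lmev{}{\sysS \mid \cstD}{\cmvOfcst{\cstD}}$, which is the required inequality.

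The core is a simulation invariant between the full run of $\TxTS$ and the restricted run of $\TxTiS$, carried by induction along $\TxTS$: after each prefix, (a) the state and wallet of $\cstD$ agree in the two runs, (b) the observable behaviour (returned values, transferred tokens, aborts) of the methods of $\cstC$ callable from $\cstD$ agrees, and (c) the adversary's balance in the restricted run dominates, on every token type that $\cstD$ can receive, its balance in the full run. The inductive step is a case split on the next transaction of $\TxTS$. If it targets $\cstC$ it is dropped in the restricted run; since $\sysS$ is well-formed we have $\deps{\cstC} \subseteq \cmvOfcst{\cstC}$, so no contract of $\cstC$ calls one of $\cstD$ and (a) is untouched; the stability of $\cstD$ \wrt moves of $\Adv$ on $\cstC$ preserves (b); and since token independence forbids $\cstC$ to emit tokens that $\cstD$ can receive, on the $\cstD$-input types the dropped transaction could only have \emph{removed} tokens from the adversary, so omitting it preserves the dominance~(c). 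If it targets $\cstD$ it is kept in both runs: by (c) and the fact that an enabled transaction has the same effect in a richer state it remains valid, and since contracts cannot inspect wallets and by (b) every internal call it makes into $\cstC$ behaves identically, it has the same effect on $\cstD$, re-establishing (a) and (b); it changes the adversary's $\cstD$-input balance by the same amount in both runs, so (c) persists. Finally, because $\cstC$-targeting transactions never touch $\cstD$, the whole loss of $\cmvOfcst{\cstD}$ is produced by the retained transactions, and invariant~(a) at the end of the run equates the two losses. Condition~\ref{th:nonint:sufficient-conditions:no-deps} is the special case where contract independence forces $\cstD$ to call no method of $\cstC$ (a dependency of $\cstD$ in $\cstC$ would lie in $\deps{\cstC} \cap \deps{\cstD} = \emptyset$), making clause~(b) and the appeal to stability vacuous.

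I expect the main obstacle to be making clause~(b) precise. A call from $\cstD$ into $\cstC$ and a dropped $\Adv$-transaction on $\cstC$ both mutate $\cstC$'s store, so the two runs genuinely diverge on $\cstC$, and the argument must confine this divergence to the part of $\cstC$ invisible to $\cstD$. This is exactly the role of the stability hypothesis, but turning ``does not affect the observable behaviour'' into an invariant stable under arbitrary interleavings of adversarial moves and $\cstD$-induced calls requires formulating stability as a property of all reachable $\cstC$-states and checking that each call from $\cstD$ lands in a configuration where the hypothesis still applies. By comparison, the wallet bookkeeping of clause~(c) is routine once token independence is in place.
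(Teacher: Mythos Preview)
Your proposal is correct and matches the paper's proof essentially step for step: condition~\ref{th:nonint:sufficient-conditions:zero-mev} is dispatched by sandwiching the restricted MEV between $0$ and the unrestricted MEV; conditions~\ref{th:nonint:sufficient-conditions:no-deps} and~\ref{th:nonint:sufficient-conditions:adv-inert} are handled by the same projection argument (take an optimal unrestricted $\TxTS$, drop the $\cstC$-targeting transactions, use stability to keep $\cstD$'s evolution unchanged and token independence to keep the adversary funded), and~\ref{th:nonint:sufficient-conditions:no-deps} is reduced to~\ref{th:nonint:sufficient-conditions:adv-inert} via the observation that contract independence makes stability vacuous. Your explicit invariants (a)--(c) and the discussion of where the stability hypothesis must do real work are more detailed than the paper's writeup, but the underlying argument is the same.
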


Note that neither contract independence nor token independence
nor stability
are necessary conditions for $\nonintrel$.
For instance, a contract that provides the best swap between two AMMs
(see~\iftoggle{arxiv}{\Cref{fig:bestswap}}{Figure A.1 in~\cite{BMZ23defi}})
has both contract dependencies
(on the two called AMMs) and token dependencies (their underlying tokens),
but nonetheless it enjoys MEV non-interference, because it has zero MEV
(indeed, the contract balance is always zero, so there is nothing to extract)%
\footnote{
By contrast, zero-MEV is \emph{not} a sufficient condition for
Babel \emph{et al.}' $\epsilon$-composability:
\eg, a contract may not be $0$-composable in $\sysS$
when $\Adv$ does not have the tokens needed to extract MEV
in $\sysS$, but becomes able to do so
after exchanging tokens between $\sysS$ and $\cstD$, provided that
this preserves $\cstD$'s wealth
\iftoggle{arxiv}
{(see~\Cref{cex:th:nonint:sufficient-conditions:zero-mev:babel}).}
{(see Example B.4 in~\cite{BMZ23defi}).}
}.


\paragraph{MEV non-interference against wealthy adversaries}

In general,
\mbox{$\nonint{\sysS}{\cstD}$} does not imply
\mbox{$\nonint{\sysS \mid \cstC}{\cstD}$},
even if $\cstC$ and $\cstD$ are contract independent.
This means that the notion studied so far
(similarly to $\epsilon$-composability of~\cite{Babel23clockwork})
gives no security guarantees against attacks
where adversaries manage to deploy contracts $\cstC$ \emph{before} $\cstD$.
For instance,
consider the contract
\mbox{$\cstC = \walpmv{\contract{Airdrop}}{\waltok{1}{\tokT},\code{tout}=\tokT}$} and the contract
\mbox{$\cstD = \walpmv{\contract[AMM]{Bet}}{\waltok{10}{\ETH},\code{tok}=\tokT,\code{rate}=3,\code{owner}=\pmvA,\code{deadline}=n}$},
and let $\sysS$ be empty.
Since $\pmvM$ has $0$ tokens in $\sysS$,
she cannot extract MEV from $\cstD$, hence $\nonint{\sysS}{}{\cstD}$
by condition~\ref{th:nonint:sufficient-conditions:zero-mev} of
\Cref{th:nonint:sufficient-conditions}.
Instead, $\negnonint{\sysS \mid \cstC}{}{\cstD}$,
as shown in~\Cref{ex:nonint:token-dependency}.
This highlights a usability limitation of $\nonint{}{}$:
assume that a user detects that $\nonint{\sysS}{\cstD}$,
and then sends a transaction to deploy the contracts $\cstD$.
If this transaction is front-run with another transaction
that deploys $\cstC$, then the new state $\sysS \mid \cstC$
could violate MEV non-interference with $\cstD$.
To overcome this limitation, \Cref{def:rich-non-interference}
provides a notion of MEV non-interference that is robust 
\wrt the adversary's wealth and enjoys resistance
to adversarial contracts.

\begin{definition}[MEV non-interference against wealthy adversaries]
  \label{def:rich-non-interference}
  A contract state $\cstC$ is \emph{$\rlmev{}{}{}$ non-interfering} with $\cstD$,
  in symbols $\richnonint{\cstC}{\cstD}$, when
  \[
  \rlmev{}{\cstC \mid \cstD}{\cmvOfcst{\cstD}}
  =
  \rlmev{\cmvOfcst{\cstD}}{\cstC \mid \cstD}{\cmvOfcst{\cstD}}
  \]
\end{definition}

\Cref{lem:richnonint-nonint} characterizes
$\richnonint{}{}$ in terms of $\nonint{}{}$:
intuitively, $\richnonint{\cstC}{\cstD}$ holds whenever
$\nonint{\WmvA \mid \cstC}{\cstD}$ holds for rich enough adversaries'
wallets $\WmvA$.%

\begin{lemma}[$\richnonint{}{}$ vs. $\nonint{}{}$]
  \label{lem:richnonint-nonint}
  $\richnonint{\cstC}{\cstD}$ if and only if
  $\exists \WmvA[0] . \ \forall \WmvA \geq_{\$} \WmvA[0] . \ \nonint{\WmvA \mid \cstC}{\cstD}$
\end{lemma}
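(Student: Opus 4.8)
The plan is to unfold both definitions so that the biconditional becomes a statement about two real-valued functions of the adversary's wallet, and then exploit monotonicity and stability to locate where their maxima are attained. Write $\cmvOfcst{\cstD}$ for the observed contracts and, recalling that $\lmev{}{}{} = \lmev{\CmvU}{}{}$, abbreviate the two relevant local-MEV quantities as functions of the user wallet $\WmvA$:
\[
f(\WmvA) = \lmev{}{\WmvA \mid \cstC \mid \cstD}{\cmvOfcst{\cstD}},
\qquad
g(\WmvA) = \lmev{\cmvOfcst{\cstD}}{\WmvA \mid \cstC \mid \cstD}{\cmvOfcst{\cstD}}.
\]
By \Cref{def:rich-lmev} the relation $\richnonint{\cstC}{\cstD}$ is precisely $\max_{\WmvA} f(\WmvA) = \max_{\WmvA} g(\WmvA)$, while by \Cref{def:non-interference} the relation $\nonint{\WmvA \mid \cstC}{\cstD}$ is precisely $f(\WmvA) = g(\WmvA)$. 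Since $\cmvOfcst{\cstD} \subseteq \CmvU$, \Cref{lem:lmev}\eqref{lem:lmev:L-leq-H} gives $g(\WmvA) \le f(\WmvA)$ for every $\WmvA$, so the pointwise equality $f(\WmvA)=g(\WmvA)$ is equivalent to the single inequality $f(\WmvA)\le g(\WmvA)$; this observation is not strictly needed but clarifies that both $\nonintrel$ and $\richnonintrel$ are "$\le$ everywhere/at the maximum" statements.

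The key structural step is to show that $f$ and $g$ both attain their maxima on a common cofinal family of wallets. By \Cref{lem:lmev-wallet}\eqref{lem:lmev-wallet:zero-user} only the adversary's wallet influences these values, so we may regard $f$ and $g$ as functions of the adversary wallet. By \Cref{lem:lmev-wallet}\eqref{lem:lmev-wallet:monotonicity} both are monotone with respect to $\geq_{\$}$, and by \Cref{lem:lmev:stability} each is eventually stationary: there exist threshold wallets $\WmvA[f]$ and $\WmvA[g]$ with $f(\WmvA)=f(\WmvA[f])$ for all $\WmvA \geq_{\$} \WmvA[f]$ and $g(\WmvA)=g(\WmvA[g])$ for all $\WmvA \geq_{\$} \WmvA[g]$. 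Combining stationarity with monotonicity (exactly the remark following \Cref{lem:lmev:stability}) yields $f(\WmvA[f]) = \max_{\WmvA} f(\WmvA)$ and $g(\WmvA[g]) = \max_{\WmvA} g(\WmvA)$. I would then fix a single wallet $\WmvA[0]$ dominating both thresholds, e.g. the pointwise maximum of $\WmvA[f]$ and $\WmvA[g]$, which exists because $\geq_{\$}$ is the componentwise order on token counts; for every $\WmvA \geq_{\$} \WmvA[0]$ we have simultaneously $f(\WmvA) = \max_{\WmvA'} f(\WmvA')$ and $g(\WmvA) = \max_{\WmvA'} g(\WmvA')$.

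With this in hand both implications are short. For the ``$\Leftarrow$'' direction, assume $\nonint{\WmvA \mid \cstC}{\cstD}$ for all $\WmvA \geq_{\$} \WmvA[0]'$ for some $\WmvA[0]'$; choosing any $\WmvA$ that dominates both $\WmvA[0]'$ and $\WmvA[0]$ gives $\max f = f(\WmvA) = g(\WmvA) = \max g$, i.e. $\richnonint{\cstC}{\cstD}$. For the ``$\Rightarrow$'' direction, assume $\max f = \max g$; then for every $\WmvA \geq_{\$} \WmvA[0]$ we get $f(\WmvA) = \max f = \max g = g(\WmvA)$, which is exactly $\nonint{\WmvA \mid \cstC}{\cstD}$, witnessed by $\WmvA[0]$. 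The only genuine obstacle is the middle step of aligning the two different restriction sets $\CmvU$ and $\cmvOfcst{\cstD}$: they give thresholds $\WmvA[f]$ and $\WmvA[g]$ that are a priori different, and one must argue that a common dominating wallet makes both local MEVs reach their (global) maximum at once. Everything else is routine unfolding of the definitions.
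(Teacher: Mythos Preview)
Your proof is correct and follows essentially the same approach as the paper: both arguments invoke stability (\Cref{lem:lmev:stability}) separately for the restricted and unrestricted local MEV to obtain two thresholds, take a common dominating wallet $\WmvA[0]$, and then observe that above $\WmvA[0]$ the pointwise equality $f(\WmvA)=g(\WmvA)$ coincides with the equality of maxima. The paper's proof is more terse (presenting the equivalence as a chain of iff's), but the underlying idea and the lemmas used are identical.
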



The following~\namecref{th:richnonint:sufficient-conditions}
refines \Cref{th:nonint:sufficient-conditions}
giving sufficient conditions for $\richnonint{}{}$.
Note that token independence is no longer required.

\begin{theorem}[Sufficient conditions for $\richnonint{}{}$]
  \label{th:richnonint:sufficient-conditions}
  Each of the following conditions implies
  \mbox{$\richnonint{\cstC}{\cstD}$}: \
  \begin{inlinelist}[(1)]

  \item \label{th:richnonint:sufficient-conditions:zero-mev}
    $\rlmev{}{\cstC \mid \cstD}{\cmvOfcst{\cstD}} = 0$;

  \item \label{th:richnonint:sufficient-conditions:no-deps}
    $\cstC$ and $\cstD$ are contract independent;

  \item \label{th:richnonint:sufficient-conditions:adv-inert}
    $\cstD$ is stable \wrt moves of $\Adv$ on $\cstC$.
    
  \end{inlinelist}
\end{theorem}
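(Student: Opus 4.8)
The plan is to treat the three sufficient conditions separately, noting first that in every case the inequality $\rlmev{}{\cstC \mid \cstD}{\cmvOfcst{\cstD}} \geq \rlmev{\cmvOfcst{\cstD}}{\cstC \mid \cstD}{\cmvOfcst{\cstD}}$ comes for free: it is the $\rlmev{}{}{}$-analogue of monotonicity in the callable set (\Cref{lem:lmev:L-leq-H}, lifted to $\rlmev{}{}{}$ by \Cref{lem:rich-lmev}), applied to $\cmvOfcst{\cstD} \subseteq \CmvU$. Hence for each condition it suffices to establish the reverse inequality. Condition~\ref{th:richnonint:sufficient-conditions:zero-mev} is then immediate: by non-negativity of local MEV (\Cref{lem:lmev:leq-wealth}, again lifted by \Cref{lem:rich-lmev}) we have $0 \leq \rlmev{\cmvOfcst{\cstD}}{\cstC \mid \cstD}{\cmvOfcst{\cstD}} \leq \rlmev{}{\cstC \mid \cstD}{\cmvOfcst{\cstD}} = 0$, so both sides vanish.

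For condition~\ref{th:richnonint:sufficient-conditions:no-deps} I would lean on the State Stripping theorem (\Cref{th:rich-lmev:stripping}). The first step is purely combinatorial: from reflexivity of $\sqsubseteq$, well-formedness of $\cstC \mid \cstD$, and contract independence $\deps{\cstC} \cap \deps{\cstD} = \emptyset$, one derives $\deps{\cstD} = \cmvOfcst{\cstD}$ and $\deps{\cstC} \subseteq \cmvOfcst{\cstC}$, whence $\strip{(\cstC \mid \cstD)}{\cmvOfcst{\cstD}} = \cstD$. After collapsing the callable set to the contracts actually present (\Cref{lem:lmev:garbage}), I would check that the critical set $\CmvCi$ of \Cref{th:rich-lmev:stripping} is empty on both sides: for the unrestricted side $\CmvCi = \deps{\cstD} \cap \deps{\cstC} = \emptyset$, and for the restricted side the ``$\CmvD = \CmvC$'' special case of \Cref{th:rich-lmev:stripping} applies. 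Thus the sender-agnosticism hypothesis is vacuous, stripping rewrites both sides as $\rlmev{}{}{}$ on the state $\cstD$, and these agree because $\cmvOfcst{\cstD}$ is the entire contract domain of $\cstD$.

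The delicate case is condition~\ref{th:richnonint:sufficient-conditions:adv-inert}, where contract dependence is allowed and stripping cannot discard the fragment of $\cstC$ on which $\cstD$ depends (that fragment need only be stable, not sender-agnostic). Here I would argue by a direct simulation. Picking a wallet $\WmvA$ and an attack $\TxTS$ that realise $\rlmev{}{\cstC \mid \cstD}{\cmvOfcst{\cstD}}$ (a maximiser exists by \Cref{lem:lmev:stability}), I would delete from $\TxTS$ every transaction whose direct callee lies in $\cstC$, producing a trace $\TxTS'$ in $\mall{\cmvOfcst{\cstD}}{\Adv}^*$. The deleted transactions never touch $\cstD$, since by the deployment order contracts in $\cstC$ cannot call $\cstD$; and by stability each surviving call from $\cstD$ into $\cstC$ returns the same values, transfers the same tokens, and aborts identically no matter how $\Adv$ manipulated $\cstC$. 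Consequently every surviving $\cstD$-transaction has the same effect on the wallet of $\cstD$, so $-\gain{\cmvOfcst{\cstD}}{\WmvA \mid \cstC \mid \cstD}{\TxTS'} = -\gain{\cmvOfcst{\cstD}}{\WmvA \mid \cstC \mid \cstD}{\TxTS}$, yielding the required $\rlmev{}{\cstC \mid \cstD}{\cmvOfcst{\cstD}} \leq \rlmev{\cmvOfcst{\cstD}}{\cstC \mid \cstD}{\cmvOfcst{\cstD}}$.

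I expect the main obstacle to be exactly this simulation. The stability hypothesis is phrased for a single adversarial transaction on $\cstC$, so it must first be promoted to whole adversarial sequences; and one must track the token bookkeeping carefully, because a deleted $\cstC$-transaction might have credited $\Adv$ with tokens later spent on $\cstD$. This is repaired by front-loading those tokens into $\WmvA$: since $\rlmev{}{}{}$ maximises over wallets and \Cref{lem:lmev:stability} provides a stable rich threshold, I may assume $\WmvA$ already holds every token $\Adv$ ever spends, so no surviving $\cstD$-transaction becomes invalid. This is precisely the point where unbounded wealth replaces the token-independence hypothesis that was needed in \Cref{th:nonint:sufficient-conditions}.
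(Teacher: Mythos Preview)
Your argument is correct in all three cases, and for conditions~\ref{th:richnonint:sufficient-conditions:zero-mev} and~\ref{th:richnonint:sufficient-conditions:adv-inert} it coincides with the paper's proof: condition~\ref{th:richnonint:sufficient-conditions:zero-mev} is dispatched by the same sandwich $0 \leq \rlmev{\cmvOfcst{\cstD}}{\cdot}{\cdot} \leq \rlmev{}{\cdot}{\cdot} = 0$, and for condition~\ref{th:richnonint:sufficient-conditions:adv-inert} the paper carries out exactly your simulation (delete the $\cstC$-transactions, invoke stability to argue the $\cstD$-evolution is unchanged, enrich the adversary wallet to restore validity). The only cosmetic difference is that the paper explicitly builds a fresh richer wallet $\WmvAi$ (as in the proof of \Cref{th:rich-lmev:stripping}) rather than arguing, as you do, that by \Cref{lem:lmev:stability} and wallet monotonicity one may without loss of generality take the original $\WmvA$ rich enough; both are fine.

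Where you diverge is condition~\ref{th:richnonint:sufficient-conditions:no-deps}. The paper does not invoke State Stripping at all: it simply observes that contract independence $\deps{\cstC} \cap \deps{\cstD} = \emptyset$ forces $\cmvOfcst{\cstC} \cap \deps{\cstD} = \emptyset$, so there are no calls from $\cstD$ into $\cstC$ and stability holds vacuously---hence condition~\ref{th:richnonint:sufficient-conditions:no-deps} reduces to condition~\ref{th:richnonint:sufficient-conditions:adv-inert}, which has already been proved. Your route through \Cref{th:rich-lmev:stripping} is valid (your computation that $\CmvCi = \emptyset$ under contract independence is correct, so the sender-agnosticism hypothesis is vacuous and stripping collapses both sides to $\rlmev{}{\cstD}{\cmvOfcst{\cstD}}$), but it is heavier machinery than needed: the paper's reduction is a one-line observation, while yours re-derives a special case of stripping from scratch.
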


\Cref{th:richnonint:stripping} establishes a key property
of $\richnonint{}{}$:
namely, \mbox{$\richnonint{\cstC}{\cstD}$} is preserved 
when removing from $\cstC$ all the contracts except
the dependencies of $\cstD$.
Formally. the statement filters $\cstC$ with the state stripping operator
$\strip{}{\cmvOfcst{\cstD}}$, 
which preserves exactly the dependencies of $\cstD$.
This highlights the algorithmic advantage of $\richnonint{}{}$
\wrt $\epsilon$-composability, which requires to consider the
whole blockchain state.

\begin{theorem}[State stripping]
  \label{th:richnonint:stripping}
  $\richnonint{\cstC}{\cstD}$
  if and only if
  $\richnonint{\strip{\cstC}{\cmvOfcst{\cstD}}}{\cstD}$
  when the contracts in
  $\deps{\cstD} \cap \deps{\cmvOfcst{\cstC}\setminus\deps{\cstD}}$
  are sender-agnostic.
\end{theorem}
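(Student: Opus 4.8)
The plan is to reduce $\richnonint{}{}$ on $\cstC \mid \cstD$ to $\richnonint{}{}$ on the stripped state by applying the state-stripping result for $\rlmev{}{}{}$ (\Cref{th:rich-lmev:stripping}) separately to the two $\rlmev{}{}{}$ terms appearing in \Cref{def:rich-non-interference}. Unfolding, $\richnonint{\cstC}{\cstD}$ is the equality of an \emph{unrestricted} term $\rlmev{}{\cstC \mid \cstD}{\cmvOfcst{\cstD}}$ (where $\CmvD = \CmvU$) with a \emph{restricted} term $\rlmev{\cmvOfcst{\cstD}}{\cstC \mid \cstD}{\cmvOfcst{\cstD}}$ (where $\CmvD = \cmvOfcst{\cstD}$), and $\richnonint{\strip{\cstC}{\cmvOfcst{\cstD}}}{\cstD}$ is the corresponding equality computed on $\strip{\cstC}{\cmvOfcst{\cstD}} \mid \cstD$. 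The strategy is to show that stripping preserves each of these two terms \emph{separately}; the biconditional then follows at once, since it equates an unrestricted with a restricted term on either side and both are individually invariant under stripping.

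First I would record the decomposition identity $\strip{(\cstC \mid \cstD)}{\cmvOfcst{\cstD}} = \strip{\cstC}{\cmvOfcst{\cstD}} \mid \cstD$. This holds because stripping restricts to $\deps{\cmvOfcst{\cstD}} = \deps{\cstD}$, which contains all of $\cmvOfcst{\cstD}$ by reflexivity of $\sqsubseteq$ (so $\cstD$ survives intact) while cutting $\cstC$ down to exactly $\strip{\cstC}{\cmvOfcst{\cstD}}$. Consequently $\strip{\cstC}{\cmvOfcst{\cstD}} \mid \cstD$ is precisely the stripped state that \Cref{th:rich-lmev:stripping} produces from $\cstC \mid \cstD$.

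For the restricted term I would apply \Cref{th:rich-lmev:stripping} with $\CmvD = \CmvC = \cmvOfcst{\cstD}$. Then the theorem's set $\CmvCi = \deps{\cstD} \cap \deps{\cmvOfcst{\cstD} \setminus \deps{\cstD}}$ is empty, since $\cmvOfcst{\cstD} \subseteq \deps{\cstD}$, so both side conditions hold vacuously (this is the ``$\CmvD = \CmvC$'' case noted in the statement) and the restricted terms coincide \emph{unconditionally}. For the unrestricted term ($\CmvD = \CmvU$) I would first use the $\rlmev{}{}{}$ analogue of \Cref{lem:lmev}\eqref{lem:lmev:garbage} (\Cref{lem:rich-lmev}) to replace $\CmvU$ by the contracts actually present, $\cmvOfcst{\cstC \mid \cstD} = \cmvOfcst{\cstC} \cup \cmvOfcst{\cstD}$. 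This is the crucial move: with this $\CmvD$ the theorem's set becomes $\CmvCi = \deps{\cstD} \cap \deps{(\cmvOfcst{\cstC} \cup \cmvOfcst{\cstD}) \setminus \deps{\cstD}} = \deps{\cstD} \cap \deps{\cmvOfcst{\cstC} \setminus \deps{\cstD}}$, again using $\cmvOfcst{\cstD} \subseteq \deps{\cstD}$. This is exactly the set assumed sender-agnostic in the statement, and condition~\ref{th:rich-lmev:stripping:2} holds because $\CmvCi \subseteq \deps{\cstD} \subseteq \cmvOfcst{\cstC \mid \cstD}$ by well-formedness. A final application of \Cref{lem:rich-lmev} on the stripped side re-expands the callable set back to $\CmvU$ (both restrictions collapse to $\deps{\cstD} = \cmvOfcst{\strip{\cstC}{\cmvOfcst{\cstD}} \mid \cstD}$), matching the unrestricted term of $\richnonint{\strip{\cstC}{\cmvOfcst{\cstD}}}{\cstD}$.

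The main obstacle I anticipate is precisely this bookkeeping around $\CmvD = \CmvU$: applying \Cref{th:rich-lmev:stripping} verbatim with the universe would force sender-agnosticity on contracts outside the state (through $\deps{\CmvU \setminus \deps{\cstD}}$), which the hypothesis does not provide. Passing through \Cref{lem:rich-lmev} to restrict the callable set to the contracts in play is what makes the theorem's side condition collapse onto the single set $\deps{\cstD} \cap \deps{\cmvOfcst{\cstC} \setminus \deps{\cstD}}$ named in the statement; the remaining simplifications (repeatedly invoking reflexivity of $\sqsubseteq$ and the well-formedness inclusion $\deps{\cstD} \subseteq \cmvOfcst{\cstC \mid \cstD}$) are routine set manipulations.
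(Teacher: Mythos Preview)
Your proposal is correct and follows essentially the same approach as the paper: you apply \Cref{th:rich-lmev:stripping} separately to the restricted and unrestricted $\rlmev{}{}{}$ terms in \Cref{def:rich-non-interference}, using \Cref{lem:rich-lmev}\eqref{lem:rich-lmev:garbage} to replace $\CmvU$ by $\cmvOfcst{\cstC\mid\cstD}$ so that the side-condition set collapses to exactly $\deps{\cstD}\cap\deps{\cmvOfcst{\cstC}\setminus\deps{\cstD}}$, and then combine via the decomposition $\strip{(\cstC\mid\cstD)}{\cmvOfcst{\cstD}}=\strip{\cstC}{\cmvOfcst{\cstD}}\mid\cstD$. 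The paper packages the same steps into a commutative diagram, but the content is identical.
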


\change{B4,C2}{%
A direct consequence of~\Cref{th:richnonint:stripping},
when the dependencies of $\cstD$ are sender-agnostic,
is that the adversary attacking $\cstD$ gains no advantage from 
creating additional contracts $\cstC[\Adv]$ before the attack.
Indeed, if $\richnonint{\cstC}{\cstD}$ holds,
then it also holds when the starting state is extended with
$\cstC[\Adv]$.
Intuitively, this is because any MEV extraction exploiting
$\cstC[\Adv]$ can also be performed by directly calling the
dependencies of $\cstD$, since they are not influenced
by the caller identity.%
}%
\begin{corollary}[\change{B4}{Resistance to adversarial contracts}]
  \label{cor:richnonint:front-running}
  If $\richnonint{\cstC}{\cstD}$, then 
  $\richnonint{\cstC \mid \cstC[\Adv]}{\cstD}$
  when the contracts in $\deps{\cstD}$ are sender-agnostic.
\end{corollary}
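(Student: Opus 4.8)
The plan is to reduce the claim to the state-stripping theorem (\Cref{th:richnonint:stripping}), exploiting the fact that $\cstC$ and $\cstC \mid \cstC[\Adv]$ strip to \emph{exactly the same} state once we keep only the dependencies of $\cstD$. First I would record the domain bookkeeping. Since $\richnonint{\cstC}{\cstD}$ is only defined on well-formed states, $\cstC \mid \cstD$ is well-formed, so $\deps{\cstD} \subseteq \cmvOfcst{\cstC} \cup \cmvOfcst{\cstD}$. The adversarial contracts $\cstC[\Adv]$ are deployed as fresh accounts, hence $\cmvOfcst{\cstC[\Adv]}$ is disjoint from $\cmvOfcst{\cstC}$ and $\cmvOfcst{\cstD}$, which yields $\cmvOfcst{\cstC[\Adv]} \cap \deps{\cstD} = \emptyset$. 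As stripping \wrt $\cmvOfcst{\cstD}$ is just the restriction of the contract state to the domain $\deps{\cstD}$, this disjointness gives
\[
\strip{(\cstC \mid \cstC[\Adv])}{\cmvOfcst{\cstD}} = \strip{\cstC}{\cmvOfcst{\cstD}},
\]
i.e.\ the adversarial contracts contribute nothing to the stripped state.

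Next I would discharge the side condition of \Cref{th:richnonint:stripping} for the two applications. For $\cstC$ the condition requires the contracts in $\deps{\cstD} \cap \deps{\cmvOfcst{\cstC}\setminus\deps{\cstD}}$ to be sender-agnostic; for $\cstC \mid \cstC[\Adv]$ it requires the same with $\cmvOfcst{\cstC}$ replaced by $\cmvOfcst{\cstC \mid \cstC[\Adv]}$. In both cases the set in question is a subset of $\deps{\cstD}$, so the global hypothesis that \emph{every} contract in $\deps{\cstD}$ is sender-agnostic immediately validates both conditions.

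With these two facts available, the corollary follows by chaining the stripping equivalence on each side:
\[
\richnonint{\cstC}{\cstD}
\;\iff\; \richnonint{\strip{\cstC}{\cmvOfcst{\cstD}}}{\cstD}
\;\iff\; \richnonint{\strip{(\cstC \mid \cstC[\Adv])}{\cmvOfcst{\cstD}}}{\cstD}
\;\iff\; \richnonint{\cstC \mid \cstC[\Adv]}{\cstD},
\]
where the first and last equivalences are \Cref{th:richnonint:stripping}, and the middle one is the identity of the two relations obtained by substituting the equal stripped states. In particular $\richnonint{\cstC}{\cstD}$ entails $\richnonint{\cstC \mid \cstC[\Adv]}{\cstD}$, as required.

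The only delicate point is the domain bookkeeping of the first paragraph: I must ensure that stripping genuinely discards all of $\cstC[\Adv]$, which rests on well-formedness of $\cstC \mid \cstD$ confining $\deps{\cstD}$ to $\cmvOfcst{\cstC} \cup \cmvOfcst{\cstD}$ together with the freshness of $\cstC[\Adv]$. This is precisely the formal content of the intuition preceding the corollary: because the dependencies of $\cstD$ are sender-agnostic, any value extraction routed through $\cstC[\Adv]$ can be replayed by calling those dependencies directly, so the fresh adversarial contracts grant no additional leverage. Everything else is a mechanical invocation of the already-established stripping theorem, so I expect no genuine obstacle beyond keeping the account domains straight.
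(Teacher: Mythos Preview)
Your proposal is correct and follows essentially the same approach as the paper: apply \Cref{th:richnonint:stripping} on both $\cstC$ and $\cstC \mid \cstC[\Adv]$, observe that the two stripped states coincide because $\deps{\cstD}$ already lives inside $\cstC \mid \cstD$ by well-formedness, and chain the resulting equivalences. You are in fact slightly more explicit than the paper in discharging the sender-agnostic side condition of \Cref{th:richnonint:stripping} (noting that the relevant intersection is a subset of $\deps{\cstD}$), which the paper's proof leaves implicit.
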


\begin{table}[t]
  \caption{Structural properties of $\richnonintrel$.
  Starred properties additionally require that some
  contracts are sender-agnostic.
  \iftoggle{arxiv}{}{Ref.\ lemmas/examples are in~\cite{BMZ23defi}.}
  }
  \label{tab:richnonint:struct-properties}
  \setlength{\tabcolsep}{7pt} 
  \renewcommand{\arraystretch}{1} 
  \centering
  \begin{tabular}{|c|c|c|c|}
    \hline
    \textbf{Hypothesis} & \textbf{Thesis} & \textbf{Valid} & \textbf{Ref.} \\
    \hline
    \multirow{4}{*}{$\richnonint{\cstC}{\cstD}$}
    & $\richnonint{\cstC \mid \cstCi}{\cstD}$
    & \cmark${}^{\star}$
    & Cor.~\ref{cor:richnonint:front-running}
    \\    
    & $\richnonint{\cstCi \mid \cstC}{\cstD}$
    & \cmark
    & \iftoggle{arxiv}{Lem.~\ref{lem:richnonint:mid-L}}{Lem.~B.2}
    \\
    & $\richnonint{\cstC}{\cstD \mid \cstDi}$
    & \xmark
    & \iftoggle{arxiv}{Ex.~\ref{ex:richnonint:mid-R}}{Ex.~B.6}
    \\    
    & $\richnonint{\cstC}{\cstDi \mid \cstD}$
    & \xmark
    & \iftoggle{arxiv}{Ex.~\ref{ex:richnonint:mid-R}}{Ex.~B.6}
    \\
    \hline
    $\richnonint{\cstC \mid \cstCi}{\cstD}$
    & \multirow{4}{*}{$\richnonint{\cstC}{\cstD}$}
    & \cmark${}$
    & \iftoggle{arxiv}{Lem.~\ref{lem:richnonint:erasure}}{Lem.~B.3}
    \\
    $\richnonint{\cstCi \mid \cstC}{\cstD}$
    & 
    & \cmark${}$
    & \iftoggle{arxiv}{Lem.~\ref{lem:richnonint:erasure}}{Lem.~B.3}
    \\
    $\richnonint{\cstC}{\cstD \mid \cstDi}$
    & 
    & \xmark
    & \iftoggle{arxiv}{Ex.~\ref{ex:richnonint:mid-L}}{Ex.~B.7}
    \\
    $\richnonint{\cstC}{\cstDi \mid \cstD}$
    & 
    & \xmark
    & \iftoggle{arxiv}{Ex.~\ref{ex:richnonint:mid-L}}{Ex.~B.7}
    \\
    \hline
    $\richnonint{\cstC}{\cstD[1]}$ $\,\land\,$
    $\richnonint{\cstC}{\cstD[2]}$
    & \multirow{3}{*}{$\richnonint{\cstC}{\cstD[1] \mid \cstD[2]}$}
    & \xmark
    & \iftoggle{arxiv}{Ex.~\ref{ex:richnonint:union}}{Ex.~B.8}
    \\
    $\richnonint{\cstC}{\cstD[1]}$ $\,\land\,$
    $\richnonint{\cstC \mid \cstD[1]}{\cstD[2]}$
    & 
    & \xmark
    & \iftoggle{arxiv}{Ex.~\ref{ex:richnonint:union}}{Ex.~B.8}
    \\    
    $\richnonint{\cstC}{\cstD[1]} \,\land\, \rlmev{}{\cstC \mid \cstD[1] \mid \cstD[2]}{\cmvOfcst{\cstD[2]}} = 0$
    & 
    & \cmark${}^{\star}$
    & \iftoggle{arxiv}{Lem.~\ref{lem:richnonint:mid-L-zero-mev}}{Lem.~B.4}
    \\    
    \hline
  \end{tabular}
\end{table}

We summarize in~\Cref{tab:richnonint:struct-properties}
some structural properties of $\richnonint{}{}$.
The first block shows that it is possible to extend
the LHS of $\richnonint{\cstC}{\cstD}$ with new contracts,
while in general it is not possible to extend the RHS.
The second block shows that
it is possible to cut contracts from the LHS, but not from the RHS.
The last block studies how to securely deploy a compound
contract \mbox{$\cstD[1] \mid \cstD[2]$} in a state $\cstC$.
To have $\richnonint{\cstC}{}{\cstD[1] \mid \cstD[2]}$ it is not
enough to independently check $\richnonint{\cstC}{}{\cstD[1]}$ and
$\richnonint{\cstC}{}{\cstD[2]}$.
Surprisingly, even after checking the secure deployment
of the first contract (\mbox{$\richnonint{\cstC}{}{\cstD[1]}$}) and then
that of the second contract in the resulting state
(\mbox{$\richnonint{\cstC \mid \cstD[1]}{}{\cstD[2]}$}),
we cannot guarantee the MEV non-interference of \mbox{$\cstD[1] \mid \cstD[2]$}.
The last row gives a sufficient condition
when the component $\cstD[2]$ has zero MEV.

\paragraph{DeFi compositions}

We now evaluate MEV non-interference of typical DeFi compositions.
Each line in~\Cref{tab:defi-compositions} shows
a compound contract $\cstD$, the context $\cstC$, 
and whether $\richnonintrel$ holds (\cmark) or not (\xmark).
Rows marked \cmark\ follow by~\Cref{th:richnonint:sufficient-conditions},
and indicate which case of the theorem applies.
We omit the contract states in $\cstC$ and $\cstD$:
a row marked \cmark\ means that $\richnonint{}{}$
holds for \emph{all} contract states,
while one marked \xmark\ means that $\richnonint{}{}$ fails to hold
for \emph{some} state.
Note that the results still hold for larger $\cstC$,
by~\Cref{th:richnonint:stripping}.
The pseudo-code of contracts is 
\iftoggle{arxiv}
{in~\Cref{sec:defi-compositions}.}
{\cite{BMZ23defi}.}

The first row shows that an AMM is always MEV non-interfering with another AMM.
This holds because $\richnonint{}{}$ assumes wealthy adversaries,
who always have enough tokens to attack the new AMM in $\cstD$,
without the need of exploiting the context $\cstC$.
Note that poor adversaries
(like in $\nonint{}{}$ and in $\epsilon$-composability)
could instead need to extract tokens from $\cstC$
in order to attack $\cstD$.
The second and third rows analyse the $\contract{Bet}$ contract.
We know from~\Cref{ex:nonint:pricebet} that $\contract{Bet}$
is not MEV non-interfering with an $\contract{AMM}$, since the adversary can
always win the bet by creating a price volatility in the $\contract{AMM}$.
This also holds for $\epsilon$-composability.
As expected, MEV non-interference holds when using $\contract{Exchange}$
as a price oracle, since the adversary cannot affect the exchange rate.
Note that $\epsilon$-composability does not properly capture the security of this composition:
indeed, $\contract{Bet}$ and $\contract{Exchange}$ are not $0$-composable
when the adversary (honestly) wins the bet,
since winning the bet is indistinguishable from extracting MEV.
We then consider two DeFi contracts that act as wrappers of AMMs:
$\contract{BestSwap}$ allows users to
perform the most profitable swap between two AMMs,
while $\contract{SwapRouter}$ routes a swap of $(\tokT[0],\tokT[2])$ across two AMMs
for $(\tokT[0],\tokT[1])$ and $(\tokT[1],\tokT[2])$.
Despite both contracts have AMMs in their dependencies,
they are always securely composable
(both $\richnonintrel$ and $0$-composable),
because they have zero balance across calls.
The same holds for $\contract{BestSwapRouter}$,
that provides the most profitable swap between two
$\contract{SwapRouter}$s.
The contracts $\contract{LPArbitrage}$ and $\contract{FlashLoanArbitrage}$
perform a risk-free arbitrage between two AMMs,
atomically borrowing and repaying a loan from a Lending Pool (LP)
(in $\contract{FlashLoanArbitrage}$, with no collateral).
As before, $\richnonintrel$ follows since the contracts
do not hold a balance between calls.
Instead, \mbox{$0$-composability} does not hold when the LP fee
is greater than $0$.
If so, borrowing increases the LP balance,
possibly increasing the MEV opportunities.

\begin{table}[t!]
  \caption{MEV non-interference of common DeFi compositions
  \iftoggle{arxiv}
  {(see~\Cref{sec:defi-compositions}).}
  {(\cite{BMZ23defi}, App.~A).}
  }
  \label{tab:defi-compositions}
  \setlength{\tabcolsep}{7pt} 
  \renewcommand{\arraystretch}{1.05} 
  \centering
  \begin{tabular}{|c|c|c|}
  \hline
  \textbf{Dependencies $\cstC$} & \textbf{New contracts $\cstD$} & \textbf{$\richnonint{\cstC}{\cstD}$}
  \\
  \hline
  $\contract{AMM}$
  & $\contract{AMM}$
  & \cmark${}^{\ref{th:richnonint:sufficient-conditions:no-deps}}$
  \\
  $\contract{AMM}$
  & $\contract[AMM]{Bet}$
  & \xmark
  \\
  $\contract{Exchange}$
  & $\contract[Exchange]{Bet}$
  & \cmark${}^{\ref{th:richnonint:sufficient-conditions:adv-inert}}$
  \\
  $\contract{AMM1} \mid \contract{AMM2}$
  & $\contract[AMM1,AMM2]{BestSwap}$
  & \cmark${}^{\ref{th:richnonint:sufficient-conditions:zero-mev}}$
  \\
  $\contract{AMM1} \mid \contract{AMM2}$
  & $\contract[AMM1,AMM2]{SwapRouter}$
  & \cmark${}^{\ref{th:richnonint:sufficient-conditions:zero-mev}}$
  \\
  $\contract{AMM1} \mid \contract{AMM2} \mid \contract[AMM1,AMM2]{SwapRouter1} \mid$
  & \multirow{2}{*}{$\contract[SwapRouter1,SwapRouter2]{BestSwap}$}
  & \multirow{2}{*}{\cmark${}^{\ref{th:richnonint:sufficient-conditions:zero-mev}}$}
  \\ $\contract{AMM3} \mid \contract{AMM4} \mid \contract[AMM3,AMM4]{SwapRouter2} \;\,$
  & 
  & 
  \\
  $\contract{AMM1} \mid \contract{AMM2} \mid \contract{LP}$
  & $\contract[AMM1,AMM2,LP]{LPArbitrage}$
  & \cmark${}^{\ref{th:richnonint:sufficient-conditions:zero-mev}}$
  \\
  $\contract{AMM1} \mid \contract{AMM2} \mid \contract{LP}$
  & $\contract[AMM1,AMM2,LP]{FlashLoanArbitrage}$
  & \cmark${}^{\ref{th:richnonint:sufficient-conditions:zero-mev}}$
  \\
  \hline
  \end{tabular}
\end{table}

\section{Discussion}
\label{sec:related}

We have proposed MEV non-interference,
a new security notion for DeFi composition
which ensures that adversaries cannot inflict economic harm
on compound contracts by exploiting their dependencies.
We have shown that our notion overcomes the drawbacks of
$\epsilon$-composability, the only other related notion
in literature~\cite{Babel23clockwork}.
In particular, while $\epsilon$-composability is a
property of the \emph{whole} blockchain state, 
MEV non-interference only needs to inspect
the newly deployed contracts and their dependencies
(\Cref{th:richnonint:stripping}).
\change{B1}{
The two notions are incomparable.
We already know from~\Cref{ex:nonint:airdrop}
that MEV non-interference does not imply \mbox{$\epsilon$-composability}.
\Cref{ex:babel-composability-not-implies-mev-nonintererence} below
shows the converse non-implication.
}

\begin{example}
\change{B1}{
  \label{ex:babel-composability-not-implies-mev-nonintererence}
  Babel \emph{et al.}' composability does not imply MEV non-interference.
  To show this, we craft $0$-composable
  $\cstC$ and $\cstD$ where both contracts expose
  identical and \emph{mutually exclusive} MEV:
  one can extract MEV from either contract, but not from both. 
  The choice of extracting MEV from $\cstC$ or from $\cstD$ 
  is done by calling a suitable method of $\cstC$.
  Only after the choice is made the MEV is exposed, 
  and the MEV from the other contract is permanently disabled. 
  %
  Since the two MEVs are mutually exclusive, we obtain 
  $\mev{}{\cdots \mid \cstC}{} = \mev{}{\cdots \mid \cstC \mid \cstD}{}$, hence \mbox{$0$-composability}.
  Non-interference fails because extracting MEV
  from $\cstD$ requires calling a method of $\cstC$.
  More concretely, using the contracts in \Cref{fig:babel-notimply-nonint}, let
  \(
  \cstC = 
  \walpmv{\contract{C1}}{\waltok{1}{\tokT},\code{n}=0}
  \),
  and let
  \(
  \sysS = \walu{\pmvM}{0}{\tokT} \mid \cstC
  \). 
  Assume that $\price{\tokT} = 1$.
  Let
  \(
  \cstD = \walu{\contract{C2}}{1}{\tokT}
  \).
  To study Babel \emph{et al.}' composability,
  we compare the (global) MEV of $\sysS$ and $\sysS \mid \cstD$.
  We have that
  $\mev{}{\sysS}{} = 1$
  by the sequence $\pmvM:\contract{C1}.\txcode{f1}()$,
  and $\mev{}{\sysS \mid \cstD}{} = 1$
  by the sequence \mbox{$\pmvM:\contract{C1}.\txcode{f2}() \ \pmvM:\contract{C2}.\txcode{g}()$}
  (or, alternatively, by the sequence $\pmvM:\contract{C1}.\txcode{f1}()$, which provides the same MEV).
  Therefore, the two contracts are $0$-composable.
  To study MEV non-interference, we have that
  $\lmev{}{\sysS \mid \cstD}{\setenum{\contract{C2}}} = 1$
  by the sequence
  \mbox{$\pmvM:\contract{C1}.\txcode{f2}() \ \pmvM:\contract{C2}.\txcode{g}()$},
  while
  $\lmev{\setenum{\contract{C2}}}{\sysS \mid \cstD}{\setenum{\contract{C2}}} = 0$,
  since the only callable method, \ie $\txcode{g}()$, always fails.
  Therefore, $\negnonint{\sysS}{}{\cstD}$,
  \ie $\cstD$ is not composable with $\sysS$ according to our notion.
  \hfill\qedex
}
\end{example}

\begin{figure}[t]
  \begin{lstlisting}[language=txscript,morekeywords={f1,f2,f3,g},classoffset=4,morekeywords={a,b,A,Oracle},keywordstyle=\pmvColor,classoffset=5,morekeywords={t,T,T2},keywordstyle=\tokColor,classoffset=6,morekeywords={C1,C2,C3},keywordstyle=\cmvColor,frame=single]
contract C1 {
  constructor(?x:T) { require x==1; n=0 }
  f1() { require (n==0); n=1; sender!1:T }
  f2() { require (n==0); n=2 }
  f3() { return n }
}
contract C2 {
  constructor(?x:T) { require x==1 }    
  g() { require (C1.f3()==2); sender!1:T }
}
  \end{lstlisting}
  \negcaptionspace
  \caption{Babel \emph{et al.}' composability does not imply MEV non-interference.}
  \label{fig:babel-notimply-nonint}
\end{figure}

We have studied sufficient conditions for MEV non-interference,
and rules to enable its modular verification
(\Cref{th:richnonint:sufficient-conditions},
\Cref{tab:richnonint:struct-properties}).
We have shown that these rules allow to correctly classify
the composability of common DeFi protocols (\Cref{tab:defi-compositions}).
As future work, we envision MEV non-interference 
as the basis of \emph{quantitative} versions of DeFi composability,
which give upper bounds to the loss of a compound contract
caused by manipulation of its dependencies.

\change{B2}{
A relevant question is whether simpler notions of composability
would achieve the same effect as our~\Cref{def:non-interference}.
For instance, one might be tempted to regard two contracts
$\cmvC$ and $\cmvCi$ composable whenever the (global) MEV of their
composition is equal to the sum of the two individual MEVs,
thus obtaining a property which could be written along the line of
\begin{equation}
\label{eq:wrong-composability:1}
\lmev{}{\walpmv{\contract{C}}{\cdot} \mid \walpmv{\contract{C}'}{\cdot}}{}
= \lmev{}{\walpmv{\contract{C}}{\cdot}}{}
+ \lmev{}{\walpmv{\contract{C}'}{\cdot}}{}
\end{equation}
While temptingly simple, this notion has several issues.
First, it does not consider the dependencies of the two contracts,
which must be part of the blockchain state.
\Eg, when both contracts depend on a third contract $\cmvD$, we could
amend~\eqref{eq:wrong-composability:1} as:
\begin{equation}
\label{eq:wrong-composability:2}
\lmev{}{\walpmv{\contract{D}}{\cdot} \mid \walpmv{\contract{C}}{\cdot} \mid \walpmv{\contract{C}'}{\cdot}}{}
= \lmev{}{\walpmv{\contract{D}}{\cdot} \mid \walpmv{\contract{C}}{\cdot}}{}
+ \lmev{}{\walpmv{\contract{D}}{\cdot} \mid \walpmv{\contract{C}'}{\cdot}}{}
\end{equation}
However, this equation is almost always false, since in the RHS 
the MEV extractable from $\cmvD$ is counted twice.
Even when $\cmvD$ has no MEV, it can still act as a shared state between 
$\cmvC$ and $\cmvCi$, \eg making it possible to extract MEV from only one of them (but not both). 
This, again, can be used to falsify~\eqref{eq:wrong-composability:2}. 
%
Furthermore, \eqref{eq:wrong-composability:2} does
not mention the wallet of the adversary.
Using the same wallet in each $\lmev{}{\cdot}{}$ would duplicate the adversary wealth in the RHS, leading to similar double-counting issues as those discussed previously for $\cmvD$.
}

We discuss some limitations of our work.
First, our blockchain model simplifies Ethereum
by requiring that contract dependencies are statically known and acyclic.
It looks feasible to refine our results to weaken the assumptions,
so that they are only required on the contracts $\cstD$ tested for composability
(see the end of 
\iftoggle{arxiv}
{\Cref{sec:proofs}}
{Appendix B in~\cite{BMZ23defi}} 
for a detailed discussion).
Note that with this refinement,
the rest of the blockchain state is not constrained,
making our results applicable to a wider class of contracts.
%
Another limitation is that local MEV measures the loss of a contract
as the value of the tokens that adversaries can remove from it:
in practice, adversaries could harm contracts also by \emph{freezing}
tokens without actually extracting them,
as in the infamous Parity Wallet attack~\cite{parity17nov}.
Refining local MEV to take these attacks into account could be done
by adapting the notion of liquidity~\cite{BLMZ22lmcs}.
A further possible improvement of our results is weakening the
sufficient conditions of~\Cref{th:richnonint:sufficient-conditions}:
in particular, condition~\ref{th:richnonint:sufficient-conditions:adv-inert}
is unnecessarily strict, since it forbids benign alterations of the
state, which do not affect the loss of $\cstD$.
Standard static analysis techniques for information flow~\cite{Volpano96jcs,Dimitrova12vmcai,Bossi07jcs}
could be adapted to refine this condition.
Finally, to keep our model simple we assumed that
the price of tokens is
constant (see~\Cref{sec:blockchain}) and that the
mempool is empty (\Cref{sec:mev}). Relaxing these assumptions
is left as future work.

\paragraph*{Acknowledgments}

This work was partially supported by project SERICS (PE00000014)
under the MUR National Recovery and Resilience Plan funded by the
European Union -- NextGenerationEU, and by PRIN 2022 PNRR project DeLiCE (F53D23009130001).

\bibliographystyle{splncs04}
\bibliography{main}

\iftoggle{arxiv}{
  \clearpage
  \appendix
  \section{DeFi compositions}
\label{sec:defi-compositions}

In this section we discuss some typical compositions of DeFi protocols,
comparing their $\epsilon$-composability and MEV non-interference.

\paragraph{Bet}

We have presented in~\Cref{fig:pricebet} a bet contract that relies on an external price oracle
to determine the winner of a bet.
Babel \emph{et al.} already show in~\cite{Babel23clockwork} that $\contract{Bet}$ is not
$0$-composable with an AMM used as a price oracle whenever the AMM is in an unbalanced state.
The same holds for MEV non-interference $\richnonintrel$, as shown in~\Cref{ex:nonint:pricebet}.
Instead, when using an $\contract{Exchange}$ contract like the one in~\Cref{fig:exchange}
as a price oracle, MEV non-interference $\richnonintrel$ holds, because
the adversary cannot affect the exchange rate.
Technically, this can be proved through condition~\ref{th:richnonint:sufficient-conditions:adv-inert}
of~\Cref{th:richnonint:sufficient-conditions}.
On the other hand, $0$-composability between $\contract{Bet}$ and $\contract{Exchange}$
may hold or not depending on the state of $\contract{Exchange}$.
In particular, when the price given by $\contract{Exchange}$ allows the adversary to 
(honestly!) win the bet, the contracts are not $0$-composable.
Indeed, the value extracted by winning the bet contributes to the MEV of
$\contract{Exchange}[\cdots] \mid \contract{Bet}[\cdots]$,
while this value is not extractable in $\contract{Exchange}[\cdots]$.

\paragraph{Best swap}

The $\contract{BestSwap}$ contract in~\Cref{fig:bestswap}
composes two AMMs, providing a $\txcode{swap}$ method that ensures  
the best (\ie, the minimum) swap rate.
Note that $\contract{BestSwap}$ has contract dependencies
on the two called AMMs,
and token dependencies on their underlying tokens.
Nonetheless, by~\Cref{th:richnonint:sufficient-conditions}\ref{th:richnonint:sufficient-conditions:zero-mev} it enjoys MEV non-interference $\richnonintrel$
\wrt the underlying AMMs, because it has zero MEV
(indeed, the contract balance is always zero, so there is nothing to extract).
For similar reasons, also $0$-composability holds.


\begin{figure}
  \begin{lstlisting}[language=txscript,morekeywords={BestSwap,AMM,swap,getTokens,getRate},classoffset=4,morekeywords={a,b,A,Oracle},keywordstyle=\pmvColor,classoffset=5,morekeywords={ETH,t,t0,t1,tout},keywordstyle=\tokColor,classoffset=6,morekeywords={c0,c1,exch},keywordstyle=\cmvColor,frame=single]
contract (*$\contract[c0,c1]{BestSwap}$*) {
  constructor() {
    require c0.getTokens()==c1.getTokens();
    exch=[c0,c1]; (t0,t1)=exch[0].getTokens();
  }
  getTokens() { // token types
    return (t0,t1);
  }
  getRate(t) { // swap rate
    (r0,r1)=(exch[0].getRate(t),exch[1].getRate(t));
    return min(r0,r1);
  }
  swap(?x:t,ymin) {
    tout = t1 if t==t0 else t0; // output token    
    // choose the AMM with the minimum exchange rate 
    i = 0 if exch[0].getRate(t)<exch[1].getRate(t) else 1;
    exch[i].swap(?x:t,ymin);
    sender!#tout:tout;
  }
}
  \end{lstlisting}
  \negcaptionspace
  \caption{A contract to obtain the best swap between two AMMs.}
  \label{fig:bestswap}
\end{figure}

\paragraph{Swap router}

The $\contract{SwapRouter}$ contract in~\Cref{fig:swaprouter}
composes an AMM with token pair $(\tokT[0],\tokT[1])$
with an AMM with token pair $(\tokT[1],\tokT[2])$,
providing a $\txcode{swap}$ method to swap directly
$\tokT[0]$ with $\tokT[2]$.
\Cref{th:richnonint:sufficient-conditions}\ref{th:richnonint:sufficient-conditions:zero-mev}
ensures that $\contract{SwapRouter}$ enjoys MEV non-interference $\richnonintrel$
\wrt the underlying AMMs, because it has zero MEV
(indeed, the contract balance is always zero, so there is nothing to extract).
For similar reasons, also $0$-composability holds.

\begin{figure}
  \begin{lstlisting}[language=txscript,morekeywords={swap,getTokens,getRate},classoffset=4,morekeywords={a,b,A,Oracle},keywordstyle=\pmvColor,classoffset=5,morekeywords={ETH,t,t0,t1,t2,t1b},keywordstyle=\tokColor,classoffset=6,morekeywords={SwapRouter,c0,c1,exch},keywordstyle=\cmvColor,frame=single]
contract (*$\contract[c0,c1]{SwapRouter}$*) {
  constructor() {
    (t0,t1)=c0.getTokens(); (t1b,t2)=c0.getTokens();
    require t1==t1b;
    exch=[c0,c1];
  }
  getTokens() { return (t0,t2) }
  getRate(t) {
    return exch[0].getRate(t) * exch[1].getRate(t)
  }
  swap(?x:t,ymin) {
    if (t==t0) {              // sell t0, buy t2
      exch[0].swap(?x:t0,0);    // sell t0, buy t1
      exch[1].swap(?#t1:t1,0);  // sell t1, buy t2
      require (#t2>=ymin);
      sender!#t2:t2
    }
    else if (t==t2) {         // sell t2, buy t0
      exch[1].swap(?x:t2,0);    // sell t2, buy t1
      exch[0].swap(?#t1:t1,0);  // sell t1, buy t0
      require (#t0>=ymin);
      sender!#t0:t0
    }
    else abort
  }
}
  \end{lstlisting}
  \negcaptionspace
  \caption{A contract to route swap across two AMMs.}
  \label{fig:swaprouter}
\end{figure}

\paragraph{Lending Pool arbitrage}

The $\contract{LPArbitrage}$ contract
in~\Cref{fig:lp-arbitrage}
combines two AMMs and a Lending Pool to
provide a zero-risk arbitrage.
The contract performs atomically the following transactions:
\begin{enumerate}

\item borrows $x:\tokT[0]$ from the Lending Pool;

\item swaps $x:\tokT[0]$ for a certain amount $y:\tokT[1]$ using the first AMM;

\item swaps $y:\tokT[1]$ for a certain amount $x':\tokT[0]$ using the second AMM;

\item repays the loan $x:\tokT[0]$ to the Lending Pool;

\item transfers the gained units $x' - x$ of $\tokT[0]$ to the caller.
  
\end{enumerate}

\Cref{th:richnonint:sufficient-conditions}\ref{th:richnonint:sufficient-conditions:zero-mev}
ensures that $\contract{LPArbitrage}$ enjoys MEV non-interference $\richnonintrel$
\wrt the underlying AMMs and LP, because it has zero MEV
(indeed, the contract balance is always zero, so there is nothing to extract).
Similarly, Babel \emph{et al.}' \mbox{$0$-composability} holds,
since contracts are sender-agnostic.

\begin{figure}
  \begin{lstlisting}[language=txscript,morekeywords={AMM,swap,arbitrage,borrow,repay,getTokens,getToken,getRate},classoffset=4,morekeywords={a,b,A,Oracle},keywordstyle=\pmvColor,classoffset=5,morekeywords={ETH,t0,t1},keywordstyle=\tokColor,classoffset=6,morekeywords={LPArbitrage,c0,c1,lp},keywordstyle=\cmvColor,frame=single]
contract (*$\contract[c0,c1,lp]{LPArbitrage}$*) {
  constructor() {
    (t0,t1)=lp.getTokens();
    require lp.getToken()==t0 && c1.getTokens()==(t0,t1);
  }
  arbitrage(x) {
    lp.borrow(x);       // borrow x:t0
    c0.swap(?x:t0,0);   // sell t0, buy t1
    c1.swap(?#t1:t1,0); // sell t1, buy t0
    lp.repay(?x:t0);    // repay x:t0
    require #t0>0;      // gain is positive
    sender!#t0:t0       // transfer gain to sender
  }
}
  \end{lstlisting}
  \negcaptionspace
  \caption{A contract to arbitrage with a Lending Pool.}
  \label{fig:lp-arbitrage}
\end{figure}

\begin{figure}
  \begin{lstlisting}[language=txscript,morekeywords={getToken,deposit,borrow,accrue,repay,redeem,liquidate},classoffset=4,morekeywords={a,b,A,Oracle},keywordstyle=\pmvColor,classoffset=5,morekeywords={t,ETH},keywordstyle=\tokColor,classoffset=6,morekeywords={LP},keywordstyle=\cmvColor,frame=single]  
contract LP {
  fun X(n) { // exchange rate 1:t = X() mint t
    if M==0 return 1 else return (n+D*Ir)/M
  }
  fun C(a,n) { // collateralization of a
    return (mint[a]*X(n))/(debt[a]*Ir)
  }
  constructor(c_,r_,m_,t_) { 
    require r>1 && m>1;
    Cmin = c_;  // minimum collateralization
    Rliq = r_;  // liquidation bonus
    Ir   = 1;   // interest rate
    Imul = m_;  // interest rate multiplier
    D = 0;      // total debt
    M = 0;      // total minted tokens
    t = t_;     // handled token
  }
  getToken() { return t; }
  deposit(?x:t) { 
    y = x/X(#t-x); // received mint tokens
    mint[origin]+=y; M+=y;
  } 
  borrow(x) { 
    require #t>x;
    sender!x:t; debt[origin]+=x/Ir; D+=x/Ir; 
    require C(origin,#t)>=Cmin;
  } 
  accrue() {
    require origin == Oracle;
    Ir=Ir*Imul;
  }
  repay(?x:t) {
    require debt[origin]*Ir>=x;
    debt[origin]-=x/i;
  }
  redeem(x) { 
    y = x*(#t);
    require mint[origin]>=x && #t>=y;
    sender!y:t; mint[origin]-=x; M-=x;
    require C(origin,#t)>=Cmin
  }
  liquidate(?x:t,b) { 
    y = (x/X(#t-x))*Rliq;
    require debt[b]*Ir>x && C(b,#t-x)<Cmin && mint[b]>=y;
    mint[origin]+=y; mint[b]-=y; debt[b]-=x/Ir; D-=x/Ir;
    require C(b,#t)<=Cmin;
  }  
}
  \end{lstlisting}
  \negcaptionspace  
  \caption{A Lending Pool contract.}
  \label{fig:lp}
\end{figure}

\paragraph{Flash Loan arbitrage}

The contract $\contract{LParbitrage}$ shown before
requires the user to have a collateral in the $\contract{LP}$
in order to perform the $\txcode{borrow}$ action.
Current Lending Pool protocols like \eg Aave also feature a
$\txcode{flashLoan}$ method which allows users to borrow tokens
without a collateral, provided that the loan is repaid in the
same transaction (see~\Cref{fig:flashloan-arbitrage}).
Note that adding the $\txcode{flashLoan}$ method makes the $\contract{LP}$
contract escape from our blockchain model,
since the dependency relation between contracts is not statically defined:
indeed, the call to $\txcode{exec}$ targets a contract
passed as parameter to $\txcode{flashLoan}$.
However, a simple extension of our contract model
would allow to represent flash loans 
while respecting our well-formedness conditions.
In this extension, a method can specify a boolean condition 
that must hold in the state to be committed upon finalizing the transaction.
When the condition does not hold, the transaction is aborted.

We exploit this extension in~\Cref{fig:flashloan-fixed},
where we show an $\contract{Arbitrage}$ contract relying on a flash loan.
Note that the revised contract enjoys the well-formedness conditions
in~\Cref{sec:blockchain}.
\Cref{th:richnonint:sufficient-conditions}\ref{th:richnonint:sufficient-conditions:zero-mev}
ensures that $\contract{LPArbitrage}$ enjoys MEV non-interference $\richnonintrel$
\wrt the underlying AMMs and LP, because it has zero MEV
(indeed, the contract balance is always zero, so there is nothing to extract).
Instead, Babel \emph{et al.}' \mbox{$0$-composability} is only guaranteed
when the $\code{fee}$ paid to the lending pool for a flash loan is zero.
Indeed, when $\code{fee}>0$, using the flash loan increases the balance
of the lending pool: since \cite{Babel23clockwork} does not
assume adversaries to be rich, 
this extra balance in the lending pool could increase their MEV opportunities.

\begin{figure}[t]
  \begin{lstlisting}[language=txscript,morekeywords={flashLoan,exec,AMM,swap,arbitrage,borrow,repay,getTokens,getToken,getRate},classoffset=4,morekeywords={a,b,A,Oracle},keywordstyle=\pmvColor,classoffset=5,morekeywords={ETH,t0,t1,t},keywordstyle=\tokColor,classoffset=6,morekeywords={LP,FlashLoanArbitrage,c0,c1,lp},keywordstyle=\cmvColor,frame=single]
contract LP {
  flashLoan(rcv,amt,t) {
    int oldBal = #t;
    rcv!amt:t;
    rcv.exec(amt);
    require #t >= oldBal + fee;
  }
}
contract (*$\contract[c0,c1,lp]{FlashLoanArbitrage}$*) {  
  exec(amt) {
    c0.swap(?x:t0,0);   // sell t0, buy t1
    c1.swap(?#t1:t1,0); // sell t1, buy t0
    lp!amt;             // repay flash loan
    sender!#t0:t0       // transfer gain to sender
  }
  arbitrage(x,t0) {
    lp.flashLoan(address(this),x,t0);
  }
}
  \end{lstlisting}
  \negcaptionspace
  \caption{Arbitrage with a Flash Loan (violates well-formedness).}
  \label{fig:flashloan-arbitrage}
\end{figure}

\begin{figure}
  \begin{lstlisting}[language=txscript,morekeywords={flashLoan,exec,AMM,swap,arbitrage,borrow,repay,getTokens,getToken,getRate},classoffset=4,morekeywords={a,b,A,Oracle},keywordstyle=\pmvColor,classoffset=5,morekeywords={ETH,t0,t1,t},keywordstyle=\tokColor,classoffset=6,morekeywords={LP,FlashLoanArbitrage,c0,c1,lp},keywordstyle=\cmvColor,frame=single]
contract LP {
  flashLoan(amt,t) {
    int oldBal = #t;
    sender!amt:t;
    requireFinal #t >= oldBal + fee; // checked just before the transaction is finalized
  }
}
contract (*$\contract[c0,c1,lp]{FlashLoanArbitrage}$*) { 
  arbitrage(x,t0) {
    lp.flashLoan(x);
    c0.swap(?x:t0,0);   // sell t0, buy t1
    c1.swap(?#t1:t1,0); // sell t1, buy t0
    lp!x;               // repay flash loan
    sender!#t0:t0       // transfer gain to sender
    // the requireFinal in LP is checked at this point
  }
}
  \end{lstlisting}
  \negcaptionspace
  \caption{Arbitrage with a Flash Loan (enjoys well-formedness).}
  \label{fig:flashloan-fixed}
\end{figure}

  \section{Supplementary material and proofs}
\label{sec:proofs}

\begin{definition}[Richer state]
  \label{def:richer-state}
  We write $\sysS \leq_{\$} \sysSi$ when the state
  $\sysSi$ can be obtained from $\sysS$ by making the wallets larger,
  \ie when $\sysS = \WmvA \mid \cstC$
  and $\sysSi = (\WmvA + \WmvA[\delta]) \mid \cstC$,
  for some $\WmvA$, $\WmvA[\delta]$, and $\cstC$. 
\end{definition}

\begin{definition}[Wallet-monotonicity]
  \label{def:wallet-monotonicity}
  A blockchain state $\sysS = \WmvA \mid \cstC$ is wallet-monotonic if,
  whenever $\sysS \xrightarrow{\txT} \WmvAi \mid \cstCi$
  for a valid transaction $\txT$,
  then
  $\WmvA + \WmvA[\delta] \mid \cstC \xrightarrow{\txT} \WmvAi + \WmvA[\delta] \mid \cstCi$,
  for all $\WmvA[\delta]$.
\end{definition}

As anticipated in \Cref{sec:blockchain}, we assume that all reachable states are wallet-monotonic.  

\begin{definition}[Wealth of a component]
  \label{def:wealth-sysS}
	Given a state $\sysS = \WmvA \mid \cstC$, $\WmvAi$ contained in $\WmvA$ and $\cstCi$ contained in $\cstC$, 
	we can extend the definition of wealth in the following ways:
	\begin{align*}
		\wealth{}{\sysS} &= \wealth{\AddrU}{\sysS} 
		\\
		\wealth{}{\WmvAi} &= \sum_{\addrA \in \dom{\WmvAi}} \wealth{\addrA}{\sysS} 
		\\
		\wealth{}{\cstCi} &= \sum_{\cmvC \in \cmvOfcst{\cstCi}} \wealth{\cmvC}{\sysS} = \wealth{\cmvOfcst{\cstCi}}{\sysS}
	\end{align*}
\end{definition}
\begin{proofof}{lem:lmev}
	For~\Cref{lem:lmev:mev} we need to prove three equalities.
	\begin{itemize}
		\item[$a.$] $\lmev{\CmvD}{\sysS}{\emptyset} = 0$. Since there are no contracts in $\emptyset$, the identity $\gain{\emptyset}{\sysS}{\TxTS}=0$ holds for any $\TxTS$.
		\item[$b.$] $\lmev{\emptyset}{\sysS}{\CmvC}=0$. 
		We first note that the set $\mall{\emptyset}{\Adv}$ is empty. So, the only $\TxTS\in \mall{\emptyset}{\Adv}^*$ is the empty sequence $\tx{\emptyseq}$. Clearly, it holds $\gain{\CmvC}{\sysS}{\tx{\emptyseq}} = 0$.
		\item[$c.$]
		$\lmev{\CmvU}{\sysS}{\CmvU} \leq \mev{}{\sysS}{}$. We claim that, for any $\TxTS \in \mall{}{\Adv}^*$, we have $\gain{\Adv}{\sysS}{\TxTS} \leq - \gain{\CmvU}{\sysS}{\TxTS}$. Letting $\sysSi$ such that $\sysS \xrightarrow{\TxTS} \sysSi$, we have
		\[
			\wealth{\AddrU}{\sysS} = \wealth{\CmvU}{\sysS} + \wealth{\PmvU \setminus \Adv}{\sysS}+ \wealth{\Adv}{\sysS}
		\]
		and
		\[
			0 = \wealth{\AddrU}{\sysS'}-\wealth{\AddrU}{\sysS} =  \gain{\CmvU}{\sysS}{\TxTS} + \gain{\PmvU \setminus \Adv}{\sysS}{\TxTS}+ \gain{\Adv}{\sysS}{\TxTS}
		\]
		Since a transaction crafted by the adversary can not take tokens from user accounts, the inequality $\gain{\PmvU \setminus \Adv }{\sysS}{\TxTS} \geq 0$ holds, and we have proven the claim $\gain{\Adv}{\sysS}{\TxTS} \leq - \gain{\CmvU}{\sysS}{\TxTS}$.
		Therefore, noting that that $\mall{\CmvU}{\Adv} = \mall{}{\Adv}$, we have
		\begin{align*}
			\lmev{\CmvU}{\sysS}{\CmvU} &= \max \setcomp{ -\gain{\CmvU}{\sysS}{\TxTS} }{ \TxTS \in \mall{\CmvU}{\Adv}^* } 
			\\
			& \geq \max\setcomp{\gain{\Adv}{\sysS}{\TxTS}}{ \TxTS \in \mall{}{\Adv}^* } = \mev{}{\sysS}{}.
		\end{align*}
	
	\end{itemize}
	
	For~\Cref{lem:lmev:L-leq-H}, the proof follows directly from $\mall{\CmvD}{\Adv}\subseteq \mall{\CmvDi}{\Adv}$.

	For~\Cref{lem:lmev:monotonicity}, we will denote as $\cstD$ the contracts that are in $\cstCi$ but not in $\cstC$. In general, we cannot write $\cstCi = \cstC \mid \cstD$, since we have no guarantee that the contracts in $\cstD$ are composed after $\cstC$. So, we write $\cstCi = \cstC \uplus \cstD$. \ricnote{Volendo essere pedanti, non è bello usare $=$, dato che $\cstC \uplus \cstD$ può corrispondere a diversi stati $\cstC$}
	To prove our claim, we take a sequence of transaction $\TxTS\in \mall{\CmvD}{\Adv}^*$ that maximizes the loss $-\gain{\CmvC}{\WmvA \mid \cstC}{\TxTS}$, and we show that the contract loss stays the same when $\TxTS$ is executed in $\WmvA \mid \cstCi$. We can assume without loss of generality that $\TxTS$ is valid in $\WmvA \mid \cstC$: in particular, it never calls methods of contracts outside of $\cstC$.
	Therefore, contracts in $\cstD$ are not affected by $\TxTS$, and we have that  $\WmvA \mid \cstC \xrightarrow{\TxTS} \WmvAi \mid \cstCii$, implies $\WmvA \mid \cstCi \xrightarrow{\TxTS} \WmvAi \mid (\cstCii \uplus \cstD)$.
	Finally, to prove our claim that the loss stays constant we note that 
	\begin{align*}
		\gain{\CmvC}{\WmvA \mid \cstCi}{\TxTS} &= \wealth{\CmvC}{\WmvAi \mid (\cstCii \uplus \cstD) } - \wealth{\CmvC}{\WmvA \mid (\cstCi \uplus \cstD) }= 
		\\
		&= \wealth{\CmvC}{\cstCi \uplus \cstD} - \wealth{\CmvC}{\cstCii \uplus \cstD} 			 
		\\
		&=	\wealth{\CmvC}{\cstCii} + \wealth{\CmvC}{\cstD} -  (\wealth{\CmvC}{\cstCi} + \wealth{\CmvC}{\cstD})=
		\\
		&= \wealth{\CmvC}{\cstCii} - \wealth{\CmvC}{\cstC} =
		\\
		&= \wealth{\CmvC}{\WmvAi \mid \cstCii} - \wealth{\CmvC}{\WmvA \mid \cstC} = \gain{\CmvC}{\WmvA \mid \cstC}{\TxTS}.
	\end{align*}
	
	For~\Cref{lem:lmev:garbage} we first prove the equality
	 $\lmev{\CmvD}{\WmvA \mid \cstC}{\CmvC} = \lmev{\CmvD}{\WmvA \mid \cstC}{\CmvC \cap \cmvOfcst{\cstC}}$.
	Note that for any state $\sysS= \WmvA \mid \cstC$ and sequence of transactions $\TxTS$, we have
	\[
		\gain{\CmvC}{\sysS}{\TxTS} = \sum_{\cmvC \in \CmvC} \gain{\cmvC}{\sysS}{\TxTS} 
	\]
	However, any transaction in $\TxTS$ targeting any contract $\cmvCi\notin \cmvOfcst{\cstC}$ will be invalid, and so we will have $\gain{\cmvCi}{\sysS}{\TxTS}=0$, leading to 
	\[
		\gain{\CmvC}{\sysS}{\TxTS} = \sum_{\cmvC \in \CmvC} \gain{\cmvC}{\sysS}{\TxTS} = \sum_{\cmvC \in \CmvC \cap \cmvOfcst{\cstC}} \gain{\cmvC}{\sysS}{\TxTS} = \gain{\CmvC \cap \cmvOfcst{\cstC}}{\sysS}{\TxTS}.
	\]
	Since this holds for any $\TxTS$, the first equality is proven.	
	To prove the second equality $\lmev{\CmvD}{\WmvA \mid \cstC}{\CmvC} = \lmev{\CmvD\cap \cmvOfcst{\cstC}}{\WmvA \mid \cstC}{\CmvC}$, we simply note that a transaction in $\mall{\CmvD}{\Adv}$ that calls any contracts that do not appear in the state $\cstC$ is invalid, and therefore will have no effect on the loss of $\CmvC$.

	For~\Cref{lem:lmev:leq-wealth}, we prove two inequalities:
	\begin{itemize}
		\item[$a.$] Let $\tx{\emptyseq}$ be the empty sequence of transactions. Clearly, $\tx{\emptyseq}\in \mall{\CmvD}{\Adv}^*$, and $\gain{\CmvC}{\sysS}{\tx{\emptyseq}}=0$, hence 
		\[
			0 =   \gain{\CmvC}{\sysS}{\tx{\emptyseq}} \leq \max \setcomp  { -\gain{\CmvC}{\sysS}{\TxTS}} {\TxTS \in \mall{\CmvD}{\Adv}^*} =  \lmev{\CmvD}{\sysS}{\CmvC}.
		\]

		\item[$b.$]
		The amount of tokens that $\CmvC$ can lose can not be greater than the amount of tokens contained in it, meaning that $\gain{\CmvC}{\sysS}{\TxTS} \geq - \wealth{\CmvC}{\sysS}$, which implies $-\gain{\CmvC}{\sysS}{\TxTS} \leq  \wealth{\CmvC}{\sysS}$.
	\end{itemize}
	\qed
\end{proofof}

\begin{example}
  \label{ex:lmev:not-monotonic-on-observed-contracts}
  $\CmvC \subseteq \CmvCi$ does not imply that
  $\lmev{\CmvD}{\sysS}{\CmvC} \leq \lmev{\CmvD}{\sysS}{\CmvCi}$.
  
  Consider the following contracts,
  with $\price{\tokT[0]} = \price{\tokT[1]} = \price{\tokT[2]} = 1$:
  \begin{lstlisting}[language=txscript,morekeywords={f},classoffset=4,morekeywords={a,A,Oracle},keywordstyle=\pmvColor,classoffset=5,morekeywords={t,T0,T1,T2,ETH},keywordstyle=\tokColor,classoffset=6,morekeywords={C0,C1,C2},keywordstyle=\cmvColor]
    contract C0 { f() { sender!5:T0 } }
    contract C1 { f(?5:T0) { sender!1:T1 } }
    contract C2 { f(?1:T1) { sender!100:T2 } }
  \end{lstlisting}
  
  \noindent
  Let $\Adv = \setenum{\pmvM}$, let
  $\CmvC = \setenum{\contract{C2}}$ and
  $\CmvCi = \setenum{\contract{C1},\contract{C2}}$.
  Let
  \[
  \sysS =
  \walu{\pmvM}{0}{\tokT[2]} \mid
  \walpmv{\contract{C0}}{\waltok{5}{\tokT[0]}} \mid
  \walpmv{\contract{C1}}{\waltok{1}{\tokT[1]}} \mid
  \walpmv{\contract{C2}}{\waltok{100}{\tokT[2]}}
  \]
  For both $\CmvC$ and $\CmvCi$, 
  the transactions sequence that maximizes the local MEV is:
  \[
  \TxTS \; = \;
  \pmvM:\contract{C0}.\txcode{f}() \ \pmvM:\contract{C1}.\txcode{f}() \ \pmvM:\contract{C2}.\txcode{f}()
  \]
  which leads to the state:
  \[
  \sysSi =
  \walu{\pmvM}{100}{\tokT[2]} \mid
  \walpmv{\contract{C0}}{\waltok{0}{\tokT[0]}} \mid
  \walpmv{\contract{C1}}{\waltok{5}{\tokT[0]}} \mid
  \walpmv{\contract{C2}}{\waltok{1}{\tokT[1]}}
  \]
  We have that:
  \begin{align*}
    & \lmev{}{\sysS}{\setenum{\contract{C2}}}
    = -\gain{\setenum{\contract{C2}}}{\sysS}{\TxTS}
    = 100-1
    = 99
    \\
    \not\leq \;\;
    & \lmev{}{\sysS}{\setenum{\contract{C1},\contract{C2}}}
    = -\gain{\setenum{\contract{C1},\contract{C2}}}{\sysS}{\TxTS}
    = 101-6 = 95
    \tag*{\qedex}
  \end{align*}
\end{example}

\begin{proofof}{lem:lmev-wallet}
	For~\Cref{lem:lmev-wallet:zero-user}, we let  $\sysS[1] = \WmvA[\Adv] \mid \WmvA \mid \cstC$, and $\sysS[2] = \WmvA[\Adv] \mid \cstC$. By well formedness, the domains of $\WmvA$ and $\WmvA[\Adv]$ are disjoint, therefore $\WmvA[\Adv] \mid \WmvA = \WmvA[\Adv] + \WmvA$. Moreover, the adversary does not control the wallets in $\WmvA$, so a sequence of transactions $\TxTS \in\mall{\CmvD}{\Adv}^*$ is valid in $\sysS[2]$ if and only if it is valid in $\sysS[1]$.

	By the wallet monotonicity assumption (see \Cref{def:wallet-monotonicity}) we have that if $\TxTS$ is valid (in either state), then
	\[
		\sysS[2]\xrightarrow{\TxTS}  \WmvAi \mid \cstCi  
		\implies
		\sysS[1]\xrightarrow{\TxTS}  \WmvAi + \WmvA \mid \cstCi.
	\] 
	Using this, we get
	\begin{align*}
		\gain{\CmvC}{\sysS[2]}{ \TxTS} &= \wealth{\CmvC}{\WmvAi \mid \cstCi } -  \wealth{\CmvC}{\WmvA[\Adv] \mid \cstC} =\wealth{\CmvC}{\cstCi } -  \wealth{\CmvC}{  \cstC} =
		\\
		 &=  \wealth{\CmvC}{\WmvAi+ \WmvA \mid \cstCi } -  \wealth{\CmvC}{\WmvA[\Adv] +\WmvA \mid \cstC} = \gain{\CmvC}{\sysS[1]}{ \TxTS}.
	\end{align*}
	Since the above equality between $\gain{\CmvC}{\sysS[1]}{ \TxTS}$ and $\gain{\CmvC}{\sysS[2]}{ \TxTS}$ holds for any valid sequence $\TxTS \in\mall{\CmvD}{\Adv}^*$, we have proved the claim $\lmev{\CmvD}{\sysS[1]}{\CmvC} =\lmev{\CmvD}{\sysS[2]}{\CmvC}$.

	For~\Cref{lem:lmev-wallet:monotonicity} we take a sequence of transactions $\TxTS\in \mall{\CmvD}{\Adv}^*$ that maximizes the loss of ${\CmvC}$. Without loss of generality, we can assume that $\TxTS$ is valid in $\sysS$.
	By the wallet monotonicity assumption, $\TxTS$ is still valid in $\sysSi$ and it extracts the same amount of tokens from $\CmvC$, giving us a lower bound for $\lmev{\CmvD}{\sysSi}{\CmvC}$.\qed
\end{proofof}

\begin{proofof}{lem:lmev:stability}
	By \Cref{lem:lmev:leq-wealth} of \Cref{lem:lmev}, the expression $\lmev{\CmvD}{\WmvA \mid \cstC }{\CmvC}$ is bounded from above by $\wealth{\CmvC}{\WmvA \mid \cstC}$.
	By \Cref{def:wealth}, the wealth $\wealth{\CmvC}{\WmvA \mid \cstC}$ only depends on the contract state $\cstC$, so it is equal to $\wealth{\CmvC}{\cstC}$.
	Hence, the expression  $\lmev{\CmvD}{\WmvA[\Adv] \mid \cstC }{\CmvC}$ is an integer value bounded from above, so there exists a wallet $\WmvA[\Adv]$ for which the maximum is reached.  
	By \Cref{lem:lmev-wallet:monotonicity} of \Cref{lem:lmev-wallet}, the same MEV is also achieved with larger wallets $\WmvAi[\Adv]$. \qed
\end{proofof}

\begin{lemma}[Basic properties of $\rlmev{}{}{}$]
  \label{lem:rich-lmev}
  For all $\cstC$, $\CmvC,\CmvD \subseteq \CmvU$:
  \begin{enumerate}

  \item \label{lem:rich-lmev:mev}
    $\rlmev{\CmvD}{\cstC}{\emptyset} = \rlmev{\emptyset}{\cstC}{\CmvC} = 0$

  \item \label{lem:rich-lmev:L-leq-H}
    if $\CmvD \subseteq \CmvDi$, then
    $\rlmev{\CmvD}{\cstC}{\CmvC} \leq \rlmev{\CmvDi}{\cstC}{\CmvC}$
	
  \item \label{lem:rich-lmev:monotonicity}
    $\rlmev{\CmvD}{\cstC}{\CmvC} \leq \rlmev{\CmvD}{\cstCi }{\CmvC}$  where $\cstCi \mid_{\dom{\cstC}} = \cstC$

  \item \label{lem:rich-lmev:garbage}
    $\rlmev{\CmvD}{\cstC}{\CmvC} = \rlmev{\CmvD}{\cstC}{\CmvC \cap \cmvOfcst{\cstC}} = \rlmev{\CmvD\cap \cmvOfcst{\cstC}}{\cstC}{\CmvC}$
    
  \item \label{lem:rich-lmev:leq-wealth}
    $0 \leq \rlmev{\CmvD}{\cstC}{\CmvC} \leq \wealth{\CmvC}{\cstC}$
    
  \end{enumerate}
\end{lemma}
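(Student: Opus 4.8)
The plan is to reduce every claim to the already-established properties of $\lmev{}{}{}$ (\Cref{lem:lmev}) by pushing the corresponding pointwise statements through the maximisation over user wallets $\WmvA$ in \Cref{def:rich-lmev}. Throughout I rely on \Cref{lem:lmev:stability}, which guarantees that $\max_{\WmvA}\lmev{\CmvD}{\WmvA\mid\cstC}{\CmvC}$ is actually attained, so $\rlmev{}{}{}$ is a genuine maximum and I may freely pick a witnessing wallet. The one auxiliary fact I need is monotonicity of the maximum: if $f(\WmvA)\le g(\WmvA)$ for every $\WmvA$ and both maxima exist, then choosing $\WmvAi$ attaining the maximum of $f$ gives $\max_{\WmvA}f = f(\WmvAi) \le g(\WmvAi) \le \max_{\WmvA}g$.

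For \Cref{lem:rich-lmev:mev} and \Cref{lem:rich-lmev:garbage}, the matching items \Cref{lem:lmev:mev} and \Cref{lem:lmev:garbage} are \emph{equalities} holding for \emph{every} wallet $\WmvA$; since maximising both sides of a pointwise equality preserves it, the claims follow at once. For \Cref{lem:rich-lmev:L-leq-H} and \Cref{lem:rich-lmev:monotonicity}, the items \Cref{lem:lmev:L-leq-H} and \Cref{lem:lmev:monotonicity} give the required inequality for each fixed $\WmvA$ (the latter using exactly the hypothesis $\cstCi\mid_{\dom{\cstC}} = \cstC$), and the monotonicity-of-max observation above lifts it to $\rlmev{}{}{}$.

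The only part needing a small extra remark is \Cref{lem:rich-lmev:leq-wealth}. The lower bound $0 \le \rlmev{\CmvD}{\cstC}{\CmvC}$ is inherited from $0 \le \lmev{\CmvD}{\WmvA\mid\cstC}{\CmvC}$ (\Cref{lem:lmev:leq-wealth}) for any single $\WmvA$. For the upper bound, \Cref{lem:lmev:leq-wealth} gives $\lmev{\CmvD}{\WmvA\mid\cstC}{\CmvC} \le \wealth{\CmvC}{\WmvA\mid\cstC}$; the key point is that, by \Cref{def:wealth}, the wealth of a set of \emph{contract} accounts depends only on the contract states and not on the user wallets, so $\wealth{\CmvC}{\WmvA\mid\cstC} = \wealth{\CmvC}{\cstC}$ is constant in $\WmvA$. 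Hence the same constant bounds every member of the maximised family, yielding $\rlmev{\CmvD}{\cstC}{\CmvC} \le \wealth{\CmvC}{\cstC}$. None of the steps is a real obstacle; the only things to get right are invoking \Cref{lem:lmev:stability} so the maxima exist, and observing this independence of contract wealth from user wallets in the final bound.
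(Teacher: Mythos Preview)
Your proposal is correct and follows essentially the same approach as the paper: both reduce each item to its analogue in \Cref{lem:lmev} via \Cref{def:rich-lmev} and \Cref{lem:lmev:stability}, and both handle \Cref{lem:rich-lmev:leq-wealth} by observing that $\wealth{\CmvC}{\WmvA\mid\cstC}=\wealth{\CmvC}{\cstC}$ is independent of $\WmvA$. Your write-up is simply a bit more explicit about the monotonicity-of-max step, which the paper leaves implicit.
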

\begin{proof}
	Items \ref{lem:rich-lmev:mev}, \ref{lem:rich-lmev:L-leq-H},  \ref{lem:rich-lmev:monotonicity}, and \ref{lem:rich-lmev:garbage} have an analogous statement in \Cref{lem:lmev}, which holds for any wallet state. Due to the stability lemma (\Cref{lem:lmev:stability}) and the definition of $\rlmev{}{}{}$ (\Cref{def:rich-lmev}), the ``rich-adversary'' versions of the statements must also hold. 
	The same reasoning can be carried out for \Cref{lem:rich-lmev:leq-wealth} and its analogous in \Cref{lem:lmev}, since $\wealth{\CmvC}{\cstC}  = \wealth{\CmvC}{\WmvA \mid \cstC}$ for any $\WmvA$.
	\qed
\end{proof}

\begin{definition}[Sender-agnostic contracts]
	\label{def:sender-agnostic}
	A contract is \emph{sender-agnostic} if the effect of calling each of its methods can be decomposed as follows:
	\begin{itemize}
		\item updating the contract states (either directly, other through inner calls);
		\item transferring tokens from and to users and contracts;
		\item transferring tokens to its $\sender$.
	\end{itemize}
	Further, we require that any call with the same arguments and $\origin$, but distinct $\sender$, has the same effect, except for the third item where tokens are transferred to the new sender.
\end{definition}

\begin{proofof}{th:rich-lmev:stripping}
	First note that $\rlmev{\CmvD}{\strip{\cstC}{\CmvC}}{\CmvC} \leq \rlmev{\CmvD}{\cstC}{\CmvC}$ holds by \Cref{lem:rich-lmev:monotonicity} of \Cref{lem:rich-lmev}, so we just need to show that \begin{equation}
		\label{eq:rich-lmev:stripping:thesis}
		\rlmev{\CmvD}{\cstC}{\CmvC} \leq \rlmev{\CmvD}{\strip{\cstC}{\CmvC}}{\CmvC}.
	\end{equation}
	To do so, we consider a $\WmvA[\Adv]$ that maximizes $\lmev{\CmvD}{\WmvA[\Adv] \mid \cstC }{\CmvC}$ (it must exist by \Cref{lem:lmev:stability}), and show that, for any sequence of transactions $\TxTS\in \mall{\CmvD}{\Adv}^*$ that is valid in $\sysS = \WmvA[\Adv] \mid \cstC$, we can find a $\WmvAi[\Adv]$ and $\TxYS\in\mall{\CmvD}{\Adv}^*$ that is valid in $\sysSi = \WmvAi[\Adv] \mid \strip{\cstC}{\CmvC}$ such that $-\gain{\CmvC}{\sysSi}{\TxYS} \geq - \gain{\CmvC}{\sysS}{\TxTS}$, providing the thesis \eqref{eq:rich-lmev:stripping:thesis}.
		
	We will first construct $\WmvAi[\Adv]$.
	Since the adversary may have different aliases, we rewrite $\WmvA[\Adv]$ as the composition  $\wmvA[1]\mid \wmvA[2]\mid \cdots$. Moreover, the transactions in $\TxTS$ are all valid in $\sysS$, so their $\origin$ must be one of the aliases in the composition. The sequence $\TxTS$ is finite, so we can assume without loss of generality that the aliases which are the origin of some transaction are the first $n$ appearing in the composition.
	Finally, we let
	\[
		\WmvAi[\Adv] = \wmvA[1] +\wmvAi \mid \cdots \mid \wmvA[n] + \wmvAi \mid \wmvA[n+1] \mid \cdots
	\]
	where $\wmvAi$ consists of the sum of all tokens that have been transferred during the execution of $\TxTS$ in state $\sysS$ (both directly and due to internal method calls). 
	Note that the tokens of $\wmvAi$ are added only to the wallets that are origin of some transaction in $\TxTS$, so the finite token axiom is still satisfied.
	
	To prove the thesis, we now construct a sequence of transactions $\TxYS$, valid in $\sysSi = \WmvAi[\Adv] \mid \strip{\cstC}{\CmvC}$ with  $\gain{\CmvC}{\sysSi}{\TxYS} = \gain{\CmvC}{\sysS}{\TxTS}$.
	To do so, we consider $\vec{f}$, the sequence of method calls that are performed upon the execution of $\TxTS$ in state $\sysS$. Note that $\vec{f}$ includes methods called directly from a transaction as well as internal calls that are performed from another called method. 
	We now create a subsequence $\vec{g}$ that contains only the calls that are either
	\begin{enumerate}[$(a)$]
		\item due to a transaction of $\TxTS$ directly calling a contract in $\deps{\CmvC}$, or
		\item due to a internal call in which a method of a contract not in $\deps{\CmvC}$ calls a method of a contract in $\deps{\CmvC}$.
	\end{enumerate}
	\begin{claim}[1]
		A method $m$ whose call appears in $\vec{g}$ due to condition $(b)$ belongs to a contract in $\deps{\CmvC} \cap \deps{\CmvD \setminus \deps{\CmvC}}$.
	\end{claim}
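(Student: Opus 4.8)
The plan is to prove the two memberships separately: writing $\cmvC$ for the contract owning $m$ and $\cmvCi$ for the contract whose method issues the call, case $(b)$ gives by definition that the callee lies in $\deps{\CmvC}$ and the caller does not, so $\cmvC \in \deps{\CmvC}$ is immediate and $\cmvCi \notin \deps{\CmvC}$. Hence it only remains to establish $\cmvC \in \deps{\CmvD \setminus \deps{\CmvC}}$, i.e. to exhibit a contract $\cmvD \in \CmvD$ with $\cmvD \notin \deps{\CmvC}$ and $\cmvC \sqsubseteq \cmvD$.

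The main tool I would use is that $\deps{\cdot}$ is downward closed under $\sqsubseteq$: any $\sqsubseteq$-predecessor of a contract in $\deps{\CmvC}$ again lies in $\deps{\CmvC}$. This is immediate from the definition $\deps{\CmvC} = \setcomp{\cmvCi}{\exists \cmvC \in \CmvC.\ \cmvCi \sqsubseteq \cmvC}$ together with transitivity of $\sqsubseteq$. Contrapositively, a $\sqsubseteq$-successor of a contract lying outside $\deps{\CmvC}$ is itself outside $\deps{\CmvC}$; this is the form I shall apply.

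Next I would trace the call producing $m$ back to its originating transaction. The internal call from $\cmvCi$ to $m$ occurs while executing some transaction $\txT$ of $\TxTS$, whose direct target is $\cmvD = \callee{\txT}$; since $\TxTS \in \mall{\CmvD}{\Adv}^*$ we have $\cmvD \in \CmvD$. Because in a well-formed state a contract may only call its own dependencies, the stack of nested callers between $\cmvD$ and $\cmvCi$ forms a $\sqsubseteq$-descending chain, so $\cmvCi \sqsubseteq \cmvD$; and as $\cmvCi$ calls $m \in \cmvC$ directly, $\cmvC \sqsubseteq \cmvCi$, whence $\cmvC \sqsubseteq \cmvD$ by transitivity. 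Applying the contrapositive downward-closure property to $\cmvCi \notin \deps{\CmvC}$ and $\cmvCi \sqsubseteq \cmvD$ yields $\cmvD \notin \deps{\CmvC}$, so $\cmvD \in \CmvD \setminus \deps{\CmvC}$. Together with $\cmvC \sqsubseteq \cmvD$ this gives $\cmvC \in \deps{\CmvD \setminus \deps{\CmvC}}$, and combined with $\cmvC \in \deps{\CmvC}$ I conclude $\cmvC \in \deps{\CmvC} \cap \deps{\CmvD \setminus \deps{\CmvC}}$, as required.

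The two downward-closure implications and the final bookkeeping of set memberships are routine. The only genuinely load-bearing step is the tracing argument, which is where well-formedness enters: I must be sure that the chain of nested callers from the transaction target $\cmvD$ down to $\cmvCi$ respects $\sqsubseteq$, so that $\cmvCi \sqsubseteq \cmvD$ and the non-dependency of $\cmvCi$ propagates upward to $\cmvD$. This is exactly the acyclicity (``a contract only calls contracts deployed before it'') assumption of the model, and it is what makes the direct transaction target a legitimate witness lying in $\CmvD \setminus \deps{\CmvC}$.
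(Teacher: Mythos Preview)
Your proof is correct and follows essentially the same route as the paper's own argument: both trace the call back to the originating transaction in $\TxTS$, use that its target lies in $\CmvD$, and then invoke downward closure of $\deps{\CmvC}$ (in contrapositive form) on the caller $\cmvCi \notin \deps{\CmvC}$ to conclude the target is outside $\deps{\CmvC}$. Your version is simply more explicit about the $\sqsubseteq$-chain along the call stack and where the well-formedness assumption enters, whereas the paper compresses this into the single phrase ``$\deps{\CmvC}$ is closed downward''.
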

	\begin{proof}[of Claim (1)]
		Clearly $m$ is a method of a contract in $\deps{\CmvC}$.
		Moreover, we know that it has been called internally from a method that is not in $\deps{\CmvC}$, and that this call has  originated from a transaction in $\TxTS$. 
		Such a transaction may only call a contract of $\CmvD$ (since $\TxTS \in \mall{\CmvD}{\Adv}^*$), and we know that it is not calling a method of $\deps{\CmvC}$ (since $\deps{\CmvC}$ is closed downward, and $m$ has been called from a method not in $\deps{\CmvC}$). So, the transaction that originated the call to $m$ must have been targeting a contract in $\CmvD \setminus \deps{\CmvC}$, meaning that $m \in \deps{\CmvD \setminus \deps{\CmvC}}$. \qed
	\end{proof}
	We now let $\TxYS$ be the sequence of transactions that directly perform the calls in $\vec{g}$, in the same order, with the same arguments and $\origin$. If any of the original calls transferred some tokens, then the corresponding transaction of $\TxYS$ will provide the same amount of tokens, taking them from the wallet owned by the alias that originated the method call. By the construction of $\WmvAi[\Adv]$ there are always enough tokens to do so.
	\begin{claim}[2]
		$\TxYS$ belongs to $\mall{\CmvD}{\Adv}^*$.
	\end{claim}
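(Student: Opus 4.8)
The plan is to verify, transaction by transaction, the two membership conditions that define $\mall{\CmvD}{\Adv}$ in~\Cref{def:lmev}: that each transaction of $\TxYS$ is craftable by the adversary (lies in $\mall{}{\Adv}$) and that its target contract satisfies $\callee{\txT} \subseteq \CmvD$. Once both hold for every transaction, the finite sequence $\TxYS$ lies in $\mall{\CmvD}{\Adv}^*$ by definition, which is exactly the thesis of Claim~(2).

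First I would establish craftability. Every transaction of $\TxYS$ reproduces a call appearing in $\vec{g}$ with the same $\origin$ and the same arguments. Since each call collected in $\vec{f}$, and hence in its subsequence $\vec{g}$, ultimately originates from a transaction of $\TxTS \in \mall{\CmvD}{\Adv}^*$, its $\origin$ is an adversary alias. Moreover, the tokens that the call transfers are supplied from the wallet of that originating alias, and by the construction of $\WmvAi[\Adv]$ (which adds to each alias that is the origin of some transaction the sum of all tokens moved during the execution of $\TxTS$) these funds are guaranteed to be available. Hence each transaction of $\TxYS$ is signed by an adversary account providing only tokens the adversary owns, so it belongs to $\mall{}{\Adv}$.

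The crux is the target condition, and this is where the theorem hypotheses enter. I would split according to the two kinds of calls retained in $\vec{g}$. For a call of kind $(a)$, the call is issued directly by a transaction of $\TxTS$; since that transaction lies in $\mall{\CmvD}{\Adv}$, its target already lies in $\CmvD$, so the target is in $\deps{\CmvC} \cap \CmvD \subseteq \CmvD$. For a call of kind $(b)$, Claim~(1) shows that the called method belongs to a contract in $\CmvCi = \deps{\CmvC} \cap \deps{\CmvD \setminus \deps{\CmvC}}$, and hypothesis~\ref{th:rich-lmev:stripping:2} (namely $\CmvCi \subseteq \CmvD$) then places this contract in $\CmvD$. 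In both cases the target lies in $\CmvD$, i.e.\ $\callee{\txT} \subseteq \CmvD$ for the corresponding transaction of $\TxYS$.

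Combining the two points, each transaction of $\TxYS$ lies in $\mall{\CmvD}{\Adv}$, hence $\TxYS \in \mall{\CmvD}{\Adv}^*$. I expect the main obstacle to be the kind-$(b)$ case: it is the only place where membership of the target in $\CmvD$ is not immediate, and it relies essentially on combining Claim~(1) with condition~\ref{th:rich-lmev:stripping:2}. Note that the sender-agnostic condition~\ref{th:rich-lmev:stripping:1} plays no role in Claim~(2) itself; it is instead needed afterwards, to argue that rerouting each internal call into a direct adversary transaction (thereby changing the $\sender$) does not alter the effect on $\CmvC$, so that $\TxYS$ inflicts on $\CmvC$ the same loss as $\TxTS$ and the inequality~\eqref{eq:rich-lmev:stripping:thesis} follows.
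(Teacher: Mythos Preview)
Your proof is correct and follows essentially the same approach as the paper: the same case split on $(a)$/$(b)$, with case $(a)$ handled by the fact that the originating transaction already lies in $\mall{\CmvD}{\Adv}$, and case $(b)$ handled by combining Claim~(1) with hypothesis~\ref{th:rich-lmev:stripping:2}. The only minor differences are cosmetic: you discuss token availability (which is really about validity rather than membership in $\mall{\CmvD}{\Adv}$, and is used later), while the paper instead remarks that the adversary can craft the arguments of case-$(b)$ calls by simulating the execution of $\TxTS$.
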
 
	\begin{proof}[of Claim (2)]
	We have two cases:
	\begin{itemize}
		\item if a method is in $\vec{g}$ due to $(a)$, then it has been called directly from a transaction in $\TxTS$, which belongs to $\mall{\CmvD}{\Adv}$. That transaction can be copied and put into $\TxYS$.
		\item If a method is in $\vec{g}$ due to $(b)$, then by Claim (1) it belongs to a contract in $\deps{\CmvC} \cap \deps{\CmvD \setminus \deps{\CmvC}}$, which is contained in $\CmvD$ by assumption  \ref{th:rich-lmev:stripping:2}. Moreover, the  adversary is able to craft the arguments of that method by simulating the execution of $\TxTS$. This means that the transaction  $\txY \in \TxYS$ calling the method belongs to $\mall{\CmvD}{\Adv}$.
	\end{itemize}
	\end{proof}

	We  now need to show that $\TxYS$ and $\TxTS$ modify the state of contracts in $\strip{\cstC}{\CmvC}$ in the same way.
	Note that methods that are in $\vec{g}$ due to $(b)$ are sender-agnostic due to assumption \ref{th:rich-lmev:stripping:1} and Claim (1). So, the fact that in the execution of $\TxYS$ they are called directly from a transaction, while in the execution of $\TxTS$ they are called from another contract, does not affect the execution of these methods relatively to $\CmvC$.
	More in detail, it is important to realize that the sequence $\vec{h}$ of method calls performed upon the execution of $\TxYS$ contains $\vec{g}$ but does not coincide with it, since it also includes all the internal calls that are performed by methods in $\vec{g}$. In fact, $\vec{h}$ is the subsequence of $\vec{f}$ that contains every call to methods of contracts in $\deps{\CmvC}$.
	For this reason, both $\vec{f}$ and $\vec{h}$ modify the state of contracts in $\strip{\cstC}{\CmvC}$ in the same way; and this implies that $\TxYS$ is valid in $\sysSi$ and that  $\gain{\CmvC}{\sysSi}{\TxYS} = \gain{\CmvC}{\sysS}{\TxTS}$ (since contracts in $\CmvC$ are not affected by the absence of contracts outside of $\strip{\cstC}{\CmvC}$).
	\qed
\end{proofof}

\begin{remark}\label{remark:rich-lmev:stripping:alternative-proof}
	A significantly simpler proof of \Cref{th:rich-lmev:stripping} can be given assuming $\CmvD \subseteq \deps{\CmvC}$. This is a relevant case, since to apply the definition of MEV non-interference we need to calculate the MEV with  $\CmvD=\CmvC$.
	The first part of the proof stays identical, but we are able to construct the wallet $\WmvAi[\Adv]$ and the sequence of transactions $\TxYS$ much more easily.
	Since $\deps{\CmvC}$ is closed downwards w.r.t. $\sqsubseteq$, every transaction in $\mall{\CmvD}{\Adv}$ cannot invoke contracts that are not in  $\deps{\CmvC}$ (neither directly targeting them, nor through internal contract calls).
	This means that when $\TxTS$ is executed in $\sysS$, it only interacts with contract in $\strip{\cstC}{\CmvC}$, hence $\TxTS$ is also valid in $\WmvA[\Adv] \mid \strip{\cstC}{\CmvC}$. For this reason we can simply chose $\WmvAi[\Adv] = \WmvA[\Adv]$ and $\TxYS=\TxTS$. In this case $\gain{\CmvC}{\sysSi}{\TxYS} = \gain{\CmvC}{\sysS}{\TxTS}$, and we have the thesis \eqref{eq:rich-lmev:stripping:thesis}.
\end{remark}

\begin{example}
  \label{ex:rich-lmev:stripping:1-false}
  We show that if the assumption~\ref{th:rich-lmev:stripping:1}
  does not hold, then we may have $\rlmev{\CmvD}{\cstC}{\CmvC} \gneqq \rlmev{\CmvD}{\strip{\cstC}{\CmvC}}{\CmvC}$. 
  Consider the following contracts:
  \begin{lstlisting}[language=txscript,morekeywords={f,g},classoffset=4,morekeywords={a,A,Oracle},keywordstyle=\pmvColor,classoffset=5,morekeywords={t,T,T0,T1,T2,ETH},keywordstyle=\tokColor,classoffset=6,morekeywords={C0,C1},keywordstyle=\cmvColor]
    contract C0 { f() { require sender==C1; sender!5:T } }
    contract C1 { g() { C0.f(); sender!5:T } }
  \end{lstlisting}
  Note that $\contract{C0}$ violates sender-agnosticism,
  since $\sender$ is used to enable the token transfer only to $\contract{C1}$.
  \bartnote{TODO}
\end{example}

\begin{example}
	\label{ex:rich-lmev:stripping:2-false}
	We show that if the assumption $\deps{\CmvC} \cap \deps{\CmvD \setminus \deps{\CmvC}} \subseteq \CmvD$ does not hold, then we may have $\rlmev{\CmvD}{\cstC}{\CmvC} \gneqq \rlmev{\CmvD}{\strip{\cstC}{\CmvC}}{\CmvC}$. 
	The idea behind this counterexample is that by stripping $\cstC$ of dependencies, we may get rid of contracts that are in $\CmvD$, thus restricting the adversary attacking power.
	Consider the following contracts
	\begin{lstlisting}[language=txscript,morekeywords={f,g},classoffset=4,morekeywords={a,A,Oracle},keywordstyle=\pmvColor,classoffset=5,morekeywords={t,T,T0,T1,T2,ETH},keywordstyle=\tokColor,classoffset=6,morekeywords={C0,C1},keywordstyle=\cmvColor]
    contract C0 { f() { sender!5:T } }
    contract C1 { g() { C0.f(); sender!5:T } }
	\end{lstlisting}
	Here, we let $\Adv = \setenum{\pmvM}$, $\CmvC = \setenum{\contract{C0}}$, $\CmvD= \setenum{\contract{C1}}$ and $\cstC = \walpmv{\contract{C0}}{\waltok{5}{\tokT}} \mid \walpmv{\contract{C1}}{\waltok{0}{\tokT}}$. Stripping $\cstC$ from the dependencies of $\CmvC$ gives $\strip{\cstC}{\CmvC} = \walpmv{\contract{C0}}{\waltok{5}{\tokT}}$.
	Clearly, $\rlmev{\CmvD}{\strip{\cstC}{\CmvC}}{\CmvC} = 0$, since there are no valid transactions in $\mall{\CmvD}{\Adv}$ (the only contract that could be targeted has been stripped).
	On the other hand, 
	$\rlmev{\CmvD}{\cstC}{\CmvC}= 5$, since the adversary can run the transaction $\txT = \pmvM:\contract{C1}.\txcode{g}()$ to extract 5 tokens from $\contract{C0}$.
	Note that 
	\begin{align*}
		\deps{\CmvC} =& \setenum{\contract{C0}}
		\\
		\deps{\CmvD} =& \setenum{\contract{C0}, \contract{C1}} = \deps{\CmvD\setminus \deps{\CmvC}}
		\\
		\deps{\CmvC} \cap \deps{\CmvD\setminus \deps{\CmvC}} =& \setenum{\contract{C0}} \not\subseteq \setenum{\contract{C1}}= \CmvD
	\end{align*}
	so the assumption of \Cref{th:rich-lmev:stripping} does not hold.
\end{example}
 
\begin{example}
  \label{cex:th:rich-lmev:future-not-affect-past:non-rich}
  \label{cex:th:nonint:sufficient-conditions:zero-mev:babel}
  This example shows that
  Babel's composability notion does not enjoy
  condition~\ref{th:nonint:sufficient-conditions:zero-mev} of
  \Cref{th:nonint:sufficient-conditions},
  \ie a contract with zero MEV may not be $0$-composable with the context.
  Furthermore, it is a counterexample to a variant of
  \Cref{th:rich-lmev:future-not-affect-past}
  where $\lmev{}{}{}$ is used instead of $\rlmev{}{}{}$.
  
  Let $\price{\tokT} = \price{\tokT[2]} = 1$,
  let $\contract{Exchange1}$ be a contract 
  (like in \Cref{fig:exchange})
  allowing anyone to swap $1:\tokT$ for $2:\tokT[2]$,
  and let the adversary $\pmvM$ have $1:\tokT[2]$ in $\sysS$:
  \[
  \sysS =
  \walu{\pmvM}{1}{\tokT[2]} \mid
  \walpmv{\contract{Exchange1}}{\waltok{2}{\tokT[2]},\code{tin}=\tokT,\code{tout}=\tokT[2],\code{rate}=2}
  \]
  Let $\cstD = \walpmv{\contract{Exchange2}}{\waltok{1}{\tokT},\code{tin}=\tokT[2],\code{tout}=\tokT,\code{rate}=1}$
  be a contract that swaps tokens $\tokT[2]$
  with an equal amount of tokens $\tokT$.
  Note that using $\contract{Exchange2}$ does not alter its wealth,
  since the number of tokens is preserved, and they have the same price.
  Therefore, $\lmev{}{\sysS \mid \cstD}{\setenum{\contract{Exchange2}}} = 0$,
  and so condition~\ref{th:nonint:sufficient-conditions:zero-mev} of
  \Cref{th:nonint:sufficient-conditions} ensures that $\sysS$ is
  MEV non-interfering with $\cstD$.
  Instead, $\cstD$ is \emph{not} $0$-composable
  according to~\cite{Babel23clockwork}:
  indeed, $\mev{}{\sysS}{} = 0 \neq \mev{}{\sysS \mid \cstD}{} = 1$,
  since $\pmvM$ can first swap her $1:\tokT[2]$ for $1:\tokT$
  using $\contract{Exchange2}$, and then swap $1:\tokT$ for $2:\tokT[2]$
  using $\contract{Exchange1}$, with an overall gain of $1:\tokT[2]$.
  This is coherent with the different interpretations of composability:
  for~\cite{Babel23clockwork}, adding a new contract must preserve the global MEV,
  while our notion requires to preserve its \emph{local} MEV.
  \hfill\qedex
\end{example}

We will now two auxiliary notions,
\ie the token types that can be received by $\cstC$ in $\sysS$,
denoted $\intok{\sysS}{\cstC}$,
and those that can be sent, denoted by $\outtok{\sysS}{\cstC}$.
Token independence relies on these notions.

\begin{definition}[Token independence]
  \label{def:token-independence}  
  Let $\sysS = \WmvA \mid \cstC$.
  We define $\intok{\sysS}{\cstC}$ as the set of token types $\tokT$
  such that there exist a sequence of transactions $\TxTS$,
  a transaction $\tx{Y}$,
  and a state $\sysSi$ with $\sysS \xrightarrow{\TxTS \tx{Y}} \sysSi $
  such that $\tx{Y}$ contains a call to a method of a contract
  in $\cstC$ that receives tokens of type $\tokT$.
  Similarly, $\outtok{\sysS}{\cstC}$ contains $\tokT$
  if there are $\TxTS$, $\tx{Y}$, and $\sysSi$
  with $\sysS \xrightarrow{\TxTS \tx{Y}} \sysSi$
  such that $\tx{Y}$ contains a call to a method of a contract
  in $\cstC$ that sends tokens of type $\tokT$.
  
  We then say that $\cstC$, $\cstD$ are
  \keyterm{token independent} in $\sysS = \WmvA \mid \cstC \mid \cstD$ when
  \(
  \intok{\sysS}{\cstC} \cap \outtok{\sysS}{\cstD} = \emptyset = \intok{\sysS}{\cstD} \cap \outtok{\sysS}{\cstC}
  \).
\end{definition}

\begin{proofof}{th:nonint:sufficient-conditions}
	For condition~\ref{th:nonint:sufficient-conditions:zero-mev} we have
	\begin{align*}
		0 &\leq 	\lmev{\cmvOfcst{\cstD}}{\sysS\mid \cstD}{\cmvOfcst{\cstD}} &&\text{\Cref{lem:lmev:leq-wealth} of  \Cref{lem:lmev} }
		\\
		 &\leq \lmev{}{\sysS\mid \cstD}{\cmvOfcst{\cstD}} &&\text{\Cref{lem:lmev:L-leq-H} of  \Cref{lem:lmev} }
		\\
		& = 0 &&\text{hypothesis}
	\end{align*}
	which implies $\lmev{\cmvOfcst{\cstD}}{\sysS\mid \cstD}{\cmvOfcst{\cstD}} = 0$, and therefore $\nonint{\sysS}{\cstD}$.

	For the remaining items, we first notice that the contract independence of $\cstC$ and $\cstD$ implies that $\cmvOfcst{\cstC}\cap \deps{\cstD} = \emptyset$, and therefore that $\cstD$ is stable \wrt $\Adv$ moves on $\cstC$.
	This means that condition~\ref{th:nonint:sufficient-conditions:adv-inert} implies condition~\ref{th:nonint:sufficient-conditions:no-deps}, so below we will only prove condition~\ref{th:nonint:sufficient-conditions:adv-inert}.

	To prove that $\nonint{\sysS}{\cstD}$ under condition~\ref{th:nonint:sufficient-conditions:adv-inert} we will show that:
        \[\lmev{\cmvOfcst{\cstD}}{\sysS\mid \cstD}{\cmvOfcst{\cstD}} \geq \lmev{}{\sysS\mid \cstD}{\cmvOfcst{\cstD}}
        \]
        (the other direction is guaranteed by \Cref{lem:lmev}).
	Take a sequence of transactions $\TxTS \in \mall{}{\Adv}^*$ that maximizes the loss of $\cmvOfcst{\cstD}$ when executed in state $\sysS\mid \cstD$ (we can assume w.l.o.g. that $\TxTS$ is valid in $\sysS\mid \cstD$). 
	We need to show that there is a sequence $\TxYS \in \mall{\cmvOfcst{\cstD}}{\Adv}^*$ that causes a loss of $\cmvOfcst{\cstD}$ greater or equal to the one caused by $\TxTS$.
	We choose $\TxYS$ to be the subsequence of $\TxTS$ comprising only direct calls to $\cmvOfcst{\cstD}$. Clearly, $\TxYS$ is in $\mall{\cmvOfcst{\cstD}}{\Adv}^*$.

	We now show that $\TxYS$ is valid in $\sysS \mid \cstD$.
	First, we note that any transaction of $\TxTS$ targeting contracts in $\cstC$ (\ie the ones that have been removed to form $\TxYS$) does not change the behaviour of $\cstC$ as observed from $\cstD$.
	This is due to the stability of $\cstD$. 
	For this reason, $\TxTS$ and $\TxYS$ both cause the same state updates in $\cstD$, and in particular they perform the same token transfers to and from it.
	To conclude this proof that $\TxYS$ is valid, we note that the adversary must have enough tokens to fund all the calls in $\TxYS$. Indeed, the adversary has enough funds to execute $\TxTS$, and any token that they gain from the discarded transactions cannot be used by calls in $\TxYS$, due to the token independence assumption.
	Since $\TxYS$ is valid, and updates the state of $\deps{\cstD}$ in the same way as $\TxTS$, the loss caused to $\cstD$ must also be the same, so $\lmev{\cmvOfcst{\cstD}}{\sysS\mid \cstD}{\cmvOfcst{\cstD}} \geq \lmev{}{\sysS \mid \cstD}{\cmvOfcst{\cstD}}$, and we conclude the proof. \qed
\end{proofof}

\medskip

Note that $\richnonint{\cstC}{\cstD}$ does not imply that
$\nonint{\WmvA \mid \cstC}{\cstD}$ \emph{for all} $\WmvA$,
because poor adversaries could not have enough tokens to
manipulate $\cstC$ in order to trigger MEV extraction from $\cstD$.
An example is when $\cstC$ and $\cstD$ are
the $\contract{AMM}$ and $\contract{Bet}$ of~\Cref{ex:nonint:pricebet},
where $\pmvM$ is not rich enough to trigger the price volatility.
Of course we cannot deduce $\richnonint{\cstC}{\cstD}$ 
when $\nonint{\WmvA \mid \cstC}{\cstD}$ holds for \emph{some} $\WmvA$.
Indeed, a poor adversary could not be able to call 
some MEV-triggering method in $\cstC$, while a rich one could:
then, we could observe a discrepancy between the restricted and
unrestricted $\rlmev{}{}{}$ which was not visible for $\lmev{}{}{}$.
The following~\namecref{ex:nonint-not-imply-richnonint} shows this  in detail.

\begin{figure}[t]
  \begin{lstlisting}[language=txscript,morekeywords={get,set,f},classoffset=4,morekeywords={a,caller,A,Oracle},keywordstyle=\pmvColor,classoffset=5,morekeywords={t,T,ETH},keywordstyle=\tokColor,classoffset=6,morekeywords={C,D},keywordstyle=\cmvColor,frame=single]
    contract D {
      f() { require C.get()==1; sender!#ETH:ETH }
    }
    contract C {
      get() { return x }  // x initialized to 0
      set(?1:T) { x=1 }
    }
  \end{lstlisting}
  \negcaptionspace  
  \caption{Contracts for~\Cref{ex:nonint-not-imply-richnonint}.}
  \label{fig:nonint-not-imply-richnonint}
\end{figure}

\begin{example}
  \label{ex:nonint-not-imply-richnonint}
  The implication
  ``if $\nonint{\WmvA \mid \cstC}{\cstD}$ then $\richnonint{\cstC}{\cstD}$''
  does not hold.
  For instance, consider the contracts in~\Cref{fig:nonint-not-imply-richnonint},
  let $\WmvA[n] = \walu{\pmvM}{n}{\tokT}$,
  let $\cstC = \walpmv{\contract{C}}{\waltok{0}{\tokT},\code{x}=0}$
  and let $\cstD = \walpmv{\contract{D}}{\waltok{100}{\ETH}}$.
  We have that $\nonint{\WmvA[0] \mid \cstC}{\cstD}$,
  because $\pmvM$ does not have the $1:\tokT$ that is needed
  to extract the $\ETH$ from $\cstD$.
  To study $\richnonint{}{}$, we have that
  $\rlmev{}{\cstC \mid \cstD}{\setenum{\cmvD}} = 100\cdot\price{\ETH}$,
  while the restricted local MEV
  $\rlmev{\setenum{\cmvD}}{\cstC \mid \cstD}{\setenum{\cmvD}}$
  is zero, since $\pmvM$ cannot call
  the $\txcode{set}$ method of $\contract{C}$.
  Then, $\negrichnonint{\cstC}{\cstD}$.
  \hfill\qedex
\end{example}

\begin{proofof}{lem:richnonint-nonint}
	From \Cref{def:rich-lmev} and \Cref{lem:lmev:stability} there are wallets $\WmvA[1]$ and $\WmvA[2]$ such that
	\begin{align}
	\label{eq:proof:lem:richnonint-nonint-1}
	&\forall \WmvA \geq_{\$} \WmvA[1] \quad \rlmev{\cmvOfcst{\cstD}}{\cstC \mid \cstD}{\cmvOfcst{\cstD}} = \lmev{\cmvOfcst{\cstD}}{\WmvA \mid \cstC \mid \cstD}{\cmvOfcst{\cstD}}
	\\
	\label{eq:proof:lem:richnonint-nonint-2}
	&\forall \WmvA \geq_{\$} \WmvA[2] \quad \rlmev{}{\cstC \mid \cstD}{\cmvOfcst{\cstD}} = \lmev{}{\WmvA \mid \cstC \mid \cstD}{\cmvOfcst{\cstD}}.
	\end{align}
	If we let $\WmvA[0] = \sup \setenum{\WmvA[1] , \WmvA[2]}$, then we have that the following chain of implications:
	\begin{align*}
		\richnonint{\cstC}{\cstD} \iff& \text{ definition of $\richnonint{}{}$}
		\\
		\rlmev{\cmvOfcst{\cstD}}{\cstC \mid \cstD}{\cmvOfcst{\cstD}} = \rlmev{}{\cstC \mid \cstD}{\cmvOfcst{\cstD}} \iff& \text{ from  \eqref{eq:proof:lem:richnonint-nonint-1} and \eqref{eq:proof:lem:richnonint-nonint-2}}
		\\ & \text{ and $\WmvA[0] \geq_{\$} \WmvA[1], \WmvA[2]$ }
		\\
		\lmev{\cmvOfcst{\cstD}}{\WmvA[0] \mid \cstC \mid \cstD}{\cmvOfcst{\cstD}} = \lmev{}{\WmvA[0] \mid \cstC \mid \cstD}{\cmvOfcst{\cstD}} \iff& \text{ definition of $\nonint{}{}$}
		\\
		\nonint{\WmvA[0] \mid \cstC}{\cstD} \iff & \text{ from \eqref{eq:proof:lem:richnonint-nonint-1} and \eqref{eq:proof:lem:richnonint-nonint-2}} 
		\\
		\forall \WmvA \geq_{\$}\WmvA[0] \quad \nonint{\WmvA\mid\cstC}{\cstD}. \qquad \; \; & 
	\end{align*}
	\qed
\end{proofof}
\begin{proofof}{th:richnonint:sufficient-conditions}
	For Item \ref{th:richnonint:sufficient-conditions:zero-mev}, we have 
	\begin{align*}
		0 &\leq 	\lmev{\cmvOfcst{\cstD}}{\cstC\mid \cstD}{\cmvOfcst{\cstD}} &&\text{\Cref{lem:rich-lmev:leq-wealth} of  \Cref{lem:rich-lmev} }
		\\
		 &\leq \lmev{}{\cstC\mid \cstD}{\cmvOfcst{\cstD}} &&\text{\Cref{lem:rich-lmev:L-leq-H} of  \Cref{lem:rich-lmev} }
		\\
		& = 0 &&\text{hypothesis}
	\end{align*}
	which implies $\rlmev{\cmvOfcst{\cstD}}{\cstC}{\cmvOfcst{\cstD}} = 0$, and therefore $\richnonint{\cstC}{\cstD}$.
	Like in \Cref{th:nonint:sufficient-conditions}, we will only prove Item \ref{th:richnonint:sufficient-conditions:adv-inert}, since it implies Item \ref{th:richnonint:sufficient-conditions:no-deps}.
	To prove that $\richnonint{\cstC}{\cstD}$ under the assumptions of Item \ref{th:richnonint:sufficient-conditions:adv-inert}, we will show that 
	\begin{equation}
		\label{eq:proof:th:richnonint:sufficient-conditions}
		\rlmev{\cmvOfcst{\cstD}}{\cstC \mid \cstD }{\cmvOfcst{\cstD}} \geq \rlmev{}{\cstC \mid \cstD }{\cmvOfcst{\cstD}}
	\end{equation}
	(the other inequality is guaranteed by \Cref{lem:rich-lmev}).
	To do so, we consider a $\WmvA[\Adv]$ that maximizes $\lmev{}{\WmvA[\Adv] \mid \cstC \mid \cstD }{\cmvOfcst{\cstD}}$ (it must exist by \Cref{lem:lmev:stability}), and show that, for any sequence of transactions $\TxTS\in \mall{}{\Adv}^*$ that is valid in $\sysS = \WmvA[\Adv] \mid \cstC\mid \cstD$, we can find a $\WmvAi[\Adv]$ and $\TxYS\in\mall{\cmvOfcst{\cstD}}{\Adv}^*$ that is valid in $\sysSi = \WmvAi[\Adv] \mid \cstC\mid \cstD$ such that $-\gain{\cmvOfcst{\cstD}}{\sysSi}{\TxYS} \geq - \gain{\cmvOfcst{\cstD}}{\sysS}{\TxTS}$, proving \eqref{eq:proof:th:richnonint:sufficient-conditions}.
	We choose $\TxYS$ to be the subsequence of $\TxTS$ comprising only direct calls to $\cmvOfcst{\cstD}$. Clearly, $\TxYS$ is in $\mall{\cmvOfcst{\cstD}}{\Adv}^*$. 
	$\WmvAi[\Adv]$ is constructed like in the proof of \Cref{th:rich-lmev:stripping}, by adding to every wallet of $\WmvA[\Adv]$ that is the origin of some transaction in $\TxTS$ an amount of tokens that is equal to the ones that have been transferred during the execution of $\TxTS$ in state $\sysS$.
	We now show that $\TxYS$ is valid in $\sysSi$.
	First, we note that any transaction of $\TxTS$ targeting contracts in $\cstC$ (\ie the ones that have been removed to form $\TxYS$)
	does not change the behaviour of $\cstC$ as observed from $\cstD$.
	This is due to the stability of $\cstD$. 
	For this reason, $\TxTS$ and $\TxYS$ both cause the same state updates in $\cstD$, and in particular they perform the same token transfers to and from it.
	To conclude this proof that $\TxYS$ is valid, we note that the adversary must have enough tokens to fund all the calls in $\TxYS$. Indeed, the adversary has enough funds to execute $\TxTS$, and the construction of $\WmvAi[\Adv]$ accounts for any of the tokens that the adversary might have gained from the transactions that are in $\TxTS$ but not in $\TxYS$.
	Since $\TxYS$ is valid, and updates the state of $\cstD$ in the same way as $\TxTS$, the loss caused to $\cstD$ must be the same, so $\lmev{\cmvOfcst{\cstD}}{\WmvAi[\Adv] \mid \cstC \mid \cstD}{\cmvOfcst{\cstD}} \geq \lmev{}{\WmvA[\Adv] \mid \cstC \mid \cstD}{\cmvOfcst{\cstD}}$, and we conclude the proof. \qed	
\end{proofof}

\hidden{
Nota 1:
	Credo sia possibile indebolire le ipotesi del teorema 4, dimostrandolo in maniera molto simile alla dimostrazione del Lemma 3B. 
	Il nuovo teorema diventerebbe qualcosa del tipo: 
	\begin{itemize}
	\item $\richnonint{\cstC}{\cstD} \implies \richnonint{\strip{\cstC}{\cmvOfcst{\cstD}}}{\cstD} $ incondizionatamente
	\item  $\richnonint{\strip{\cstC}{\cmvOfcst{\cstD}}}{\cstD} \implies \richnonint{\cstC}{\cstD}$ sotto le condizioni di sender-agnostic che ci sono ora, quindi $\deps{\cstD} \cap \deps{\cmvOfcst{\cstC}\setminus\deps{\cstD}}$ devono essere sender-agnostic.
	\end{itemize}
}

\begin{proofof}{th:richnonint:stripping}
	We start by proving the equality
	\begin{equation*}
		\rlmev{}{\cstC \mid \cstD}{\cmvOfcst{\cstD}} = \rlmev{}{\strip{(\cstC \mid \cstD)}{\cmvOfcst{\cstD}}}{\cmvOfcst{\cstD}}
	\end{equation*}
	which, due to \Cref{lem:rich-lmev:garbage} of \Cref{lem:rich-lmev} is equivalent to proving:
	\begin{equation} \label{eq:proof:th:richnonint:stripping:1}
		\rlmev{\cmvOfcst{(\cstC \mid \cstD)}}{\cstC \mid \cstD}{\cmvOfcst{\cstD}} = \rlmev{\cmvOfcst{(\cstC \mid \cstD)}}{\strip{(\cstC \mid \cstD)}{\cmvOfcst{\cstD}}}{\cmvOfcst{\cstD}}.
	\end{equation}
	By letting $\CmvD = \cmvOfcst{(\cstC \mid \cstD)}$, and 
	\[
		\CmvCi = \deps{\cmvOfcst{\cstD}} \cap \deps{\cmvOfcst{(\cstC \mid \cstD)} \setminus \deps{\cmvOfcst{\cstD}}} = \deps{\cmvOfcst{\cstD}} \cap \deps{\cmvOfcst{\cstC } \setminus \deps{\cmvOfcst{\cstD}}}
	\] 
	we can see that $\CmvCi \subseteq \CmvD$ and that contracts in $\CmvCi$ are sender-agnostic (by assumption). This means that both conditions of \Cref{th:rich-lmev:stripping} are satisfied, and we have proven the equality in \eqref{eq:proof:th:richnonint:stripping:1}.
	
	Using \Cref{th:rich-lmev:stripping} again, we will prove the equality
	\begin{equation} \label{eq:proof:th:richnonint:stripping:2}
		\rlmev{\cmvOfcst{\cstD}}{\cstC \mid \cstD}{\cmvOfcst{\cstD}} = \rlmev{\cmvOfcst{\cstD}}{\strip{(\cstC \mid \cstD)}{\cmvOfcst{\cstD}}}{\cmvOfcst{\cstD}}.
	\end{equation}
	This time, we let $\CmvD = \cmvOfcst{\cstD}$ and $\CmvCi = \deps{\cmvOfcst{\cstD}} \cap \deps{\cmvOfcst{\cstD} \setminus \deps{\cstD} }  =\emptyset$. Since $\CmvCi = \emptyset $ the conditions of \Cref{th:rich-lmev:stripping} are trivially satisfied, and \eqref{eq:proof:th:richnonint:stripping:2} holds. 
	Therefore, the following diagram commutes and we have proven the theorem:
  \[
  \begin{tikzcd}[row sep = large, column sep = normal]
  \rlmev{}{\cstC \mid \cstD}{\cmvOfcst{\cstD}}
  \arrow[r, equal, "\text{\eqref{eq:proof:th:richnonint:stripping:1}}"]
  \arrow[d, equal, "\richnonint{\cstC}{\cstD}"]
  & \rlmev{}{\strip{(\cstC \mid \cstD)}{\cmvOfcst{\cstD}}}{\cmvOfcst{\cstD}}
  \arrow[r, equal, "\text{Def. $\strip{}{}$}"]
  & \rlmev{}{\strip{\cstC}{\cmvOfcst{\cstD}} \mid \cstD}{\cmvOfcst{\cstD}}
  \arrow[d, equal, "\richnonint{\strip{\cstC}{\cmvOfcst{\cstD}}}{\cstD}"] \\
  \rlmev{\cmvOfcst{\cstD}}{\cstC \mid \cstD}{\cmvOfcst{\cstD}}
  \arrow[r, equal, "\text{\eqref{eq:proof:th:richnonint:stripping:2}}"]
  & \rlmev{\cmvOfcst{\cstD}}{\strip{(\cstC \mid \cstD)}{\cmvOfcst{\cstD}}}{\cmvOfcst{\cstD}}
  \arrow[r, equal, "\text{Def. $\strip{}{}$}"]
  & \rlmev{\cmvOfcst{\cstD}}{\strip{\cstC}{\cmvOfcst{\cstD}} \mid \cstD}{\cmvOfcst{\cstD}}
  \end{tikzcd}
  \] 
	\qed
\end{proofof}

\begin{proofof}{cor:richnonint:front-running}
	We can apply \Cref{th:richnonint:stripping} to get 
	\begin{align*}
		\richnonint{\cstC}{\cstD} &\iff \richnonint{\strip{\cstC}{\cmvOfcst{\cstD}}}{\cstD}
		\\
		\richnonint{\cstC \mid\cstC[\Adv]}{\cstD} &\iff \richnonint{\strip{(\cstC \mid \cstC[\Adv])}{\cmvOfcst{\cstD}}}{\cstD}
	\end{align*}
	The assumption $\richnonint{\cstC}{\cstD}$ implies that the state $\cstC \mid \cstD$ is well-formed, which in turn implies that all the the dependencies of $\cstD$ are contained in it.
	For this reason we have $\strip{(\cstC \mid \cstC[\Adv])}{\cmvOfcst{\cstD}} = \strip{\cstC}{\cmvOfcst{\cstD}}$, so we have $\richnonint{\cstC \mid\cstC[\Adv]}{\cstD} \iff 	\richnonint{\cstC}{\cstD}$ and the corollary is proven.
\end{proofof}

\begin{lemma}
  \label{lem:richnonint:mid-L}
  If $\richnonint{\cstC}{\cstD}$ then $\richnonint{\cstCi \mid \cstC}{\cstD}$    
\end{lemma}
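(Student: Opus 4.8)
The hypothesis $\richnonint{\cstC}{\cstD}$ only makes sense when $\cstC \mid \cstD$ is well-formed, and (as already used in the proof of~\Cref{cor:richnonint:front-running}) this forces every dependency of $\cstD$ to lie inside $\cstC \mid \cstD$; since moreover $\cstD$ is deployed after $\cstC$, the dependencies of $\cstC$ cannot fall into $\cstD$, so $\deps{\cstC} \subseteq \cmvOfcst{\cstC}$. In the extended state $\cstCi \mid \cstC \mid \cstD$ the contracts $\cstCi$ are deployed first and their domain is disjoint from the rest, whence $(\deps{\cstC} \cup \deps{\cstD}) \cap \cmvOfcst{\cstCi} = \emptyset$: nothing in $\cstC \mid \cstD$ depends on $\cstCi$, so stripping either $\cstCi \mid \cstC \mid \cstD$ or $\cstC \mid \cstD$ to the dependencies of $\cstD$ yields the same state $\strip{\cstC}{\cmvOfcst{\cstD}} \mid \cstD$. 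The plan is to show that prepending $\cstCi$ changes neither the unrestricted nor the restricted local MEV of $\cmvOfcst{\cstD}$, namely
\begin{align}
  \rlmev{}{\cstCi \mid \cstC \mid \cstD}{\cmvOfcst{\cstD}}
  &= \rlmev{}{\cstC \mid \cstD}{\cmvOfcst{\cstD}}
  \tag{A}\label{eq:plan:A}
  \\
  \rlmev{\cmvOfcst{\cstD}}{\cstCi \mid \cstC \mid \cstD}{\cmvOfcst{\cstD}}
  &= \rlmev{\cmvOfcst{\cstD}}{\cstC \mid \cstD}{\cmvOfcst{\cstD}}
  \tag{B}\label{eq:plan:B}
\end{align}
and then to chain these two equalities with the hypothesis.

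I would dispatch~\eqref{eq:plan:B} first, as it is unconditional. Because $\cmvOfcst{\cstD} \subseteq \deps{\cstD}$, the restricted instance of state stripping in~\Cref{th:rich-lmev:stripping} applies with an empty interface set (cf.~\Cref{remark:rich-lmev:stripping:alternative-proof}); applying it to each side, and using that both states strip to $\strip{\cstC}{\cmvOfcst{\cstD}} \mid \cstD$, both sides of~\eqref{eq:plan:B} collapse to $\rlmev{\cmvOfcst{\cstD}}{\strip{\cstC}{\cmvOfcst{\cstD}} \mid \cstD}{\cmvOfcst{\cstD}}$. No sender-agnosticism is needed here.

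For~\eqref{eq:plan:A}, the inequality $\geq$ is immediate from monotonicity (\Cref{lem:rich-lmev:monotonicity} of~\Cref{lem:rich-lmev}), since $\cstCi \mid \cstC \mid \cstD$ extends $\cstC \mid \cstD$. The reverse inequality I would prove by a direct simulation that sidesteps sender-agnosticism entirely. Using~\Cref{lem:lmev:stability}, fix a wallet $\WmvA[\Adv]$ and a sequence $\TxTS \in \mall{}{\Adv}^*$ realizing the left-hand side, with loss $L = -\gain{\cmvOfcst{\cstD}}{\WmvA[\Adv] \mid \cstCi \mid \cstC \mid \cstD}{\TxTS}$, and let $\TxYS$ be $\TxTS$ with every transaction whose callee is in $\cmvOfcst{\cstCi}$ deleted. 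The deployment order forbids $\cstCi$ from calling into $\cstC \mid \cstD$, while $(\deps{\cstC} \cup \deps{\cstD}) \cap \cmvOfcst{\cstCi} = \emptyset$ forbids any retained transaction from ever calling a contract of $\cstCi$, even through internal calls. Hence the deleted transactions are completely decoupled from $\cstC \mid \cstD$: along $\TxTS$ and along $\TxYS$ the contracts of $\cstD$ and of its dependencies undergo identical state updates, so $\TxYS$ inflicts the same loss $L$. Crucially, the retained transactions are kept verbatim—no internal call is ever turned into a direct one—which is exactly why no sender-agnosticism hypothesis is required (in contrast with~\Cref{th:rich-lmev:stripping}).

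Finally I would make $\TxYS$ valid without $\cstCi$: as in the proof of~\Cref{th:rich-lmev:stripping}, enrich $\WmvA[\Adv]$ into $\WmvAi[\Adv]$ by crediting the adversary with all tokens the deleted transactions transferred to it along $\TxTS$; wallet-monotonicity (\Cref{def:wallet-monotonicity}) then keeps every retained transaction valid, with unchanged effect, in $\WmvAi[\Adv] \mid \cstC \mid \cstD$. Thus $\rlmev{}{\cstC \mid \cstD}{\cmvOfcst{\cstD}} \geq \lmev{}{\WmvAi[\Adv] \mid \cstC \mid \cstD}{\cmvOfcst{\cstD}} \geq L$, giving the missing inequality of~\eqref{eq:plan:A}. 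Combining the three equalities,
\[
  \rlmev{}{\cstCi \mid \cstC \mid \cstD}{\cmvOfcst{\cstD}}
  = \rlmev{}{\cstC \mid \cstD}{\cmvOfcst{\cstD}}
  = \rlmev{\cmvOfcst{\cstD}}{\cstC \mid \cstD}{\cmvOfcst{\cstD}}
  = \rlmev{\cmvOfcst{\cstD}}{\cstCi \mid \cstC \mid \cstD}{\cmvOfcst{\cstD}},
\]
where the outer steps are~\eqref{eq:plan:A} and~\eqref{eq:plan:B} and the middle step is $\richnonint{\cstC}{\cstD}$; this is precisely $\richnonint{\cstCi \mid \cstC}{\cstD}$. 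The delicate point, and the heart of the argument, is the reverse inequality in~\eqref{eq:plan:A}: one must check simultaneously that deleting the $\cstCi$-transactions leaves the run of $\cstD$ and its dependencies literally unchanged—here the structural identity $(\deps{\cstC} \cup \deps{\cstD}) \cap \cmvOfcst{\cstCi} = \emptyset$, forced by well-formedness, does the work that sender-agnosticism does in~\Cref{th:rich-lmev:stripping}—and that the pruned sequence stays valid once the adversary's wallet absorbs the tokens the deleted transactions would have supplied.
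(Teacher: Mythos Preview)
Your proof is correct and shares the same core simulation with the paper: deleting the $\cstCi$-targeting transactions leaves the run on $\cstC \mid \cstD$ untouched (because $\cstCi$ is deployed first and, by well-formedness of $\cstC \mid \cstD$, nothing in $\cstC \mid \cstD$ depends on $\cstCi$), and the wallet enrichment restores validity. The paper, however, does not split the argument into the two MEV equalities~\eqref{eq:plan:A} and~\eqref{eq:plan:B}; it carries out a single end-to-end simulation: start from an unrestricted attack in $\cstCi \mid \cstC \mid \cstD$, delete the $\cstCi$-calls and enrich the wallet to obtain an unrestricted attack in $\cstC \mid \cstD$ with the same loss, apply the hypothesis $\richnonint{\cstC}{\cstD}$ to obtain a restricted attack in $\cstC \mid \cstD$, and finally observe that this restricted attack (targeting only $\cmvOfcst{\cstD}$) is still valid when $\cstCi$ is put back. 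Your decomposition is a bit more modular---in particular you dispatch~\eqref{eq:plan:B} by invoking \Cref{th:rich-lmev:stripping} with the empty interface set, whereas the paper re-does that trivial step by hand---but the substance is identical, and both proofs hinge on exactly the structural fact you single out, namely $(\deps{\cstC} \cup \deps{\cstD}) \cap \cmvOfcst{\cstCi} = \emptyset$.
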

\begin{proof}
	The hypothesis $\richnonint{\cstC}{\cstD}$ can be restated as follows: for any $\WmvA[1]$ and any sequence $\TxTS\in \mall{}{\Adv}^*$ valid in  $\sysS[1] = \WmvA[1] \mid \cstC \mid \cstD$, we can find $\WmvAi[1]$ and a sequence $\TxTiS\in \mall{\cmvOfcst{\cstD}}{\Adv}^*$ valid in $\sysSi[1] = \WmvAi[1] \mid \cstC \mid \cstD$ such that executing $\TxTiS$ in state $\sysSi[1]$ causes (to contracts in $\cstD$) a loss that is greater or equal to the one caused by executing $\TxTS$ in state $\sysS[1]$.
	Similarly, the thesis $\richnonint{\cstC}{\cstD}$ can be rewritten as follows: for any $\WmvA[2]$ and any sequence $\TxYS\in \mall{}{\Adv}^*$ valid in $\sysS[2]= \WmvA[2] \mid \cstCi \mid \cstC \cstD$, we can find $\WmvAi[2]$ and a sequence $\TxYiS \in \mall{\cmvOfcst{\cstD}}{\Adv}^*$ valid in $\sysSi[2]= \WmvAi[2]\mid  \cstCi \mid \cstC \mid \cstD$, such that executing $\TxYiS$ in state $\sysSi[2]$ causes (to contracts in $\cstD$) a loss that is greater or equal to the one caused by executing $\TxYS$ in state $\sysS[2]$.
	So, we will take $\WmvA[2]$ and $\TxYS$, and construct $\WmvAi[2]$ and $\TxYiS$ to prove our claim.
	First, we construct $\WmvA[1]$ by adding to every wallet of $\WmvA[2]$ that originated a transaction of $\TxYS$ an amount of token equal to the one that are transferred during the execution of $\TxYS$. 
	Then, we construct $\TxTS$ by removing from $\TxYS$ any transaction that targets contracts in $\cstCi$. Notice that since $\cstCi$ is composed before $\cstC$ and $\cstD$, removing these transaction will have no effect on those states. For this reason, the execution of $\TxTS$ in $\sysS[1]$ affects contracts in $\cstD$ in exactly the same way in which the execution of $\TxYS$ in state $\sysS[2]$ does, so we have 
	$\gain{\cmvOfcst{\cstD}}{\sysS[2]}{\TxYS} =\gain{\cmvOfcst{\cstD}}{\sysS[1]}{\TxTS}$.
	Using the hypothesis we can find $\WmvAi[1]$ and $\TxTiS \in \mall{\cmvOfcst{\cstD}}{\Adv}$ with $-\gain{\cmvOfcst{\cstD}}{\TxTiS}{\sysSi[1]} \geq  -\gain{\cmvOfcst{\cstD}}{\TxTS}{\sysS[1]}$.
	To conclude this proof we can simply set $\WmvAi[2]= \WmvAi[1]$ and $\TxYiS = \TxTS$. 
	Indeed, since $\TxTS$ only target contracts of $\cstD$ it will still be valid in a state without $\cstCi$. 
	Moreover, the execution of $\TxYiS$ in $\sysSi[2]$ and that of $\TxTS$ in $\sysSi[1]$ both change the state of contracts in $\cstD$ in the same way, so we have:
	\[
		-\gain{\cmvOfcst{\cstD}}{\sysSi[2]}{\TxYiS} = -\gain{\cmvOfcst{\cstD}}{\sysSi[1]}{\TxTiS} \geq -\gain{\cmvOfcst{\cstD}}{\sysS[1]}{\TxTS} = -\gain{\cmvOfcst{\cstD}}{\sysS[2]}{\TxYS}
	\]
	and the thesis is proven.
	\qed	
	
\end{proof}

\begin{lemma}
  \label{lem:richnonint:erasure}
  Each of the following conditions implies $\richnonint{\cstC}{\cstD}$:
  \begin{enumerate}

  \item \label{lem:richnonint:erasure:R}
    $\richnonint{\cstC \mid \cstCi}{\cstD}$
    and $\deps{\cstD} \cap \cmvOfcst{\cstCi} = \emptyset$,
    
  \item \label{lem:richnonint:erasure:L}
    $\richnonint{\cstCi \mid \cstC}{\cstD}$
    and $\deps{\cstC \mid \cstD} \cap \cmvOfcst{\cstCi} = \emptyset$
  \end{enumerate}  
\end{lemma}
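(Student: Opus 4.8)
The plan is to prove both implications through a single chain of (in)equalities at the level of $\rlmev{}{}{}$, deferring the difference between the two items to a well-formedness remark. I would write $\cstCii$ for the enlarged context — $\cstCii = \cstC \mid \cstCi$ in case~\ref{lem:richnonint:erasure:R} and $\cstCii = \cstCi \mid \cstC$ in case~\ref{lem:richnonint:erasure:L} — so that in both cases the hypothesis reads $\richnonint{\cstCii}{\cstD}$ and the goal is $\richnonint{\cstC}{\cstD}$. Since $\rlmev{\cmvOfcst{\cstD}}{\cstC \mid \cstD}{\cmvOfcst{\cstD}} \leq \rlmev{}{\cstC \mid \cstD}{\cmvOfcst{\cstD}}$ always holds by \Cref{lem:rich-lmev:L-leq-H}, the goal reduces to the reverse inequality. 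I note up front that one cannot simply apply the non-interference stripping result \Cref{th:richnonint:stripping}, since it carries a sender-agnosticity hypothesis that is absent here; the proof must therefore descend to $\rlmev{}{}{}$ and use only the hypothesis-free facts about it.

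The argument then rests on two ingredients. First, I would show that the \emph{restricted} MEV does not see $\cstCi$: applying \Cref{th:rich-lmev:stripping} with $\CmvD = \CmvC = \cmvOfcst{\cstD}$ — where the side condition is vacuous because $\deps{\cmvOfcst{\cstD}} \cap \deps{\cmvOfcst{\cstD} \setminus \deps{\cmvOfcst{\cstD}}} = \emptyset$ — collapses each restricted MEV onto the stripped state $\strip{(\cdot)}{\cmvOfcst{\cstD}}$, \ie the restriction to $\deps{\cstD}$. Since both items grant $\deps{\cstD} \cap \cmvOfcst{\cstCi} = \emptyset$, deleting $\cstCi$ leaves this restriction unchanged, so $\strip{(\cstCii \mid \cstD)}{\cmvOfcst{\cstD}} = \strip{(\cstC \mid \cstD)}{\cmvOfcst{\cstD}}$ and hence
\[
\rlmev{\cmvOfcst{\cstD}}{\cstCii \mid \cstD}{\cmvOfcst{\cstD}} = \rlmev{\cmvOfcst{\cstD}}{\cstC \mid \cstD}{\cmvOfcst{\cstD}}.
\]
Second, enlarging the context can only raise the \emph{unrestricted} MEV, so \Cref{lem:rich-lmev:monotonicity} gives $\rlmev{}{\cstC \mid \cstD}{\cmvOfcst{\cstD}} \leq \rlmev{}{\cstCii \mid \cstD}{\cmvOfcst{\cstD}}$. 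Combining these with the hypothesis $\richnonint{\cstCii}{\cstD}$ yields
\[
\rlmev{}{\cstC \mid \cstD}{\cmvOfcst{\cstD}} \leq \rlmev{}{\cstCii \mid \cstD}{\cmvOfcst{\cstD}} = \rlmev{\cmvOfcst{\cstD}}{\cstCii \mid \cstD}{\cmvOfcst{\cstD}} = \rlmev{\cmvOfcst{\cstD}}{\cstC \mid \cstD}{\cmvOfcst{\cstD}},
\]
which is exactly the reverse inequality, so $\richnonint{\cstC}{\cstD}$ follows in both cases.

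It remains to explain why the two items carry different hypotheses, and this is where I expect the only genuine subtlety. The MEV chain above uses nothing more than $\deps{\cstD} \cap \cmvOfcst{\cstCi} = \emptyset$, which both items supply; the extra strength in item~\ref{lem:richnonint:erasure:L} is needed solely to keep the \emph{thesis} state $\cstC \mid \cstD$ well-formed. In item~\ref{lem:richnonint:erasure:R}, $\cstCi$ is deployed after $\cstC$, so $\cstC$ cannot call into $\cstCi$ and $\cstC \mid \cstD$ stays well-formed once $\cstD$ does not depend on $\cstCi$. In item~\ref{lem:richnonint:erasure:L}, by contrast, $\cstCi$ is deployed first and $\cstC$ could a priori depend on it, so removing $\cstCi$ would break well-formedness unless one also assumes $\deps{\cstC} \cap \cmvOfcst{\cstCi} = \emptyset$; this is precisely the gap closed by the stronger hypothesis $\deps{\cstC \mid \cstD} \cap \cmvOfcst{\cstCi} = \emptyset$. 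The main obstacle, then, is conceptual rather than computational: seeing that the statement splits into a hypothesis-free stripping fact for the restricted side and a plain monotonicity bound for the unrestricted side, and checking that the apparently natural \Cref{th:richnonint:stripping} is unusable here.
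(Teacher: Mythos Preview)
Your proof is correct and takes a genuinely different route from the paper. The paper gives a direct, transaction-level argument: it unfolds both $\richnonintrel$ relations into statements about wallets and sequences, then shows that any valid $\TxYS$ in $\WmvA[2] \mid \cstC \mid \cstD$ lifts unchanged to the larger state, applies the hypothesis there to obtain a restricted witness $\TxTiS \in \mall{\cmvOfcst{\cstD}}{\Adv}^*$, and finally observes that $\TxTiS$ (targeting only $\cstD$) descends back to the smaller state because $\deps{\cstD} \cap \cmvOfcst{\cstCi} = \emptyset$. Your argument stays at the level of $\rlmev{}{}{}$: monotonicity (\Cref{lem:rich-lmev:monotonicity}) handles the unrestricted side, and \Cref{th:rich-lmev:stripping} with $\CmvD = \CmvC = \cmvOfcst{\cstD}$ handles the restricted side, where --- as you correctly note --- the side condition set $\deps{\cmvOfcst{\cstD}} \cap \deps{\cmvOfcst{\cstD} \setminus \deps{\cmvOfcst{\cstD}}}$ is empty, so no sender-agnosticity is needed. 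It is worth remarking that the paper's authors actually sketched (and discarded) an $\rlmev{}{}{}$-level proof in a comment, but theirs applied \Cref{th:rich-lmev:stripping} on the \emph{unrestricted} side, which does incur the sender-agnostic hypothesis; your observation that the theorem is free on the restricted side is exactly what makes the high-level route go through. Your well-formedness discussion also matches the paper's: the disjointness hypotheses are there precisely so that $\cstC \mid \cstD$ is a legitimate state once $\cstCi$ is removed, with item~\ref{lem:richnonint:erasure:L} needing the stronger condition because $\cstC$ may depend on the earlier-deployed $\cstCi$.
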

\begin{proof}
	First of all, we note that without the conditions $\deps{\cstD} \cap \cmvOfcst{\cstCi} = \emptyset$ (in case \ref{lem:richnonint:erasure:R}) and $\deps{\cstC \mid \cstD} \cap \cmvOfcst{\cstCi} = \emptyset$ (in case \ref{lem:richnonint:erasure:L}) the state $\cstC \mid \cstD$ is not well formed, and the thesis $\richnonint{\cstC}{\cstD}$ does not make sense.

	Below, we only prove Item \ref{lem:richnonint:erasure:R}, since the proof of Item \ref{lem:richnonint:erasure:L} follows the same steps.
	The assumption $\richnonint{\cstC\mid \cstCi}{\cstD}$ is equivalent to saying that for any $\WmvA[1]$ and any sequence $\TxTS\in \mall{}{\Adv}^*$ valid in  $\sysS[1] = \WmvA[1] \mid \cstC \mid \cstCi \mid \cstD$, there exist $\WmvAi[1]$ and a sequence $\TxTiS\in \mall{\cmvOfcst{\cstD}}{\Adv}^*$ valid in $\sysSi[1] = \WmvAi[1] \mid \cstC \mid \cstCi \mid \cstD$ such that executing $\TxTiS$ in state $\sysSi[1]$ causes (to contracts in $\cstD$) a loss that is greater or equal to the one caused by executing $\TxTS$ in state $\sysS[1]$.
	We want to prove that $\richnonint{\cstC}{\cstD}$, \ie that for any $\WmvA[2]$ and any sequence $\TxYS\in \mall{}{\Adv}^*$ valid in $\sysS[2]= \WmvA[2] \mid \cstC \mid \cstD$, we can find $\WmvAi[2]$ and a sequence $\TxYiS \in \mall{\cmvOfcst{\cstD}}{\Adv}^*$ valid in $\sysSi[2]= \WmvAi[2] \mid \cstC \mid \cstD$, such that executing $\TxYiS$ in state $\sysSi[2]$ causes (to contracts in $\cstD$) a loss that is greater or equal to the one caused by executing $\TxYS$ in state $\sysS[2]$.

	Below, we will assume as given $\WmvA[2]$ and $\TxYS$, and we will construct $\WmvAi[2]$ and $\TxYiS$, proving our claim.
	Since $\TxYS$ is valid in $\sysS[2]$, it does not interact with contracts of $\cstCi$.
	For this reason, if we let $\WmvA[1] = \WmvA[2]$, then $\TxYS$ is valid in $\sysS[2]$, and  $\gain{\cmvOfcst{\cstD}}{\TxYS}{\sysS[1]} = \gain{\cmvOfcst{\cstD}}{\TxYS}{\sysS[2]}$.
	We can now set $\TxTS=\TxYS$. Thanks to our hypothesis, we can find $\WmvAi[1]$ and $\TxTiS \in \mall{\cmvOfcst{\cstD}}{\Adv}$ with $-\gain{\cmvOfcst{\cstD}}{\TxTiS}{\sysSi[1]} \geq  -\gain{\cmvOfcst{\cstD}}{\TxTS}{\sysS[1]}$.
	The sequence $\TxTiS$ is in $\mall{\cmvOfcst{\cstD}}{\Adv}$. Since $\deps{\cstD} \cap \cmvOfcst{\cstCi} = \emptyset$, the execution of $\TxTiS$ does not produce calls to contracts in $\cstCi$, so $\TxTiS$ will be valid in $\sysSi[2]$, where $\WmvAi[2]$ is set to be equal to $\WmvAi[1]$.
	But now, by setting $\TxYiS = \TxTiS$, we have proven our thesis, since
	\[
		-\gain{\cmvOfcst{\cstD}}{\sysSi[2]}{\TxYiS} = -\gain{\cmvOfcst{\cstD}}{\sysSi[1]}{\TxTiS} \geq -\gain{\cmvOfcst{\cstD}}{\sysS[1]}{\TxTS} = -\gain{\cmvOfcst{\cstD}}{\sysS[2]}{\TxYS}
	\]
	\qed	
	\ricnote{Nei commenti si trova una dimostrazione più semplice. Tuttavia fa uso del teorema 1, e quindi deve richiedere che alcuni contratti siano sender-agnostic (come nel teorema 4).
	Una dimostrazione ancora più semplice si può ottenere ricalcando il Lemma B2 e usando direttamente il teorema 4.
	}

\end{proof}
\begin{lemma}
	\label{lem:richnonint:mid-L-zero-mev}
	If $\richnonint{\cstC}{\cstD[1]}$,  $\rlmev{}{\cstC \mid \cstD[1]\mid \cstD[2]}{\cmvOfcst{\cstD[2]}} = 0$, and contracts in $\deps{\cstD[1]}$ are sender-agnostic, then then $\richnonint{\cstC}{\cstD[1]\mid \cstD[2]}$.
\end{lemma}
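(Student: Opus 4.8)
The plan is to reduce the statement to a single inequality and then establish it via a chain that routes the observed set $\cmvOfcst{\cstD[1]\mid\cstD[2]}$ back through $\cmvOfcst{\cstD[1]}$. Write $\CmvC[1] = \cmvOfcst{\cstD[1]}$, $\CmvC[2] = \cmvOfcst{\cstD[2]}$, and $\CmvC[12] = \CmvC[1] \cup \CmvC[2] = \cmvOfcst{\cstD[1]\mid\cstD[2]}$. Unfolding \Cref{def:rich-non-interference}, the goal is $\rlmev{}{\cstC \mid \cstD[1] \mid \cstD[2]}{\CmvC[12]} = \rlmev{\CmvC[12]}{\cstC \mid \cstD[1] \mid \cstD[2]}{\CmvC[12]}$. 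The inequality $\rlmev{\CmvC[12]}{\cdots}{\CmvC[12]} \leq \rlmev{}{\cdots}{\CmvC[12]}$ holds for free by \Cref{lem:rich-lmev:L-leq-H} of \Cref{lem:rich-lmev} (since $\CmvC[12]\subseteq\CmvU$), so only the opposite direction remains.

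For the core argument I would proceed as follows. First, losses are additive over the disjoint observed sets, so for every state $\sysS$ and trace $\TxTS$ we have $\gain{\CmvC[12]}{\sysS}{\TxTS} = \gain{\CmvC[1]}{\sysS}{\TxTS} + \gain{\CmvC[2]}{\sysS}{\TxTS}$. The hypothesis $\rlmev{}{\cstC \mid \cstD[1] \mid \cstD[2]}{\CmvC[2]} = 0$ forces $\gain{\CmvC[2]}{\WmvA \mid \cstC \mid \cstD[1] \mid \cstD[2]}{\TxTS} \geq 0$ for \emph{every} wallet $\WmvA$ and trace $\TxTS$, hence the loss of $\CmvC[12]$ never exceeds the loss of $\CmvC[1]$; taking maxima over wallets and unrestricted traces gives $\rlmev{}{\cstC \mid \cstD[1] \mid \cstD[2]}{\CmvC[12]} \leq \rlmev{}{\cstC \mid \cstD[1] \mid \cstD[2]}{\CmvC[1]}$. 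Next I would strip $\cstD[2]$: since $\CmvC[1]\subseteq\cmvOfcst{\cstC\mid\cstD[1]}$ and $\cstD[2]$ is deployed last, \Cref{th:rich-lmev:future-not-affect-past} (i.e.\ \Cref{th:rich-lmev:stripping} whose shared-dependency set is contained in $\deps{\CmvC[1]}=\deps{\cstD[1]}$, sender-agnostic by hypothesis) yields $\rlmev{}{\cstC \mid \cstD[1] \mid \cstD[2]}{\CmvC[1]} = \rlmev{}{\cstC \mid \cstD[1]}{\CmvC[1]}$. Applying $\richnonint{\cstC}{\cstD[1]}$ converts this into the restricted quantity $\rlmev{\CmvC[1]}{\cstC \mid \cstD[1]}{\CmvC[1]}$, which by state monotonicity (\Cref{lem:rich-lmev:monotonicity} of \Cref{lem:rich-lmev}) is at most $\rlmev{\CmvC[1]}{\cstC \mid \cstD[1] \mid \cstD[2]}{\CmvC[1]}$. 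Finally, any trace in $\mall{\CmvC[1]}{\Adv}^*$ also lies in $\mall{\CmvC[12]}{\Adv}^*$ and, being confined to $\CmvC[1]$ and its dependencies (disjoint from the later-deployed $\CmvC[2]$), leaves $\CmvC[2]$'s balance untouched; thus its loss on $\CmvC[12]$ equals its loss on $\CmvC[1]$, giving $\rlmev{\CmvC[1]}{\cstC \mid \cstD[1] \mid \cstD[2]}{\CmvC[1]} \leq \rlmev{\CmvC[12]}{\cstC \mid \cstD[1] \mid \cstD[2]}{\CmvC[12]}$. Concatenating the chain proves $\leq$, and together with the free $\geq$ we conclude $\richnonint{\cstC}{\cstD[1]\mid\cstD[2]}$.

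The main obstacle I anticipate is handling the fact that local MEV is \emph{not} monotone in the observed set: I cannot simply enlarge the observed set from $\CmvC[1]$ to $\CmvC[12]$ by a monotonicity lemma (the paper explicitly warns this fails). The way around it is to exploit the deployment order — $\CmvC[2]$ comes after $\CmvC[1]$, so $\CmvC[1]$-restricted attacks cannot move $\CmvC[2]$'s wallet, making the two observed losses coincide on such traces. The second delicate point, and where hypothesis~3 is genuinely used, is the stripping step that removes $\cstD[2]$ while observing $\CmvC[1]$: this is sound precisely because the dependencies of $\CmvC[1]$, namely $\deps{\cstD[1]}$, behave identically whether invoked directly by the adversary or indirectly through $\cstD[2]$, which is exactly sender-agnosticism of $\deps{\cstD[1]}$. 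A minor care point is that the wallet realising each $\rlmev{}{}{}$ can be chosen large enough uniformly via \Cref{lem:lmev:stability}, so that all the $\rlmev{}{}{}$ comparisons in the chain are legitimate.
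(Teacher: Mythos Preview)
Your argument is correct, and it takes a genuinely different route from the paper's proof.

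The paper proceeds operationally: it fixes a wallet $\WmvA[2]$ and an unrestricted attack $\TxYS$ on $\cstC\mid\cstD[1]\mid\cstD[2]$, then manufactures a restricted attack by (a) dropping to the loss on $\cmvOfcst{\cstD[1]}$ via the zero-MEV hypothesis, (b) rebuilding a sequence $\TxTS$ valid in $\cstC\mid\cstD[1]$ by the explicit subsequence-of-method-calls construction from the proof of \Cref{th:rich-lmev:stripping} (this is where sender-agnosticism of $\deps{\cstD[1]}$ enters), (c) invoking $\richnonint{\cstC}{\cstD[1]}$ to obtain a $\cmvOfcst{\cstD[1]}$-restricted witness, and (d) re-injecting that witness into the larger state, observing it does not touch $\cstD[2]$. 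All manipulations happen at the level of concrete traces and wallets.

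You instead stay at the level of $\rlmev{}{}{}$ values and chain five inequalities using already-proven lemmas: additivity of gain plus the zero-MEV hypothesis, the stripping \Cref{th:rich-lmev:stripping}, the non-interference hypothesis, state monotonicity \Cref{lem:rich-lmev:monotonicity}, and finally the deployment-order argument that $\CmvC[1]$-restricted traces leave $\CmvC[2]$ untouched. This is cleaner and more modular; in particular it isolates the single place where the third hypothesis is consumed (the stripping step) rather than replaying that construction inline. The paper's version, by contrast, is more self-contained and avoids having to check the side conditions of \Cref{th:rich-lmev:stripping} externally. Both arguments are sound and rest on exactly the same three ideas; yours just factors them through the existing infrastructure instead of redoing the witness construction by hand.
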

\begin{proof}
	The hypothesis $\richnonint{\cstC}{\cstD[1]}$ can be restated by saying that for any $\WmvA[1]$ and any sequence $\TxTS\in \mall{}{\Adv}^*$ valid in  $\sysS[1] = \WmvA[1] \mid \cstC \mid \cstD[1]$, there exist  $\WmvAi[1]$ and a sequence $\TxTiS\in \mall{\cmvOfcst{\cstD}}{\Adv}^*$ valid in $\sysSi[1] = \WmvAi[1] \mid \cstC \mid \cstD[1]$ such that executing $\TxTiS$ in state $\sysSi[1]$ causes (to contracts in $\cstD[1]$) a loss that is greater or equal to the one caused by executing $\TxTS$ in state $\sysS[1]$.
	We can also restate the thesis $\richnonint{\cstC}{\cstD[1] \mid \cstD[2]}$ in the following way: for any $\WmvA[2]$ and any sequence $\TxYS\in \mall{}{\Adv}^*$ valid in $\sysS[2]= \WmvA[2] \mid \cstC \mid \cstD[1]\mid \cstD[2]$, there exist $\WmvAi[2]$ and a sequence $\TxYiS \in \mall{\cmvOfcst{(\cstD[1]\mid \cstD[2])}}{\Adv}^*$ valid in $\sysSi[2]= \WmvAi[2] \mid \cstC \mid \cstD[1]\mid \cstD[2]$, such that executing $\TxYiS$ in state $\sysSi[2]$ causes (to contracts in $\cstD[1]\mid \cstD[2]$) a loss that is greater or equal to the one caused by executing $\TxYS$ in state $\sysS[2]$.
	
	So, we will take $\WmvA[2]$ and $\TxYS$, and construct $\WmvAi[2]$ and $\TxYiS$ to prove our claim.
	Since contracts in $\cstD[2]$ have MEV equal to 0, we must have $\gain{\cmvOfcst{\cstD[2]}}{\sysS[2]}{\TxYS} \geq 0$, and so
	\[
		-\gain{\cmvOfcst{(\cstD[1] \mid \cstD[2])}}{\sysS[2]}{\TxYS} =  		-\gain{\cmvOfcst{\cstD[1]}}{\sysS[2]}{\TxYS} -	\gain{\cmvOfcst{\cstD[2]}}{\sysS[2]}{\TxYS} \leq -\gain{\cmvOfcst{\cstD[1]}}{\sysS[2]}{\TxYS}.
	\]
	Now, we construct $\WmvA[1]$ and $\TxTS\in \mall{}{\Adv}^*$. The following construction resembles closely the one in \Cref{th:rich-lmev:stripping}.
	We obtain $\WmvA[1]$ by adding an amount of tokens equal to all the ones that have been transferred during the execution of $\TxYS$ in state $\sysS[2]$ to any wallet of $\WmvA[2]$ that was the origin of some transaction in $\TxYS$.
	To construct $\TxTS$ we look at the sequence of method calls that are performed during the execution of $\TxYS$, and we construct a subsequence by taking only the ones that are: 
	\begin{enumerate*}[$(a)$]
		\item directly performed by a transaction of $\TxYS$ calling a method in $\deps{\cstD}$; or
		\item caused by an internal call in which a method of a contract not in $\deps{\cstD[1]}$ calls a method of a contract in $\deps{\cstD[1]}$.
	\end{enumerate*}
	Then, $\TxTS$ consists of transactions that directly perform the calls in the subsequence, keeping the same arguments and origin.
	Notice that, due to $\cstD[2]$ being composed after $\cstD[1]$, no transaction in $\TxTS$ target a contract of $\cstD[2]$. This means that $\TxTS$ is valid in $\sysS[1] = \WmvA[1]\mid \cstC \mid \cstD[1]$.
	Moreover, since contracts in $\deps{\cstD[1]}$ are sender-agnostic, both $\TxTS$ and $\TxYS$ modify the state of contracts in $\deps{\cstD[1]}$ in the same way, so we have
	\[
		- \gain{\cmvOfcst{\cstD[1]}}{\sysS[1]}{\TxTS} = -\gain{\cmvOfcst{\cstD[1]}}{\sysS[2]}{\TxYS} \geq - \gain{\cmvOfcst{(\cstD[1] \mid \cstD[2])}}{\sysS[2]}{\TxYS}.
	\]
	Now, thanks to the hypothesis $\richnonint{\cstC}{\cstD[1]}$, we know that there are $\WmvAi[1]$ and $\TxTiS \in \mall{\cstD[1]}{\Adv}^*$ valid in $\sysSi[1]$ such that $-\gain{\cmvOfcst{\cstD[1]}}{\sysSi[1]}{\TxTiS}\geq -\gain{\cmvOfcst{\cstD[1]}}{\sysS[1]}{\TxTS}$.

	We can finally conclude the proof by setting $\WmvAi[2] = \WmvAi[1]$ and $\TxYiS = \TxTiS$.
	We note that $\TxTiS$ does not target contracts in $\cstD[2]$ (this is because $\TxTiS$ is valid in $\sysSi[1]$, which does not include contain contracts in $\cstD[2]$) so we must have $\gain{\cmvOfcst{\cstD[2]}}{\sysSi[2]}{\TxYiS} = 0$.
	By combining all the inequalities we have
	\begin{align*}
		-\gain{\cmvOfcst{(\cstD[1] \mid \cstD[2])}}{\sysSi[2]}{\TxYiS}
		= -\gain{\cmvOfcst{\cstD[1]}}{\sysSi[2]}{\TxYiS}
		&=-\gain{\cmvOfcst{\cstD[1]}}{\sysSi[1]}{\TxTiS} 
		\\
		&\geq -\gain{\cmvOfcst{\cstD[1]}}{\sysS[1]}{\TxTS} 
		\geq -\gain{\cmvOfcst{(\cstD[1] \mid \cstD[2])}}{\sysS[2]}{\TxYS}
	\end{align*}
	Which proves our thesis. \qed
\end{proof}

\begin{example}
	\label{ex:richnonint:mid-R}
	\ricnote{volendo si può vedere come caso particolare di B.8}
	$\richnonint{\cstC}{\cstD}$ does not imply $\richnonint{\cstC}{\cstD \mid \cstDi}$.
	This can be simply seen when $\cstD = \emptyset$ (which is trivially non-interfering with any $\cstC$), and  $\cstDi$ is any contract that interferes with $\cstC$.
	For instance, let
	\[
		\cstC = \walpmv{\contract{X}}{\waltok{0}{\tokT}, \code{x}=0} \quad \cstD = \emptyset \quad \cstDi = \walpmv{\contract{C}}{\waltok{1}{\tokT}}
	\]
	where $\contract{X}$ and $\contract{C}$ are defined in \Cref{fig:richnonint:monotonicity}.
	We have that 
	\begin{align*}
		\rlmev{\cmvOfcst{\cstD}}{\cstC \mid \cstD }{\cmvOfcst{\cstD}} &= \rlmev{\emptyset}{\cstC \mid \cstD }{\emptyset} = 0
		\\
		\rlmev{}{\cstC \mid \cstD }{\cmvOfcst{\cstD}} &=  \rlmev{}{\cstC \mid \cstD }{\emptyset} = 0	\end{align*}
	so $\richnonint{\cstC}{\cstD}$. 
	To study the non-interference between $\cstC$ and $\cstD \mid \cstDi$, we note that the \emph{restricted} local MEV of $\contract{C}$ is 0: indeed, if the adversary cannot call methods of $\contract{X}$, then they have no way to change the value of the variable $\code{x}$, and the method $\contract{C}.\txcode{f}()$ will never be enabled. On the other hand, if the adversary has access to $\contract{X}$, she can perform the following sequence of transactions: $\pmvM: \contract{X}.\txcode{set}(1)\ \pmvM: \contract{C}.\txcode{f}()$, extracting $\waltok{1}{\tokT}$ from $\contract{C}$. 
	Therefore, $\negrichnonint{\cstC}{\cstD \mid \cstDi}$.
	Note that since the methods of $\cstDi$ do not depend on $\cstD$, we can change their order while preserving the well formedness of the state, so $\negrichnonint{\cstC}{\cstDi \mid \cstD}$. 
\end{example}
	
	\ricnote{Se $\cstC$ viene composto prima di $\cstD$, può un contratto $\cmvC$ in $\cstC$ fare riferimento (hardcoded, non chiamata ai metodi) ad un contratto $\cmvD$ in $\cstD$? Se sì la seguente congettura è falsa, e richiede di aggiungere l'ipotesi di sender-agnostic in C.}

	Se $\deps{\cstDi} \cap \cmvOfcst{\cstC}$ sono sender-agnostic
	\[\richnonint{\cstC}{\cstD} \text{ and } \rlmev{}{\cstC \mid \cstD \mid \cstDi}{\cstDi} = 0 \text{ implies } \richnonint{\cstC }{\cstD \mid\cstDi}\]
 
\begin{example}
	\label{ex:richnonint:mid-L}
	$\richnonint{\cstC}{\cstD \mid \cstDi}$ does not imply  $\richnonint{\cstC}{\cstD}$.
	For instance, let
	\[
		\cstC = \walpmv{\contract{X}}{\waltok{0}{\tokT}, \code{x}=0} \quad \cstD = \walpmv{\contract{C}}{\waltok{1}{\tokT}} \quad \cstDi = \walpmv{\contract{ForwardX}}{\waltok{0}{\tokT}} 
	\]
	using the contracts in \Cref{fig:richnonint:monotonicity}.
	From the previous example we know that $\negrichnonint{\cstC}{\cstD}$. However, $\richnonint{\cstC}{\cstD \mid \cstDi}$. Indeed, it does not matter if the adversary has direct access to $\contract{X}$, since it can always call $\contract{ForwardX}$ to maximize the loss of $\cstD \mid \cstDi$. 
	More specifically, the sequence of transactions $\pmvM : \contract{ForwardX}.\txcode{set\_x(1)} \ \pmvM: \cmvC.\txcode{f}()$ causes a loss of $\waltok{1}{\tokT}$ in $\cmvC$, so
	\[
		\rlmev{\cmvOfcst{(\cstD \mid \cstDi)}}{\cstC \mid \cstD \mid \cstDi }{\cmvOfcst{(\cstD \mid \cstDi)}} = 1.
	\]
	The unrestricted MEV is also equal to 1 (it cannot be greater, since $\wealth{}{\cstD \mid \cstDi} = 1$), so we have 
	\[
		\rlmev{\cmvOfcst{(\cstD \mid \cstDi)}}{\cstC \mid \cstD \mid \cstDi }{\cmvOfcst{(\cstD \mid \cstDi)}} = 1 = \rlmev{}{\cstC \mid \cstD \mid \cstDi }{\cmvOfcst{(\cstD \mid \cstDi)}}
	\]
	and $\richnonint{\cstC}{ \cstD \mid \cstDi}$.
	Since there are no calls from $\cstDi$ to $\cstD$, we can also change their order while preserving the well formedness of the state, and conclude that
	$\richnonint{\cstC}{ \cstDi \mid \cstD}$.
\end{example}

\begin{figure}[t]
	\begin{lstlisting}[language=txscript,morekeywords={f,get,set,get\_x, set\_x},classoffset=4,morekeywords={a,A,Oracle},keywordstyle=\pmvColor,classoffset=5,morekeywords={t,T,T0,T1,T2,ETH},keywordstyle=\tokColor,classoffset=6,morekeywords={Var,C,X,ForwardX},keywordstyle=\cmvColor,frame=single]
contract X { 
    constructor { x=0 }
    get() { return x }
    set(y) { x = y }
}
contract C{ 
    f(){ require X.get()==1; sender!1:T }
}
contract ForwardX{
    get_x() { return X.get() }
    set_x(y) { X.set(y) }
}
\end{lstlisting}
	\negcaptionspace  \caption{Contracts for examples in \Cref{ex:richnonint:mid-R} and \Cref{ex:richnonint:mid-L}}
	\label{fig:richnonint:monotonicity}
\end{figure}

\begin{figure}[t]
  \begin{lstlisting}[language=txscript,morekeywords={get,set,drop2,drop3},classoffset=4,morekeywords={a,A,Oracle},keywordstyle=\pmvColor,classoffset=5,morekeywords={ETH},keywordstyle=\tokColor,classoffset=6,morekeywords={Var,Drop},keywordstyle=\cmvColor,frame=single]
contract Var {
  constructor { x=0 }
  get()  { return x }   
  set(y) { if x=0 then x=y }
}
contract Drop {
  constructor { b=0 }
  drop2() { require b==0 && Var.get()==1;
            b=1; sender!2:T }
  drop3() { require b==0 && Var.get()==0;
            b=1; c.set(2); sender!3:T }
}
  \end{lstlisting}
  \negcaptionspace  
  \caption{Contracts for the counterexample in~\Cref{ex:richnonint:union}.}
  \label{fig:richnonint:union}
\end{figure}

\begin{example}
  \label{ex:richnonint:union}
  $\richnonint{\cstC}{}{\cstD[1]}$ and
  $\richnonint{\cstC}{}{\cstD[2]}$
  does not imply
  $\richnonint{\cstC}{}{\cstD[1] \mid \cstD[2]}$.
  Let:
  \[
  \cstC =
  \walpmv{\contract{Var}}{\waltok{0}{\tokT},\code{x}=0}
  \quad
  \cstD[1] =
  \walpmv{\contract{Drop1}}{\waltok{3}{\tokT},\code{b}=0}
  \quad
  \cstD[2] =
  \walpmv{\contract{Drop2}}{\waltok{3}{\tokT},\code{b}=0}
  \]
  using the contracts in~\Cref{fig:richnonint:union}.
  The wealthy adversaries' local MEV of both
  $\contract{Drop1}$ and $\contract{Drop2}$ in $\cstC$ is $3$,
  obtained by calling $\txcode{drop3}$ exactly once.
  This holds both for the unrestricted and restricted MEV,
  hence $\richnonint{\cstC}{}{\cstD[1]}$ and $\richnonint{\cstC}{}{\cstD[2]}$.
  To study the non-interference between $\cstC$ and $\cstD[1] \mid \cstD[2]$,
  note that the \emph{unrestricted} local MEV of
  $\setenum{\contract{Drop1},\contract{Drop2}}$ is $4$:
  this is obtained by the sequence
  \(
  \pmvM:\contract{Var}.\txcode{set}(1) \
  \pmvM:\contract{Drop1}.\txcode{drop2}() \
  \pmvM:\contract{Drop2}.\txcode{drop2}()
  \).
  Instead the \emph{restricted} local MEV of
  $\setenum{\contract{Drop1},\contract{Drop2}}$ is $3$,
  since calling $\contract{Var}$ is forbidden.
  Indeed, $\txcode{drop3}$ can only be called on one of the two contracts.
  Therefore, $\negrichnonint{\cstC}{\cstD[1] \mid \cstD[2]}$.
  \hfill\qedex
\end{example}

\paragraph{Weakening the well-formedness assumptions}
In \Cref{sec:blockchain} we have assumed that contract dependencies are statically determined and that blockchain states always contain all the dependencies of their contracts. 
In this paragraph we provide some arguments that show how these assumptions can be lifted. 
First, note that our definitions of local MEV and of $\rlmev{}{}{}$ need not to be modified: in an ill-formed state we can model every call to a non deployed contract as an abort, and calculate the MEV accordingly.
Similarly, the statements in \Cref{lem:lmev,lem:lmev-wallet,lem:lmev:stability} do not need to be modified to account for ill-formed states.

Note that in ill-formed states, the relation $\sqsubseteq$ may fail to be a partial order, or may even be non statically definable. For \Cref{th:rich-lmev:stripping,th:richnonint:stripping}, we can relax the well-formedness assumption so that $\sqsubseteq$ is statically defined on $\cstD$ and its dependencies, and that it partially orders them. Moreover, all these dependencies must be present in the state.
Indeed, the core of the proof of \Cref{th:rich-lmev:stripping} involves removing all calls to methods that are not in $\deps{\cstD}$ from a generic sequence of transactions. This can be done even when the removed contracts are ill-formed. 
\Cref{lem:richnonint:mid-L,lem:richnonint:erasure,lem:richnonint:mid-L-zero-mev} are handled in a similar way.

Lastly, we consider \Cref{th:nonint:sufficient-conditions,th:richnonint:sufficient-conditions}.
Here requesting well-formedness on all contract states $\cstC \mid \cstD$ seems reasonable, since in practice these theorems can be used to guarantee MEV non-interference on a ``small" state, while a later application of \Cref{th:richnonint:stripping} allows us to prove MEV non-interference in a larger state.
In any case, we can still weaken the well-formedness assumptions, by adapting the definition of contract independence.
If $\cstC$ is an ill-formed contract state, $\sqsubseteq$ may fail to be not statically definable, \eg when a contract in $\cstC$ calls another contract whose address is passed as parameter. In this case, we can still define  $\deps{\cstC}$, by over-approximating the set of actually callable contracts from $\cstC$.
Using this definition we can still talk about contract independence, and adapt the proof of the theorems.

}
{}

\end{document}